\def\Fin{\mathrm{Var}}
\def\N{\mathbb{N}}
\def\None{{\N_1}}
\def\Ncat{{\cal N}}
\def\Ncatop{{\cal N}^{\mathbf{op}}}
\def\Id{\I}
\def\I{\mathrm{I}}
\def\Iext{\I_\mathrm{ext}}
\def\T{{\cal T}}
\def\B{{\cal B}}
\def\Tlccc{\T^{\Iext,\None,\Sigma,\Pi,o}}
\def\C{{\cal C}}
\def\CF{{\bf C}}
\def\LF{{\bf L}}
\def\Cobj{{\cal C}_0}
\def\D{{\cal D}}
\def\Y{{\mathrm{Y}}}
\def\id{\mathrm{id}}
\def\op{\mathrm{op}}
\def\Cop{\C^\op}
\def\Set{\mathrm{Set}}
\def\Fam{\mathrm{Fam}}
\def\Ty{\mathrm{Ty}}
\def\Tycat{{\cal T}\!\mathit{y}}
\def\Tm{\mathrm{Tm}}
\def\cext{\cdot}
\def\s{\mathrm{s}}
\def\p{\mathrm{p}}
\def\q{\mathrm{q}}
\def\qI{\mathrm{q}}
\def\arrow{\Rightarrow}
\def\ap{\mathrm{ap}}
\def\var{\mathrm{var}}
\def\ev{\mathrm{\varepsilon}}
\def\fst{\mathrm{fst}}
\def\snd{\mathrm{snd}}
\def\refl{\mathrm{r}}
\def\d{\mathrm{d}}
\def\Cwf{\mathbf{Cwf}}
\def\dem{\mathrm{dem}}
\def\ctx{\mathrm{ctx}}
\def\Ucwf{\mathbf{Ucwf}}
\def\Law{\mathbf{Law}}
\def\Cartop{\mathbf{Cop}}
\def\Scwf{\mathbf{Scwf}}
\def\FL{\mathbf{FL}}
\def\LCC{\mathbf{LCC}}
\def\List{\mathrm{List}}
\newcommand{\RawCtx}{{\tt Ctx}}
\newcommand{\RawSub}{{\tt Sub}}
\newcommand{\RawTy}{{\tt Ty}}
\newcommand{\RawTm}{{\tt Tm}}
\def\vdashS{\vdash}
\def\emptySub{\langle \rangle}
\def\emptyContext{1}
\definecolor{Red}{rgb}{1,0,0}
\newcommand{\tuple}[1]{\langle #1 \rangle}
\newcommand{\natto}{\Rightarrow}
\newcommand{\intr}[1]{\llbracket #1 \rrbracket}
\def\CCs{\mathbf{CCs}}
\def\CCp{\mathbf{CCp}}
\def\CCCs{\mathbf{CCCs}}
\newcommand{\tto}{\Rightarrow}
\def\pb#1{\save[]+<16 pt,0 pt>:a(#1)\ar@{pb{}}[]\restore}
\newcommand{\Inv}{\mathsf{Inv}}
\begin{document}

\title*{Categories with Families: Unityped, Simply Typed, and Dependently Typed}
\author{Simon Castellan, Pierre Clairambault, and Peter Dybjer}
\institute{Simon Castellan \at Imperial College London, London SW7 2AZ, England\\ \email{simon.castellan@phis.me}
\and Pierre Clairambault \at CNRS, ENS de Lyon,
46 allée d'Italie,
69364 Lyon, France\\ \email{pierre.clairambault@ens-lyon.fr}
\and Peter Dybjer \at Chalmers University of Technology, SE-412 96 Göteborg, Sweden. \\\email{peterd@chalmers.se}}
%
%
\maketitle

\abstract*{Each chapter should be preceded by an abstract (no more than 200 words) that summarizes the content. The abstract will appear \textit{online} at \url{www.SpringerLink.com} and be available with unrestricted access. This allows unregistered users to read the abstract as a teaser for the complete chapter.\newline\indent
Please use the 'starred' version of the \texttt{abstract} command for typesetting the text of the online abstracts (cf. source file of this chapter template \texttt{abstract}) and include them with the source files of your manuscript. Use the plain \texttt{abstract} command if the abstract is also to appear in the printed version of the book.}

\abstract{We show how the categorical logic of untyped, simply typed and dependently typed lambda calculus 
can be structured around the notion of category with family (cwf). To this end we introduce subcategories of simply typed cwfs (scwfs), where types do not depend on variables, and unityped cwfs (ucwfs), where there is only one type. We prove several equivalence and biequivalence theorems between cwf-based notions and basic notions of categorical logic, such as cartesian operads, Lawvere theories, categories with finite products and limits, cartesian closed categories, and locally cartesian closed categories. Some of these theorems depend on the restrictions of contextuality (in the sense of Cartmell) or democracy (used by Clairambault and Dybjer for their biequivalence theorems). Some theorems are equivalences between notions with strict preservation of chosen structure. Others are biequivalences between notions where properties are only preserved up to isomorphism. In addition to this we discuss various constructions of initial ucwfs, scwfs, and cwfs with extra structure.}

\section{Introduction}

An important part of categorical logic is to establish correspondences between
languages (from logic) and categorical models. For example, in their book
\emph{Introduction to higher order categorical logic} \cite{LS86} Lambek and Scott prove
equivalences between typed lambda calculi and cartesian closed categories, between
untyped lambda calculi and C-monoids, and between intuitionistic type theories and
toposes. Lambek and Scott's intuitionistic type theories are intuitionistic versions
of Church's simple theory of types, which should not be confused with Martin-L\"of's
intuitionistic type theories. Interestingly, in the preface of their book \cite[p
viii]{LS86} Lambek and Scott express a desire to include a result concerning the
latter too:
\begin{quotation}
We also claim that intuitionistic type theories and toposes are closely related, in
as much as there is a pair of adjoint functors between their respective categories.
This is worked out out in Part II. The relationship between Martin-L\"of type
theories and locally cartesian closed categories was established too recently (by
Robert Seely) to be treated here.
\end{quotation}
Seely's seminal paper \cite{SeelyRAG:locccc} claims to prove that a category of Martin-L\"of type theories is equivalent to a category of locally cartesian closed categories (lcccs). However, his result relies on an interpretation of substitution as pullback, and the latter are only defined up to isomorphism. It is not clear how to choose pullbacks in such a way that the strict laws for substitution are satisfied. This coherence problem is identified and solved by Curien \cite{curien:fi} 
and Hofmann \cite{hofmann:csl} 
who provide alternative methods for interpreting Martin-L\"of type theory in lcccs (see also \cite{CurienGH14}). 
By using Hofmann's interpretation Clairambault and Dybjer \cite{ClairambaultD11,ClairambaultD14} show that there is an actual biequivalence of 2-categories. 

In this paper we ask ourselves what it would take to add the missing chapter on Martin-L\"of type theory and its correspondence with lcccs to the book by Lambek and Scott. 

First we would need to add some material to Part 0 in the book on ``Introduction to category theory", including introductions to lcccs, 2-categories, bicategories, pseudofunctors, and biequivalences. But more profoundly, our biequivalence theorem differs from Lambek and Scott's (and Seely's) equivalence theorems in important respects, since we replace Seely's category of Martin-L\"of theories by a 2-category of categories with families (cwfs), with extra structure for type formers $\Iext, \Sigma, \Pi$, and pseudo cwf-morphisms which preserve the structure up to isomorphism. Thus cwfs with extra structure replace Martin-L\"of theories on the ``syntactic" side of the biequivalence. 

This style of presenting the correspondence between ``syntax" and ``semantics" for Martin-L\"of's dependent type theory applies equally well to the simply typed lambda calculus and the untyped lambda calculus, provided we consider subcategories of simply typed cwfs (scwfs), where types do not depend on variables, and of unityped cwfs (ucwfs), where there is only one type. As for full cwfs we need to provide extra structure for modelling $\lambda$-abstraction and application in the untyped $\lambda$-calculus and for modelling type formers in the simply typed $\lambda$-calculus.

This suggests that we ought to rewrite Lambek and Scott's Part I ``Cartesian closed categories and $\lambda$-calculus" in a way which harmonizes with our presentation of the biequivalence between Martin-L\"of type theory and lcccs.

Categories wifh families (cwfs) model the most basic rules of dependent type theory,
those which deal with substitution, context formation, assumption, and general
reasoning about equality. A key feature of cwfs is that the definition can be
unfolded to yield a generalized algebraic theory in Cartmell's sense
\cite{cartmell:apal}. As such it suggests a language of cwf-combinators which can be
used for the construction of initial cwfs (with extra structure for modelling type
formers). 

We prove several correspondence theorems between ``syntax" in the guise of a number of cwf-based notions and ``semantics" in the guise of some basic notions from category theory. Some of our theorems require ``contextuality", a notion introduced by Cartmell \cite{cartmell:apal} for his contextual categories. Others require ``democracy", a notion introduced by Clairambault and Dybjer for their biequivalence theorems. Moreover, our equivalence theorems require strict preservation of chosen cwf-structure, while our biequivalence theorems only require preservation of cwf-structure up to isomorphism. 
In this way we can relate a number of notions from categorical logic such as cartesian operads, Lawvere's algebraic theories, Obtu\l owicz' algebraic theories of type $\lambda$-$\beta\eta$ \cite{Obtulowicz77}, categories with finite products and limits, cccs, and lcccs, to the corresponding cwf-based notions. In addition to this we discuss different constructions of initial ucwfs, scwfs and cwfs (with extra structure) with and without explicit substitutions. 

The purpose of our work is not so much to prove new results, but to suggest a new way
to organize basic correspondence theorems in categorical logic, where the
ucwf-scwf-cwf-sequence provides a smooth progression of the categorical model theory
of untyped, simply typed, and dependently typed $\lambda$-calculi. We will also
highlight some of the subtleties which arise when relating syntactic and semantic
notions. Another important feature is that the correspondences between logical
theories and categorical notions are now split into two phases: (i) equivalences and
biequivalences between cwf-based notions and basic categorical notions, and (ii) the
constructions of initial cwf-based notions. This yields an ``abstract syntax''
perspective of formal systems, where specific formalisms for untyped, simply typed
and dependently typed $\lambda$-calculi are instances of the respective isomorphism
classes of initial cwf-based notions. This is particularly important for dependent
types and Martin-L\"of type theory, since different authors make different choices in
the exact formulation of the syntax and inference rules. Being initial in the
appropriate category of cwfs is a suitable correctness criterion for these
formulations. 

In the text we have only discussed the relationship to some of the most
important related notions in the literature. We would like to emphasise that we
have not at all tried to give a comprehensive account of related work. Such an
account would be a daunting task, since we would need to cover a multitude of
works on models of the untyped and simply typed lambda calculus and of
dependent type theory. Nevertheless, we would like to refer to
Jacobs' book \emph{Categorical Logic and Type Theory}
\cite{jacobs1999categorical} which provides a comprehensive account of
categorical type theory as fibred category theory. In particular, Chapter 2 on
Simple Type Theory and Chapter 10 on First Order Dependent Type Theory contain
much related material. In Jacobs' work the notion of fibration takes centre
stage while its role is only implicit in our cwf-based approach. In any cwf we
can define an indexed category (and hence a fibration) of types indexed by contexts. However, an account of the precise correspondence between the two approaches is outside the scope of this paper.

\subsection*{Plan of the paper} 

In Section 2 we introduce cwfs and explain their connection to Martin-L\"of type
theory. We also define contextual and democratic cwfs. In Section 3 we consider
unityped cwfs and show that contextual ucwfs are equivalent both to cartesian operads
and to Lawvere theories. We then add extra structure to model the untyped
$\lambda\beta\eta$-calculus, and show how initial ucwfs can be built both as calculi
of explicit and implicit substitutions. In Section 4 we consider simply typed cwfs.
We first show the equivalence between contextual scwfs with finite product types and
categories with finite products as structure. Then we show the biequivalence between
democratic scwfs and categories with finite products as property.  Moreover, we add
function types to our contextual scwfs with finite product types and show their
equivalence to cartesian closed categories as structure. This is our analogue of
the equivalence between simply typed $\lambda$-calculi and cartesian closed
categories in Lambek and Scott. In Section 5 we discuss alternative definitions of
full dependently typed cwfs. We also show an explicit construction of a free cwf.
Moreover, we present two biequivalences. The first is between categories with finite limits
and democratic cwfs with extensional identity types and $\Sigma$-types. The other is
between lcccs and democratic cwfs with extensional identity types, $\Sigma$-types and
$\Pi$-types. Finally, we outline the construction of a free lccc, and show how we can use
our biequivalence theorem to prove that equality in this lccc is undecidable. In this
section we only give an overview of the theorems and refer the reader to the
journal articles \cite{ClairambaultD14,castellan:lmcs} for detailed proofs.

%


\section{Dependent type theory and categories with families}

We recall the general structure of judgments and inference rules of Martin-L\"of type
theory and explain its connection to the definition of cwfs.

\subsection{Martin-L\"of type theory}

We here consider Martin-L\"of type theory with extensional identity types in
the style of \cite{martinlof:hannover,martinlof:padova}. 

\subsubsection{Judgments}

In Martin-L\"of \cite{martinlof:hannover} intuitionistic type
theory is presented as a formal system with four forms of judgment: 
\begin{eqnarray*}
\Gamma &\vdash& A\ \mathtt{ type}\\
\Gamma &\vdash& A = A'\\
\Gamma &\vdash& a : A \\
\Gamma &\vdash& a = a' : A 
\end{eqnarray*}
These respectively state that $A$ is a well-formed type; that $A$ and
$A'$ are equal types; that $a$ is a valid term of type $A$; and that $a$ and
$a'$ are equal terms of type $A$. 
These four forms of judgments are hypothetical, that is, relative to a context $\Gamma
= x_1 : A_1, \ldots, x_n : A_n$ which assigns types $A_i$ to free variables $x_i$.  

Martin-L\"of \cite{martinlof:gbg92} presents an alternative version of the
theory in the form of a {\em substitution calculus} (see also
\cite{tasistro:lic}) with four additional forms of judgment:
\begin{eqnarray*}
&&\Gamma\ \mathtt{ context}\\
&&\Gamma = \Gamma'\\
&&\Delta \to \gamma : \Gamma \\
&&\Delta \to \gamma = \gamma' : \Gamma
\end{eqnarray*}
One aim is to make the rules for context formation explicit. Another is to
formulate a calculus where substitution is a first-class citizen (a
term constructor) and not just an operation defined by induction on the types
and terms of the theory. 
Given $\Gamma = x_1: A_1, \dots, x_n: A_n$, the judgment $\Delta \to
\gamma : \Gamma$ expresses that $\gamma$ is a substitution (an assignment) of
terms for free variables $x_1 = a_1, \ldots, x_n = a_n$, where $a_1 : A_1, \ldots, a_n : A_n$ are terms in context
$\Delta$. Substitutions can be applied to both terms and types, \emph{e.g.}, if
$\Delta \to \gamma : \Gamma$ and $\Gamma \vdash A\ \mathtt{ type}$ and $\Gamma
\vdash a : A$, then $\Delta \vdash A[\gamma]\ \mathtt{ type}$ and $\Delta \vdash a[\gamma] :
A[\gamma]$. But rather than defining $a[\gamma]$ and $A[\gamma]$ (which simultaneously substitute terms for free variables) by induction, they are
instead explicit term constructors, and the effect of replacing a variable $x_i$
by a term $a_i$ is expressed \emph{a posteriori} via judgmental equality.

\subsubsection{Inference rules}\label{subsubsec:inferencerules}

The inference rules of intuitionistic type theory can be separated into two
kinds. 
The first kind are the general rules, the most basic rules for reasoning
with dependent types. They deal with \emph{substitution}, \emph{context
formation}, \emph{assumption}, and \emph{general equality reasoning}. They form
the backbone of the dependently typed structure, and carry no information yet about specific
term and type formers.
The second kind consists of the rules for type formers,
such as $\Pi, \Sigma,$ and identity types. These rules are divided into \emph{formation}, \emph{introduction}, \emph{elimination}, and
\emph{equality rules} (also called \emph{computation rules}).

\emph{Categories with families} capture models of the
first kind of rules: the backbone of Martin-L\"of type theory
which is independent of type and term constructors.

\subsection{Categories with families}
 

We first give the definition and then explain the connection to type theory.

\subsubsection{Definition}
\label{subsubsec:def_cwf}
The definition uses the category $\Fam$ of \emph{families
of sets}. Its objects are families $(U_x)_{x\in X}$. A morphism
with source $(U_x)_{x\in X}$ and target $(V_y)_{y\in Y}$ is a pair consisting
of a reindexing function $f : X \to Y$, and a family $(g_x)_{x\in X}$ where for
each $x \in X$, $g_x : U_x \to V_{f(x)}$ is a function.


\begin{definition}\label{def:cwf}
A category with families (cwf) consists of the following:
\begin{itemize} 
\item A category $\C$ with a terminal object $1$.

\emph{Notation and terminology.} We use $\Gamma, \Delta,$ \emph{etc}, to range over objects of $\C$, and refer to
those as {\em contexts}. Likewise, we use $\delta, \gamma,$ \emph{etc}, to range
over morphisms, and refer to those as {\em substitutions}. We refer to $1$ as the
\emph{empty context}. We write $\tuple{}_\Gamma \in \C(\Gamma, 1)$ for the
terminal map, representing the {\em empty substitution}.
\item A $\Fam$-valued presheaf, \emph{i.e.} a functor
$T : \Cop \to \Fam$.

\emph{Notation and terminology.} If $T(\Gamma) = (U_x)_{x\in X}$, we write $X = \Ty(\Gamma)$ and refer to
its elements as \emph{types in context $\Gamma$} -- we use $A, B, C$ to range over
such {\em types}. For $A \in X = \Ty(\Gamma)$, we write $U_A = \Tm(\Gamma, A)$
and refer to its elements as \emph{terms of type $A$ in context $\Gamma$.} 
Finally, for $\gamma : \Delta \to \Gamma$, the functorial action yields
\[
T(\gamma) : (\Tm(\Gamma, A))_{A\in \Ty(\Gamma)} \to 
		(\Tm(\Delta, B))_{B\in \Ty(\Delta)}
\]
consisting of a pair of a reindexing function $\_\,[\gamma] : \Ty(\Gamma) \to
\Ty(\Delta)$ referred to as \emph{substitution in types}, and for each $A\in
\Ty(\Gamma)$ a function $\_\,[\gamma] : \Tm(\Gamma, A) \to \Tm(\Delta,
A[\gamma])$ referred to as \emph{substitution in terms}.
\item A \emph{context comprehension operation} which to a given context $\Gamma
\in \Cobj$ and type $A \in \Ty(\Gamma)$ assigns a context $\Gamma \cext
A$ and two projections
\[
\p_{\Gamma, A} : \Gamma \cext A \to \Gamma
\qquad\qquad
\q_{\Gamma, A} \in \Tm(\Gamma\cext A, A[\p_{\Gamma,A}])
\] 
satisfying the following universal property: for all $\gamma : \Delta \to
\Gamma$, for all $a\in \Tm(\Delta, A[\gamma])$, there is a unique
$\tuple{\gamma, a} : \Delta \to \Gamma \cext A$ such that
\[
\p_{\Gamma, A} \circ \tuple{\gamma, a} = \gamma
\qquad \qquad
\q_{\Gamma, A} [\tuple{\gamma, a}] = a\,.
\]
%
We say that $(\Gamma\cext A, \p_{\Gamma, A}, \q_{\Gamma, A})$ is a
\emph{context comprehension} of $\Gamma$ and $A$.
\end{itemize}
\end{definition}

Observe the similarity between the universal properties of context comprehension and cartesian products -- the former is a skewed dependently typed version of the latter. It is also closely related to Lawvere comprehension \cite{lawvere:hyperdoctrines}. 

This definition is the standard, historical definition of cwfs
\cite{dybjer:torino}. As notations and terminology suggest, it is closely
connected to the syntax of type theory, and particularly to Martin-L\"of's
\emph{substitution calculus}.

\begin{remark}
The structure from Definition \ref{def:cwf} exactly matches that of
Martin-L\"of's \emph{substitution calculus} mentioned before. The correspondence
follows:
\begin{itemize}
\item $\Gamma \in \Cobj$ models the judgment $\Gamma\ \mathtt{ context}$ and
$\Gamma = \Gamma' \in \Cobj$ models $\Gamma = \Gamma'$.
\item $\gamma \in \C(\Delta,\Gamma)$ models the judgment $\Delta \to \gamma :
\Gamma$ and
$\gamma = \gamma' \in \C(\Delta,\Gamma)$ models $\Delta \to \gamma = \gamma' :
\Gamma$.
\item
$A \in \Ty(\Gamma)$ models the judgment $\Gamma \vdash A\ \mathtt{ type}$ and
$A = A' \in \Ty(\Gamma)$ models $\Gamma \vdash A = A'$.
\item
$a \in \Tm(\Gamma,A)$ models the judgment $\Gamma \vdash a : A$ and
$a = a' \in \Tm(\Gamma,A)$ models $\Gamma \vdash a = a' : A$.
\end{itemize}
\end{remark}

The connection with Martin-L\"of's substitution calculus contributes to the appeal of cwfs: they give rise to categorical combinators for dependent types just as cccs give rise to categorical combinators for the simply typed $\lambda$-calculus \cite{Curien86}.
However, Definition \ref{def:cwf} has sometimes been criticized for being
\emph{too} close to the syntax, or for relying on less standard mathematical
objects such as $\Fam$-valued presheaves. In Section
\ref{subsec:plaincwfs} we will discuss alternative formulations of cwfs highlighting other
aspects of the structure.


%


\begin{remark}
A key feature of the notion of cwf is that it can be presented as a generalized algebraic theory in the sense of Cartmell \cite{cartmell:apal}.\begin{itemize}
\item The generalized algebraic theory of categories introduces the sorts
$\Cobj$ and $\C(\Delta,\Gamma)$, the operations $\gamma \circ \delta$ and
$\id_\Gamma$, and associativity and identity laws. 
\item The $\Fam$-valued presheaf adds the sorts $\Ty(\Gamma)$ and
$\Tm(\Gamma, A)$, the operations $A[\gamma]$ and $a[\gamma]$, and associativity and identity laws for both.
\item The terminal object adds the operation 1 and $\tuple{}$, and its uniqueness law.
\item Context comprehension adds the operations 
$\Gamma \cext A, \p_{\Gamma,A}, \q_{\Gamma,A}$, and $\tuple{\gamma , a}$, and the projection and surjective pairing laws.
\end{itemize}
See \cite{dybjer:torino} for a complete presentation of the generalized algebraic theory of cwfs.
We remark that we have suppressed some of the arguments of the operations. For example, composition is officially an operation with five arguments: $\Xi, \Delta, \Gamma \in \Cobj,\delta \in \C(\Xi,\Delta)$, and $\gamma \in \C(\Delta,\Gamma)$, but we suppress the three first when we write $\delta \circ \gamma$. Similar remarks hold for the operations $A[\gamma], a[\gamma],\tuple{}$, and $\tuple{\gamma , a}$. We sometimes drop even more arguments to simplify notation and for example write $\p_A$ or $\p$ for the official $\p_{\Gamma,A}$, etc. Moreover, we sometimes write $\gamma : \Delta \to \Gamma$ for $\gamma \in \C(\Delta,\Gamma)$.

A cwf is thus a structure $(\C,T,1,\tuple{},\cext , \p, \q, \langle -,- \rangle)$, subject to some equations. However, we often refer to a cwf by the first 
two components $(\C,T)$ or even the first component $\C$. 
\end{remark}

As already mentioned, cwfs only organize the core of dependent type
theory, the basic structure and operations on contexts, types, terms, and substitutions.
We will see later how cwfs naturally generalize well-known notions in categorical logic to dependent types. We will also see how they
may be enriched with type and term formers, in order to capture Martin-L\"of type theory with $\Sigma$-types, $\Pi$-types
and identity types. Finally, we will see that the syntax of Martin-L\"of type
theory may be defined as the \emph{initial cwf} in a precise sense.

%
%
%

\subsubsection{Structure of contexts} The definition of cwfs just contains two operations on contexts: the terminal object representing the empty context and an operation mapping a
context $\Gamma$ and a type $A \in \Ty(\Gamma)$ to a new context
$\Gamma\cext A$. It is however not required that all contexts are generated by repeated application of these two rules. In contrast to this, Cartmell \cite{cartmell:apal} adds such a constraint on the structure of context for his contextual categories. We shall use the following formulation, which is equivalent to Cartmell's:
\begin{definition}[Contextuality]
A cwf is {\em contextual} iff there is
$$l : \Cobj \to \N\,,$$
a \emph{length function}, 
such that $l(\Gamma) = 0$ iff $\Gamma = 1$, and $l(\Gamma) = n+1$ iff there are
unique $\Delta \in \Cobj$ and $A\in \Ty(\Delta)$ such that $\Gamma =
\Delta\cext A$, and $l(\Delta) = n$.
\end{definition}
Although this requirement will be used in some of our equivalence theorems, it is not part of our definition of cwf. The reason is that unlike the other parts of the definition of cwfs, 
it does not correspond to an inference rule of dependent type theory, and
it is not expressed in the language of generalized algebraic theories.
However, the free cwf is contextual.

Without going as far as requiring all contexts to be defined
inductively, we sometimes wish to overcome the intrinsic distinction between contexts
and types by asking that up to isomorphism, every context is represented by
a type \cite{ClairambaultD11,ClairambaultD14}.

\begin{definition}[Democracy]
A cwf is {\em democratic} provided each context $\Gamma$ is represented by a type $\overline{\Gamma}$ in the sense that there is an isomorphism 
$$\gamma_\Gamma : \Gamma  \cong 1.\overline{\Gamma}$$
\end{definition}

Democracy does not imply contextuality. However, 
in the presence of unit types and $\Sigma$-types, the converse
holds: any context $1\cdot A_1 \cdot \ldots \cdot A_n$ may be represented by the
iterated $\Sigma$-type $\Sigma(A_1, \Sigma(A_2, \dots, \Sigma(A_{n-1},
A_n)\cdots)) \in \Ty(1)$. Like contextuality, democracy does not correspond to
an inference rule of dependent type theory. However, unlike
contextuality, democracy can be expressed in the language of
generalized algebraic theories.  

\subsubsection{Strict morphisms of cwfs} 
We will now introduce a notion of morphisms between cwfs.

\begin{definition}
A \emph{(strict) cwf-morphism between cwfs $(\C,T_\C)$ and $(\D,T_\D)$} is
a pair $(F, \sigma)$ where $F : \C \to \D$
is a functor preserving $1$ on the nose, and
\[
\sigma : T_\C \natto T_\D \circ F
\]
is a natural transformation between $\Fam$-valued presheaves,
preserving context comprehension on the nose.
\end{definition}

Thus we have $$\sigma_\Gamma :
\Ty_\C(\Gamma) \to \Ty_\D(F \Gamma)$$
\[
\sigma_\Gamma^A : \Tm_\C(\Gamma, A) \to \Tm_\D(F\Gamma, \sigma_\Gamma (A))
\]
for $\Gamma \in \Cobj$ and $A\in
\Ty(\Gamma)$.

It is convenient to simplify notations and write all the components
of a cwf-morphism as $F$ so that we write $F(A)$ for $\sigma_\Gamma(A)$ and $F(a)$ for $\sigma^A_\Gamma(a)$. Naturality of $\sigma$
amounts to preservation of substitution, \emph{i.e.}, for all $\gamma
: \Delta \to \Gamma$ in $\C$, we have
\[
F(A[\gamma]) = (FA)[F\gamma] \qquad \qquad F(a[\gamma]) =
(Fa)[F\gamma]\,.
\]

Finally, preservation of context comprehension on
the nose means that $F(\Gamma\cext A) = F\Gamma\cext
FA$, with $F(\p_{\Gamma, A}) = \p_{F\Gamma, FA}$ and $F(\q_{\Gamma,
A}) = \q_{F\Gamma, FA}$.

Small cwfs and strict cwfs-morphisms form a category, written $\Cwf$. 

\section{Unityped cwfs}

As we explained in the introduction, a key feature of the notion of cwf is that it
can be presented as a generalized algebraic theory. As a consequence it can be seen
both as a notion of model and as an (idealized) language for dependent type theory.
It is therefore a suitable intermediary between traditional formal systems for
dependent type theory and categorical notions of models. This paper is based on the
observation that restricted classes of cwfs can play a similar role for untyped and
simply typed systems. 

In this section we will look at cwfs with only one type and
claim that they play a similar role for untyped systems as cwfs do for dependently
typed systems.

%

\subsection{Plain ucwfs}

The definition of cwfs with only one type can be simplified as follows.

\begin{definition}\label{def:Ucwf}
A unityped category with families (ucwf) consists of the following:
\begin{itemize}
\item A category $\C$ with a terminal object, written $0$.

\emph{Notation and terminology.} We use $n, m,$ \emph{etc} to range over objects of $\C$,
and refer to those as {\em contexts}. Likewise, we use $\delta,
\gamma,$ \emph{etc} to range over morphisms, and refer to those as
{\em substitutions}. We refer to $0$ as the \emph{empty context}. We
write $\tuple{}_n \in \C(n, 0)$ for the terminal morphism,
representing the empty substitution.
\item A presheaf $\Tm : \Cop \to \Set$.

\emph{Notation and terminlogy.} We refer to the elements of $\Tm(n)$ as the \emph{terms of arity $n$}
-- we use $a, b,$ etc to range over terms. Finally, for $\gamma : n
\to m$, the functorial action of $\Tm$ yields a \emph{substitution
operation}
\[
\Tm(\gamma) = \_\,[\gamma] : \Tm(m) \to \Tm(n)\,.
\]
\item A \emph{context comprehension operation} which to a given
context $n \in \Cobj$ assigns a context $s(n)\in \Cobj$ along with two
\emph{projections}
\[
\p_n : s(n) \to n
\qquad\qquad
\q_n \in \Tm(s(n))
\]
satisfying the following universal property: for all $\gamma : m
\to
n$, for all $a\in \Tm(m)$, there is a unique
$\tuple{\gamma, a} : m \to s(n)$ such that
\[
\p_n \circ \tuple{\gamma, a} = \gamma
\qquad \qquad
\q_n [\tuple{\gamma, a}] = a\,.
\]
\end{itemize}
\end{definition}

\begin{remark}
The context comprehension operation for ucwfs amounts to the assignment of a representation
$s(n)$ of the presheaf
$$
\C(-,n) \times \Tm(-) : \C^\op \to \Set
$$
for all $n \in \Cobj$.
\end{remark}

If $n \in \mathbb{N}$ is a natural number, we write $\underline{n} =
s^n(0)$ for the context obtained by $n$ applications of the context
comprehension operation. If the ucwf is contextual, all objects of $\C$
have the form $\underline{n}$ for $n \in \mathbb{N}$.
We think of the terms in $\Tm(\underline{n})$ as \emph{boxes} with
$n$ inputs and one output. Substitutions $\gamma : \underline{m}
\to \underline{n}$ are boxes with $m$ inputs and $n$ outputs. We have
\[
\gamma = \tuple{\tuple{\dots\tuple{\tuple{}, a_1},\dots},a_n} :
\underline{m} \to \underline{n}
\]
for $a_1, \ldots, a_n \in \Tm(\underline{m})$. For convenience
we write $\gamma = \tuple{a_1, \dots, a_n}$. For $a \in \Tm(\underline{m})$,
performing the substitution $a[\gamma]$ amounts to connecting the
box $a_i$ to the $i$-th input of the box $a$, see
Figure \ref{fig:boxes}.

\begin{figure}
\begin{center}
\[
\includegraphics[scale=1]{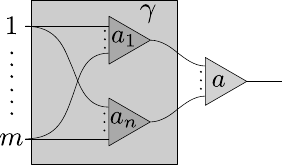}
\]
\end{center}
\caption{Substitution as plugging boxes}
\label{fig:boxes}
\end{figure}

Figure \ref{fig:boxes} reminds us of other categorical notions that aim to capture
algebraic theories, such as Lawvere theories. We will come back
later to this similarity. However, Figure \ref{fig:boxes} is
misleading in one respect: it suggests that the boxes $a_1, \dots,
a_n$ have free variables in $\Fin(m) = \{1, \dots, m\}$. In reality, in the
context of ucwfs, these free variables are not first-class citizens
but are obtained indirectly through sequences of \emph{projections}.
More precisely, the term
\[
\pi_i^m = \q_{i-1} [ \p_{i} ] \cdots [ \p_{m-1} ] \in \Tm(m)
\]
will serve in place of the free variable $i$ in the context of size
$m$. With the notations introduced, we then have the expected equation
\[
\pi_i^m [\tuple{a_1, \dots, a_{m}}] = a_i
\]

These notations suggest a
correspondence to \emph{cartesian
operads} \cite{trimble:nlab}, and we will come back to this
connection.
Before that, writing $\Ucwf$ for the category of small ucwfs and strict
cwf-morphisms, we note:

\begin{proposition}\label{prop:ucwf_initial}
The category $\Ucwf$ has an initial object.
\end{proposition}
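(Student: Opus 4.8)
The plan is to construct the initial object explicitly as a term model built from the raw combinators of the generalized algebraic theory of ucwfs \emph{with no additional generators}, and then to verify initiality by a direct induction. First I would fix a grammar of raw expressions in three syntactic classes: \emph{arities}, generated by $0$ and $s(-)$; \emph{substitutions}, generated by $\id$, composition $\delta \comp \gamma$, the empty substitution $\tuple{}$, the projection $\p$, and extension $\tuple{\gamma,a}$; and \emph{terms}, generated by the last projection $\q$ and the substitution action $a[\gamma]$. Since we take no generating operations, $\q$ is the only atomic term former. I would then carve out the well-formed raw expressions together with their source and target arities by mutual induction, mirroring exactly the typing discipline of Definition~\ref{def:Ucwf}, and impose the congruence $\equiv$ generated by the equational axioms of a ucwf: the category laws, the functoriality laws for $\_[\gamma]$, the uniqueness law for $\tuple{}$, and the projection and surjective-pairing laws for context comprehension. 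Let $\mathcal{I}$ be the resulting quotient.

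The first block of work is \emph{soundness}: checking that $\mathcal{I}$ is a ucwf. Its objects are the closed arities, which (as the only constructors are $0$ and $s$, and there are no object equations) are exactly the $\underline{n} = s^n(0)$ for $n \in \N$; the hom-set $\C(\underline m,\underline n)$ is the set of raw substitutions $\underline m \to \underline n$ modulo $\equiv$, and $\Tm(\underline n)$ the set of raw terms of arity $\underline n$ modulo $\equiv$. Composition, identities, the substitution action, $0$ as terminal object, and the comprehension data $(s,\p,\q,\tuple{-,-})$ are read off from the corresponding constructors. All the axioms of Definition~\ref{def:Ucwf} then hold by construction, since $\equiv$ was generated by precisely those axioms; the only points needing care are that each operation is well-defined on $\equiv$-classes (immediate, as $\equiv$ is a congruence) and that the universal property of comprehension holds, whose existence part is supplied by $\tuple{-,-}$ together with the two projection laws, and whose uniqueness part is precisely the surjective-pairing law.

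It remains to prove \emph{initiality}. Given any ucwf $\D$, I would define a candidate strict cwf-morphism $F : \mathcal{I} \to \D$ by recursion on raw expressions, sending each combinator to its namesake in $\D$: $F(0) = 0$, $F(s(n)) = s(Fn)$, $F(\id) = \id$, $F(\delta\comp\gamma) = F\delta \comp F\gamma$, $F(\tuple{}) = \tuple{}$, $F(\p) = \p$, $F(\tuple{\gamma,a}) = \tuple{F\gamma, Fa}$, $F(\q)=\q$, and $F(a[\gamma]) = (Fa)[F\gamma]$. The substantive step is that this respects $\equiv$ and hence descends to the quotient: each generating equation of $\equiv$ is carried by $F$ to an instance of the corresponding axiom of $\D$, which holds because $\D$ is a ucwf, and this is a routine induction over the generating equations. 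The resulting $F$ is a functor preserving the terminal object $0$ on the nose, its term component is natural (this is exactly the clause $F(a[\gamma]) = (Fa)[F\gamma]$), and it preserves $s,\p,\q$ and pairing strictly, so it is a strict cwf-morphism. Uniqueness is forced: strict preservation of the terminal object, of comprehension, of composition, and of substitution determines the value of any strict morphism on each combinator, so any such morphism agrees with $F$ on every raw expression and hence on every $\equiv$-class, by induction on the structure of expressions.

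The main obstacle is the soundness and well-definedness bookkeeping: setting up the mutually inductive raw syntax with correct source and target arities, and checking both that every ucwf operation is compatible with the congruence $\equiv$ and that $F$ sends each generating equation to a theorem of $\D$. These verifications are routine but must be carried out systematically; everything else in the argument is forced by strict preservation. Alternatively, one could appeal to the general fact that the category of models of a generalized algebraic theory --- here the theory of ucwfs recalled above --- always has an initial object, realized by its closed term model; the construction sketched here is a direct unfolding of that fact in the present special case.
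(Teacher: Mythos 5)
Your proposal is correct and is essentially a detailed unfolding of the paper's Construction~1: the initial ucwf as a quotient of raw cwf-combinator expressions by the congruence generated by the ucwf equations (i.e.\ by generators and relations), with initiality established by recursion on raw syntax and uniqueness forced by strict preservation. The paper additionally offers a second, more concrete realization (the presheaf of variables over the category of renamings $\Ncat$), but your term-model argument matches its first construction.
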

\emph{Construction 1.} From the definition of ucwf it is immediately clear
that an initial ucwf $\T_{\mathrm{ucwf}}$
can be
generated inductively:
we simultaneously define the three families $\Cobj, \C(n,m)$, and
$\Tm(n)$ where the ucwf-operations become introduction rules. Then we take the quotients with respect to the equivalence relations generated by the ucwf-equations.

\emph{Construction 2.} Alternatively, we can construct it as the presheaf of variables over the category of renamings.
\begin{itemize}
\item The category $\Ncat$ of \emph{renamings}, with
objects $\Ncat_0 = \N$ and $\Ncat(n,m) = \Fin(n)^m$.
\item The \emph{presheaf $\Tm : \Ncat^{\mathrm{op}} \to \Set$} is defined by
$\Tm(n) = \Fin(n)$ and $i[(a_1,\ldots,a_{m})] = a_i$.
\item \emph{Context comprehension} is defined by $s(n) = n+1$, $\p_n =
(1, \dots, n)$, and $\q_n = n+1$.
\end{itemize}
This construction and its isomorphism with
$\T_{\mathrm{ucwf}}$ have been formalized in Agda by Brilakis
\cite{Brilakis18}.

Interestingly, this is the same as the free cartesian operad. This
initial ucwf is
\emph{contextual}. As we will see, this requirement is necessary for
the connection with cartesian operads.

\subsection{Contextual ucwfs}
\label{subsec:contextual_ucwfs}

\subsubsection{Cartesian operads}
Let us now consider the special case of a \emph{contextual ucwf} $\C$, where
the length function induces a bijection $\C_0 \cong \N$. 
It follows from the laws of cwfs that
\[
\C(\underline{m},\underline{n}) \cong \Tm(\underline{m})^n
\]
From right to left we use the $n$-ary tupling
introduced above, while from left-to-right we apply projections. This is essentially
the same data as for \emph{cartesian operads}.

\begin{definition}
A {\em cartesian operad} consists of the following:
\begin{itemize}
\item a family $\Tm(n)$, where $n \in \N$, of {\em n-ary operations};
\item an operation of {\em operad composition} which maps $a \in \Tm(m)$ and $\gamma \in \Tm(n)^m$ to
$a[\gamma] \in \Tm(n)$ and satisfies identity and
associativity laws;
\item {\em projections} $\pi^n_i \in \Tm(n)$, such that
$
\pi_i^n[(a_1, \ldots, a_{n})] = a_i\,.
$
\end{itemize}
\end{definition}
Focusing on contextual ucwfs allows us to extract the mechanism for terms and
substitutions that is at play in full cwfs, but without considering types. 

Writing $\Ucwf_\ctx$ of $\Ucwf$ having as objects \emph{contextual} ucwfs, and $\Cartop$
for the category of \emph{cartesian operads}, we have:

\begin{theorem}
The categories $\Ucwf_\ctx$ and $\Cartop$ are equivalent.
\end{theorem}

Rather than formally define $\Cartop$ and prove this equivalence, we shall detail it for
\emph{Lawvere theories}, which form a category equivalent to
$\Cartop$ \cite{trimble:nlab}.

\subsubsection{Lawvere theories}

Contextual ucwfs are also equivalent to \emph{Lawvere theories}.
The following definition is from
\cite{HylandP07}.

\begin{definition}
A \emph{Lawvere theory} consists of a small category $\C$ with (necessarily
strictly associative) finite products and a strict finite-product preserving
identity-on-objects functor $L : \Ncat \to \C$. Here, $\Ncat$ refers to the category of \emph{renamings}
introduced in the proof of Proposition \ref{prop:ucwf_initial}, with terminal object $0$ and binary products defined by $+$.

A map of Lawvere theories from $(L, \C)$ to $(L', \C')$ is a 
(necessarily strict) finite-product preserving functor from $\C$ to $\C'$
that commutes with the functors $L$ and $L'$. Lawvere theories and their maps
form a category $\Law$.
\end{definition}

Note that $L : \Ncat \to \C$ is usually (equivalently) presented as 
$\Ncatop_0 \to \C$, where $\Ncat_0$, a skeleton of the category of finite sets
and functions, has an initial object $0$ and finite coproducts given by 
the sum of integers. 


\begin{theorem}
The categories $\Ucwf_\ctx$ and $\Law$ are equivalent.
\end{theorem}
\begin{proof}
Let $(\C, T_\C)$ be a contextual ucwf. We already noted that $\C_0
\cong \mathbb{N}$. Moreover, we observe that there is a \emph{unique} contextual
ucwf $(\D, T_\D)$, isomorphic to $(\C, T_\C)$ and such that $\D_0 = \mathbb{N}$
with $0$ terminal and $s(n) = n+1$ for all $n \in \D_0$. Hence we assume from
now on that contextual ucwfs have natural numbers as objects.
By Proposition \ref{prop:ucwf_initial} below, $\Ncat$ is the base
category of the \emph{initial ucwf}. Hence, if  $(\C, T_\C)$ is a contextual ucwf, there is a unique functor 
\[
L : \Ncat \to \C,
\]
which is the first component of a cwf-morphism. In particular, 
$L$ preserves the terminal object and context comprehension on the nose. It follows that it is identity-on-objects and strictly
finite-product preserving.  

If $(F, \sigma)$ is a cwf-functor between contextual ucwfs (assuming
\emph{w.l.o.g.} that these ucwfs have $\mathbb{N}$ as objects), it follows that $F$ is a morphism between the corresponding Lawvere theories.
Conversely, for any morphism $F$ between the corresponding Lawvere theories,
$\sigma$ can be uniquely recovered from $F$ and projection.
This yields a full and faithful functor from $\Ucwf_\ctx$ to $\Law$. 

Finally this functor is surjective on objects: from $(L, \C)$ a Lawvere theory,
there is a ucwf with category $\C$; terms $\Tm(n) = \C(n, 1)$, and context
comprehension $s(n) = n + 1$. The universal property follows from that of the
finite product. From all that, it follows that $\Ucwf_\ctx$ and $\Law$ are
equivalent.
\end{proof}

\begin{remark}
In a recent paper Fiore and Voevodsky \cite{VoevodskyF17} prove a closely related result about C-systems, a variant of Cartmell's contextual categories. They prove that their category of Lawvere theories is isomorphic to the subcategory of C-systems whose length functions (in the definition of contextuality) are bijections. 

\end{remark}

\subsection{$\lambda\beta\eta$-ucwfs}

Ucwfs give rise to a generalized algebraic theory which captures the
combinatorics of terms and substitution in a similar way as
cartesian operads and Lawvere theories. This primitive structure may
then be enriched with operations and equations for capturing specific theories, such as the pure $\lambda\beta\eta$-calculus.

\begin{definition}
A $\lambda\beta\eta$-ucwf is a ucwf $(\C,\Tm)$ with two more operations:
\begin{eqnarray*}
\lambda_n &:& \Tm(s(n)) \to \Tm(n)\\
\ap_n &:& \Tm(n) \times\Tm(n) \to \Tm(n)
\end{eqnarray*}
for all $n \in \C_0$, and four more equations:
\[
\begin{array}{rclcl}
\lambda_n(b)[\gamma] &=& \lambda_{m}(b[\tuple{\gamma \circ
\p_m,\q_m}]) \\
\ap_n(c,a)[\gamma]&=& \ap_m(c[\gamma],a[\gamma])\\
\ap_n(\lambda_n(b),a) &=& b[\tuple{\id_n,a}] &\qquad&(\beta)\\
\lambda_n(\ap_{s(n)}(c[\p_n],\q_n)) &=& c&&(\eta)
\end{array}
\]
for $\gamma : m \to n$, $b \in \Tm(s(n))$, and $c, a \in \Tm(n)$. A
\emph{$\lambda\beta$-ucwf} has the same operations, but it not
subject to the $(\eta)$ equation.
\end{definition}

The definition above is natural and
close to the syntax. As we will see later on, it is the direct
simplification of the notions of arrow and $\Pi$-types in the
simply-typed and the dependently typed case discussed later on.
However, in the unityped case, this definition can be simplified dramatically.

\begin{proposition}
Let $(\C, \Tm)$ be a ucwf. Then, $\lambda\beta\eta$-structures on
$\C$ are equivalently defined as natural isomorphisms between presheaves
\[
\Tm(\s(-)) \stackrel{\lambda}{\cong} \Tm(-)
\]
where the functorial action of $\s$ is defined as $\s(\gamma) =
\tuple{\gamma \circ \p_m, \q_m}$. More precisely, \emph{(1)} for any
$\lambda\beta\eta$-structure $\lambda$ is such a natural
isomorphism, and \emph{(2)} given such a natural isomorphism, there
is a unique $\lambda\beta\eta$-structure giving rise to it.
\end{proposition}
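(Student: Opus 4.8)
The plan is to exploit the fact that the statement of the proposition has already packaged most of the inference rules as a piece of naturality, so that the real content reduces to isolating the isomorphism property and the rôle of $\ap$. First I would observe that the first equation of the definition, $\lambda_n(b)[\gamma] = \lambda_m(b[\tuple{\gamma \comp \p_m, \q_m}])$, is by inspection exactly the naturality square for the family $(\lambda_n)_n$ regarded as a morphism of presheaves $\Tm(\s(-)) \to \Tm(-)$, once one reads off $\s(\gamma) = \tuple{\gamma \comp \p_m, \q_m}$. Thus ``naturality of $\lambda$'' and the first equation are literally the same assertion, and everything else concerns the bijectivity of the $\lambda_n$ and the recovery of $\ap$.

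For part (1), starting from a $\lambda\beta\eta$-structure $(\lambda, \ap)$, I would define a candidate inverse $\mu_n : \Tm(n) \to \Tm(\s(n))$ by $\mu_n(c) = \ap_{\s(n)}(c[\p_n], \q_n)$. Then $\lambda_n \comp \mu_n = \id$ is precisely the $(\eta)$ equation. For $\mu_n \comp \lambda_n = \id$ I compute $\mu_n(\lambda_n(b)) = \ap_{\s(n)}(\lambda_n(b)[\p_n], \q_n)$, use the first equation to rewrite $\lambda_n(b)[\p_n]$ as $\lambda_{\s(n)}(b[\s(\p_n)])$, apply $(\beta)$ to get $b[\s(\p_n) \comp \tuple{\id_{\s(n)}, \q_n}]$, and finally simplify $\s(\p_n) \comp \tuple{\id_{\s(n)}, \q_n} = \id_{\s(n)}$. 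Hence each $\lambda_n$ is a bijection and $\lambda$ is a natural isomorphism.

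For part (2), starting from a natural isomorphism $\lambda$ with (automatically natural) inverse $\lambda^{-1}$, uniqueness is forced by $(\beta)$: any inducing structure must satisfy $\ap_n(c,a) = \ap_n(\lambda_n(\lambda^{-1}_n c), a) = \lambda^{-1}_n(c)[\tuple{\id_n, a}]$. Taking this equation as the \emph{definition} of $\ap$, I would verify the remaining equations: $(\beta)$ is immediate from $\lambda^{-1}_n \comp \lambda_n = \id$, while equation (2) (naturality of $\ap$) and $(\eta)$ each follow by combining the naturality square of $\lambda^{-1}$ with the standard cwf laws $\tuple{\delta, a} \comp \gamma = \tuple{\delta \comp \gamma, a[\gamma]}$, $\p \comp \tuple{\delta, a} = \delta$, $\q[\tuple{\delta, a}] = a$, and $\tuple{\p_n, \q_n} = \id_{\s(n)}$. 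Combining the recipe of part (1), which recovers $\ap$ from $\lambda^{-1}$ via $(\beta)$, with this uniqueness then shows the two assignments are mutually inverse, yielding the claimed equivalence.

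The main obstacle, and really the only step that is not pure bookkeeping, is the comprehension identity $\s(\p_n) \comp \tuple{\id_{\s(n)}, \q_n} = \id_{\s(n)}$ together with its variant $\s(\gamma) \comp \tuple{\id_m, a[\gamma]} = \tuple{\gamma, a[\gamma]}$; both the $\mu_n \comp \lambda_n = \id$ / $(\eta)$ computation and the naturality of $\ap$ rest on it, and everything else is a routine application of the universal property of context comprehension.
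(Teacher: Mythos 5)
Your proof is correct and follows essentially the same route as the paper: naturality of $\lambda$ is read off from the substitution law, the inverse is $c \mapsto \ap_{\s(n)}(c[\p_n],\q_n)$, and $\ap$ is recovered as $\ap_n(c,a) = \lambda_n^{-1}(c)[\tuple{\id_n,a}]$, with the same verifications via naturality of $\lambda^{-1}$ and the comprehension laws. The only cosmetic difference is that you derive uniqueness of $\ap$ from $(\beta)$ together with surjectivity of $\lambda_n$, whereas the paper invokes the substitution law for $\ap$; both are valid one-line arguments.
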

\begin{proof}
For \emph{(1)}, given a $\lambda\beta\eta$-structure on $(\C, \Tm)$,
we first observe that $\lambda_n$ is natural by the substitution law.
For $c \in \Tm(n)$ we set $\lambda_n^{-1}(c) =
\ap_{\s(n)}(c[\p_n],\q_n) \in \Tm(\s(n))$ -- using $\beta, \eta$ and
the substitution law for $\lambda$,  $\lambda_n$ and
$\lambda_n^{-1}$ are inverse. 

For \emph{(2)}, given a natural iso $\lambda_n$, we set $\ap_n(c,
a) = \lambda_n^{-1}(c)[\tuple{\id_n, a}]$. The $\beta$-rule follows from the fact that $\lambda_n^{-1}\circ
\lambda_n$ is the identity. The $\eta$-rule follows from the
naturality of $\lambda^{-1}$ plus the fact that $\lambda_n \circ
\lambda_n^{-1}$ is the identity. The substitution law for $\lambda$
is by naturality of $\lambda$, and the substitution law for $\ap$ is
by naturality of $\lambda^{-1}$. Finally, uniqueness of the
$\lambda\beta\eta$-structure (\emph{i.e.} of $\ap_n$) relies on the
substitution rule for $\ap$.
\end{proof}

%

\subsubsection{Some related models of the untyped $\lambda$-calculus}

There are many notions of model of $\lambda$-calculus, see for example Barendregt \cite{barendregt:lambda}. We will only briefly discuss the ones given by 
Obtu\l owicz \cite{Obtulowicz77}, Aczel \cite{aczel:frege}, and Lambek and Scott \cite{LS86}.

Obtu\l owicz's \emph{algebraic theories of type $\lambda$--$\beta\eta$} are Lawvere theories similar to the Lawvere theories corresponding to contextual 
$\lambda\beta\eta$-ucwfs, but 
use an evaluation morphism $\ev$ as a primitive instead of $\ap$.
These operations are interdefinable, via $\ev = \ap(\pi^2_1,\pi^2_2)
\in \Tm(2)$ and $\ap(c,a) = \ev [\tuple{ c , a } ]\in \Tm(n)$ for
$c,a \in \Tm(n)$. 

As a basis for his notion of Frege structure, Aczel introduces a notion of \emph{lambda structure}. This in turn is based on
the notion of an {\em explicitly closed family}, which is a cartesian operad where $\Tm(n) \subseteq \Tm(0)^n \to
\Tm(0)$, so that $a[\gamma]$ is function composition, and projections are the
projections in the metalanguage. It is thus a cartesian operad which is well-pointed in
the sense that $a, a' \in \Tm(n)$ and $a [ \gamma ] = a' [ \gamma ]$ for all
$\gamma \in \C(0,n)$ implies $a = a'$. To model the $\lambda\beta$-calculus Aczel adds two operations
\begin{eqnarray*}
\lambda_0 &:& \Tm(1) \to \Tm(0)\\
\ap_0 &:& \Tm(0) \times\Tm(0) \to \Tm(0)
\end{eqnarray*} 
The resulting notion of lambda structure is equivalent to well-pointed
$\lambda\beta$-ucwfs. Since terms are
\emph{functions}, there is a unique way to define the operations
$\lambda_n$ and $\ap_n$ for $n>0$ so that they satisfy the
substitution laws of $\lambda\beta$-ucwfs:
\begin{eqnarray*}
\lambda_n(b) (\gamma) &=& \lambda_0(b(\gamma \circ \p_0,\q_0))\\
\ap_n(c,a) (\gamma) &=& \ap_0(c(\gamma),a(\gamma))
\end{eqnarray*}
for $\gamma \in \C(0,n)$. The general substitution rules follow from
this definition.

Lambek and Scott propose \emph{C-monoids} as their notion of model of the untyped $\lambda$-calculus.
These are monoids with extra structure coming from combinators of cartesian closed categories. C-monoids capture the equational behaviour of \emph{closed} rather than open terms. Like in
$\lambda\beta\eta$-ucwfs and cartesian closed categories, variables are dealt
with indirectly as projections. But in $\lambda\beta\eta$-ucwfs, variable
addressing is \emph{external}, that is, handled by the ucwf structure. There
are no \emph{term} constructors for pairs and projections -- in particular
closed terms, \emph{i.e.}, terms in $\Tm(0)$, do not form a C-monoid as they
support no pairing and projection operations. In contrast, C-monoids handle
variable addressing through pairs and projections at the \emph{term level}. 

We expect a strong relationship between C-monoids and  $\lambda\beta\eta$-ucwfs
with term-level pairs and projections. The proof should follow
\cite{LS86}, encoding open terms and substitution
within C-monoids via \emph{functional completeness}.  

%

\subsubsection{Initial $\lambda\beta\eta$-ucwfs} 

To conclude the discussion about
$\lambda\beta\eta$-ucwfs, we include a construction of the untyped
$\lambda$-calculus as the \emph{initial} such structure. For that, let us say
that a strict cwf-morphism $F$ between $\lambda\beta\eta$-ucwfs $(\C, \Tm_\C)$
and $(\D, \Tm_\D)$ is a \emph{strict $\lambda\beta\eta$-ucwf-morphism} iff the
action of $F$ on terms preserves all the term constructors on the nose.
Let us write $\Ucwf^{\lambda\beta\eta}$ for the category of small
$\lambda\beta\eta$-ucwfs and strict $\lambda\beta\eta$-ucwf-morphisms. Then, we
have:

\begin{proposition}
The category $\Ucwf^{\lambda\beta\eta}$ has an initial object.
\end{proposition}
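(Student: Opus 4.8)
The plan is to exploit, once more, the fact emphasized throughout the paper that these notions are generalized algebraic theories in Cartmell's sense. Indeed, the notion of $\lambda\beta\eta$-ucwf extends the GAT of ucwfs by exactly two new operation symbols, $\lambda_n$ and $\ap_n$ (each taking an argument of sort $\Cobj$ together with arguments of sort $\Tm$), and by four new equations between terms of the same sort: the two substitution laws, the $(\beta)$-rule and the $(\eta)$-rule. Since a strict $\lambda\beta\eta$-ucwf-morphism is by definition a map preserving \emph{all} the operations --- the ucwf-operations together with $\lambda$ and $\ap$ --- on the nose, it is precisely a homomorphism of models of this extended GAT. It therefore suffices to exhibit an initial model of the extended theory.

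First I would give the analogue of Construction~1 from Proposition~\ref{prop:ucwf_initial}: define $\T_{\lambda\beta\eta}$ by generators and relations. One simultaneously defines the three families $\Cobj$, $\C(n,m)$ and $\Tm(n)$, where all the ucwf-operations together with $\lambda_n$ and $\ap_n$ become point introduction rules, and all the ucwf-equations together with the two substitution laws, $(\beta)$ and $(\eta)$ become path introduction rules. The elimination principle then yields, for every $\lambda\beta\eta$-ucwf $(\D, \Tm_\D)$, a unique structure-preserving map out of $\T_{\lambda\beta\eta}$; this map is exactly a strict $\lambda\beta\eta$-ucwf-morphism, and uniqueness among all such morphisms is immediate since any morphism out of $\T_{\lambda\beta\eta}$ is determined on the generators. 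This establishes initiality in $\Ucwf^{\lambda\beta\eta}$.

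I would then describe the intended concrete model, paralleling Construction~2: take $\Tm(\underline{n})$ to be the set of de Bruijn $\lambda$-terms with at most $n$ free variables quotiented by $\beta\eta$-conversion, with $\_\,[\gamma]$ given by simultaneous capture-avoiding substitution, $\lambda_n$ by abstraction, and $\ap_n$ by application, and exhibit it as isomorphic to $\T_{\lambda\beta\eta}$. Here the main obstacle lies: it is \emph{not} initiality, which is automatic from the elimination principle of the generators-and-relations construction, but the verification that $\lambda$-terms modulo $\beta\eta$ satisfy the ucwf laws \emph{strictly} --- in particular the associativity/identity laws for substitution and the two substitution laws for $\lambda_n$ and $\ap_n$. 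This is the familiar coherence-of-substitution bookkeeping, requiring the standard lemmas on commutation of substitution with abstraction and on composition of substitutions. Since the initial ucwf $\T_{\mathrm{ucwf}}$ already packages the strict combinatorics of substitution, I would build the concrete model on top of it, thereby reducing the verification to just the two new substitution laws and the $(\beta)$ and $(\eta)$ equations.
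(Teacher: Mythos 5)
Your proposal is correct and follows essentially the same route as the paper: Construction~1 is the generators-and-relations (higher inductive type) presentation of the extended generalized algebraic theory, and your concrete model of well-scoped $\lambda$-terms modulo $\beta\eta$ with inductively defined substitution is exactly the paper's Construction~2 (which the paper likewise notes extends the construction of $\Ncat$ as the initial plain ucwf, and whose verification is deferred to the Agda formalisation of Brilakis). Your added remark that the real work lies in checking the strict substitution laws for the concrete model, rather than in initiality itself, is an accurate diagnosis of where the paper hides the bookkeeping.
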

\emph{Construction 1.}
The most direct method is similar to Construction 1 of an initial ucwf. We simultaneously define the three families $\Cobj, \C(n,m)$, and
$\Tm(n)$ where the $\lambda\beta\eta$-ucwf-operations become introduction rules. Then we take the quotients with respect to the equivalence relations generated by the $\lambda\beta\eta$-ucwf-equations.
This construction can be viewed as a well-scoped variable free version of the
$\lambda\sigma$-calculus of Abadi, Cardelli, Curien, and L\'evy \cite{AbadiCCL90}.

\emph{Construction 2.}
Another initial $\lambda\beta\eta$-ucwf can be constructed from the (well-scoped) $\lambda\beta\eta$-calculus. We let $\Cobj = \N$ and generate the family $\Tm(n)$ by the following rules
\begin{eqnarray*}
\var_n(i)&: & \Tm(n)\ \ \ (i \in \Fin(n))\\
\lambda_n &:& \Tm(s(n)) \to \Tm(n)\\
\ap_n &:& \Tm(n) \times\Tm(n) \to \Tm(n)
\end{eqnarray*}
quotiented by the equivalence relation $\sim$ generated by $\beta$ and $\eta$:
\begin{eqnarray*}
\ap_n(\lambda_n(b),a) &\sim& b[\tuple{\id_n,a}]\\
\lambda_n(\ap_{s(n)}(c[\p_n],\q_n)) &\sim& c
\end{eqnarray*}

Note that the variables $\var_n(i)$ where $i \in \Fin(n)$ were simply
represented by the number $i$ in the corresponding construction in Proposition
\ref{prop:ucwf_initial}.

We let $\C(n,m) = \Tm(n)^m$ and define substitution by induction on $\Tm(n)$:
\begin{eqnarray*}
\var_n(i) [(a_1,..., a_n)]&=& a_i\\
\lambda_n(b)[\gamma] &=& \lambda_{s(n)}(b[\tuple{\gamma \circ \p_m,\q_m}]) \\
\ap_n(c,a)[\gamma]&=& \ap_m(c[\gamma],a[\gamma])
\end{eqnarray*}
for $\gamma \in \C(m,n)$.

Note that this construction is an extension of
Construction 2 of an initial plain ucwf.

The two constructions, and the fact that both give rise to initial
$\lambda\beta\eta$-ucwfs, have been formalised in Agda by
Brilakis \cite{Brilakis18}. 
%
%
%
%

\section{Simply-typed cwfs}
\label{sec:scwfs}

\emph{En route} to full cwfs, we now add types
yielding \emph{simply-typed cwfs (scwfs)} (called \emph{non-dependent cwfs} in Clairambault and Dybjer \cite{clairambault:london}). We will then study the
relationship with cartesian and cartesian closed categories. (By cartesian category we here mean categories with finite products, whereas some of the literature including Johnstone \cite{johnstone2002sketches} uses this term for categories with finite limits.)

\subsection{Plain scwfs}

A \emph{simply-typed cwf (scwf)} is a cwf where the presheaf of types $\Ty :
\Cop \to \Set$ is constant, \emph{i.e.} forms a set $\Ty$ not depending on the
context and invariant under substitution.
We can thus simplify the definition as follows: 

\begin{definition}
An scwf consists of the following:
\begin{itemize}
\item A category $\C$ with a terminal object $1$.


\item A set $\Ty$.


\item A family of presheaves $\Tm_A : \Cop \to \Set$ for $A \in \Ty$.
(We also write $\Tm(\Gamma, A)$ for $\Tm_A(\Gamma)$.)
\item A \emph{context comprehension} operation which to
$\Gamma \in \C_0$ and $A \in \Ty$ assigns a context $\Gamma \cext A$ and
two projections
\[
\p_{\Gamma, A} : \Gamma \cext A \to \Gamma 
\qquad
\qquad
\q_{\Gamma, A} \in \Tm(\Gamma \cext A, A)
\]
satisfying the following universal property: for all $\gamma : \Delta \to
\Gamma$, for all $a \in \Tm(\Delta, A)$, there is a unique $\tuple{\gamma, a} :
\Delta \to \Gamma \cext A$ such that
\[
\p_{\Gamma,A} \circ \tuple{\gamma, a} = \gamma
\qquad
\qquad
\q_{\Gamma,A}[\tuple{\gamma, a}] = a
\]

We say that $(\Gamma\cext A, \p_{\Gamma, A}, \q_{\Gamma, A})$ is a
\emph{context comprehension} of $\Gamma$ and $A$.
\end{itemize}
\end{definition}

\begin{remark}
Context comprehension for scwfs amounts to a representation $\Gamma \cext A$
of the presheaf
$$
\C(-,\Gamma) \times \Tm_A(-) : \C^\op \to \Set
$$
for all $\Gamma \in \Cobj$ and $A \in \Ty$. 
\end{remark}

An scwf is a particular kind of cwf, and a \emph{(strict) scwf-morphism} is simply a
(strict) cwf-morphism between scwfs. Small scwfs and strict cwf-morphisms form a category $\Scwf$.

$\Scwf$ has an initial object, but it is not very interesting since its
set of types is empty and its base category is restricted to the terminal object.
Therefore we fix a set
$\B$ of \emph{basic types} and consider the category $\B$-$\Scwf$ where objects
are small scwfs $(\C, \Ty_\C, \Tm_\C)$ together with an interpretation function
$\intr{-}_\C : \B \to \Ty$. Morphisms are strict scwfs-morphisms that
commute with the interpretation.

\begin{proposition}\label{prop:free_scwf}
For all sets $\B$ the category $\B$-$\Scwf$ has an initial object.
\end{proposition}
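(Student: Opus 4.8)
The plan is to construct the initial object explicitly as the scwf of \emph{typed renamings} over $\B$, mirroring Construction 2 of the initial ucwf (Proposition~\ref{prop:ucwf_initial}) but now tracking types. Since there are neither type nor term formers, the free scwf has no genuine type-formation and no genuine term-formation: its types are exactly the basic types, its contexts are lists of basic types, and its terms are just (typed) variables. Concretely, I would set $\Ty = \B$ with $\intr{-}$ the identity; take $\Cobj$ to be finite lists $(A_1,\dots,A_n)$ of elements of $\B$, with the empty list as terminal object $1$ and $\Gamma \cext A$ appending $A$; define $\Tm(\Gamma,A)$ to be the set of indices $i$ with $A_i = A$ (the de Bruijn variables of type $A$ in $\Gamma$); and take a substitution in $\C(\Delta,\Gamma)$ with $\Gamma = (A_1,\dots,A_n)$ to be a tuple $(a_1,\dots,a_n)$ of variables $a_i \in \Tm(\Delta,A_i)$, composed by variable substitution. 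The projection $\p_{\Gamma,A}$ is the weakening renaming, $\q_{\Gamma,A}$ is the last variable, and $\tuple{\gamma,a}$ is tuple extension.

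First I would check that this data is a $\B$-scwf: that each $\Tm_A$ is a presheaf (substitution of variables is functorial), that the empty list is terminal, and that context comprehension satisfies the required universal property. Each of these is a direct, finite computation on indices; there is nothing subtle, because substitution here is mere relabeling of variables, hence manifestly terminating and confluent.

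Next I would prove initiality. Given any $\B$-scwf $(\D,\Ty_\D,\Tm_\D,\intr{-}_\D)$, a strict scwf-morphism commuting with the interpretation is forced on every piece of structure: on types it must send the basic type $A$ to $\intr{A}_\D$; on contexts, strict preservation of $1$ and of $\cext$ forces $F(A_1,\dots,A_n) = 1 \cext \intr{A_1}_\D \cext \cdots \cext \intr{A_n}_\D$; and since every term of the initial scwf is obtained from $\q$ by weakening (a sequence of substitutions along $\p$) and every substitution is a tuple built from $\tuple{}$ and $\tuple{-,-}$, preservation of $\q$, $\p$ and tupling determines the action of $F$ on all terms and substitutions. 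This gives uniqueness. For existence, I would define $F$ by exactly these forced clauses and verify that it respects composition, identities and substitution on terms; because the target is an scwf, each equation to be checked is an instance of a cwf law holding in $\D$, so this step is routine.

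The main obstacle is the uniqueness argument, or rather its hidden ingredient: one must establish the \emph{generation lemma} that in the initial scwf every term really is a variable and every substitution really is a tuple of variables, so that the forced clauses above exhaust the structure. In the explicit model this holds by construction, but it is exactly the place where the absence of term formers is used, and it is what makes the morphism forced. The same result can alternatively be obtained, in the style of Construction 1 for ucwfs, as a quotient inductive-inductive type generating $\Cobj$, $\C(\Delta,\Gamma)$ and $\Tm(\Gamma,A)$ from the scwf-operations with path constructors for the scwf-equations; there initiality is immediate from the recursion principle, and the generation lemma reappears as the proof that this syntactic presentation is isomorphic to the explicit model of typed renamings.
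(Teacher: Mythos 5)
Your construction is exactly the paper's own (its ``Construction 2''): $\Ty=\B$, contexts as lists of basic types, terms as typed de Bruijn variables, substitutions as tuples of variables, with projections given by weakening and substitution by lookup; you even mention the same higher-inductive alternative the paper lists as Construction 1. The only difference is that you spell out the initiality verification (the forced clauses and the generation lemma) that the paper leaves implicit, so the proposal is correct and follows essentially the same route.
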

\begin{proof}
The initial $\B$-scwf, also called the \emph{free scwf over a set of types
$\B$}, can be constructed in much the same two ways as we constructed initial ucwfs. 

We can either proceed as in Construction 1 where the operations become introduction rules for inductively generating the objects, and terms, and the equations inductively generate an equivalence relation. 

Alternatively, we can proceed as in Construction 2, and construct a typed version of the category $\Ncat$ of Proposition 
\ref{prop:ucwf_initial}. For the second construction we let $\Ty = \B$. Then we define
\begin{eqnarray*}
\Cobj &=& \List(\Ty)\\
\C(\Gamma,[A_1,\ldots,A_n]) &=& \Tm(\Gamma,A_1) \times \cdots \times
\Tm(\Gamma,A_n)
\end{eqnarray*}
where $\Tm(\Gamma,A) = \{\var_n(i) \mid A = A_i \}$ containing the $i$th
\emph{variable}, where $\Gamma = [A_1, \dots, A_n]$. Moreover, we define the
\emph{projections} as
\begin{eqnarray*}
\p_{\Gamma, A} &=& (\var_{n+1}(1),\ldots,\var_{n+1}(n))\\
\q_{\Gamma, A} &=& \var_{n+1}(n+1)
\end{eqnarray*}
and substitution 
$$\var_n(i) [(a_1, \dots, a_{n})]= a_i$$
\end{proof}

Just as for ucwfs, the free scwf over
a set $\B$ is contextual. In analogy with Section \ref{subsec:contextual_ucwfs} we could
relate scwfs to coloured cartesian operads (multicategories) and multi-sorted Lawvere
theories, but we omit the unsurprising details. Instead, we discuss their relationship with \emph{cartesian
categories}.

\subsection{Finite products as structure}
\label{subsec:prod_struct}

Cartesian categories (categories with finite products) are categories with a
terminal object and binary cartesian products.
Straightforward as it seems, this definition hides
some subtleties which are put to the forefront when one considers the
associated notions of \emph{morphism} between cartesian categories.
Namely, does the mere \emph{existence} of a product for any two objects suffice
to obtain a cartesian category, or are the finite products part of the
\emph{data} of a cartesian category? Texts in category theory adopt one view or
the other, not always explicitely. In Lambek and Scott's book, the latter view
is explicitely adopted; and morphisms between cartesian categories must
preserve this explicit data \emph{on the nose}.

\begin{definition}\label{def:cc_structure}
A \emph{cartesian category (with structure)} consists of the following:
\begin{itemize}
\item A category $\C$ with a terminal object $1$, where $\tuple{}_X : X \to 1$ denotes the unique arrow into it.
\item A \emph{product operation} which to any $A, B \in \C_0$ assigns an object
$A \times B \in \C_0$ and two projections
\[
\fst_{A, B} : A \times B \to A 
\qquad
\qquad
\snd_{A, B} : A \times B \to B
\]
satisfying the following universal property: for all $a : X \to A$, for all $b
: X \to B$, there is a unique $\tuple{a, b} : X \to A \times B$ such that
\[
\fst_{A, B} \circ \tuple{a, b} = a 
\qquad
\qquad
\snd_{A, B} \circ \tuple{a, b} = b\,.
\]
\end{itemize}

A \emph{strict cartesian functor} from cartesian category $\C$ (leaving
implicit the other components) to $\D$ is a functor $F : \C
\to \D$ preserving the structure \emph{on the nose}, \emph{i.e.} for all $A, B
\in \C_0$, $F(A \times_\C B) = F(A) \times_\D F(B)$, and 
\[
F(\fst^\C_{A, B}) = \fst^\D_{F(A),F(B)}
\qquad
\qquad
F(\snd^\C_{A, B}) = \snd^\D_{F(A), F(B)}\,.
\]

We write $\CCs$ for the category of small cartesian categories (with structure)
and strict cartesian functors, preserving the structure on the nose.
\end{definition}

We will later consider another notion of cartesian category, where we are
content with the \emph{mere existence} of a terminal object and a cartesian product for any two
objects, and the only data part of the structure is the basic category
$\C$ -- there are no \emph{chosen products}. 
This, of course, constrains the maps: the notion of strict cartesian morphisms
as above would make no sense without chosen structure.

We observe in passing that from the preservation of projections and the
universal property, it follows directly that strict cartesian functors also
preserve tuples on the nose, in the sense that $F(\tuple{}_X) = \tuple{}_{F(X)}$ and $F(\tuple{a, b}) = \tuple{F(a),
F(b)}$.

\subsubsection{Finite product types}
We now wish to compare cartesian categories with structure, as above, with
scwfs. They are similar, but the main difference is that 
scwfs distinguish \emph{contexts} and \emph{types}; whereas cartesian
categories do not have this distinction.
In particular, when constructing an scwf from a cartesian category we must
recover the \emph{types} as the objects of $\C$. It follows that
the resulting scwfs support a finite
product operation on types. Looking for an equivalence, we define what
it means for an scwf to support finite product types and introduce a binary product type $A \times B$, and a unit
type $\None$.

\begin{definition}
An \emph{$\None$-structure} on an scwf $\C$ consists of a type $\None \in \Ty$,
and for each $\Gamma$ a term ${0_1} \in \Tm_\C(\Gamma, \None)$ such
that for all $c \in \Tm(\Gamma, \None)$, $0_1= c$.
\end{definition}


\begin{definition}\label{def:times_struct}
A \emph{$\times$-structure} on an scwf $\C$ consists of, for each 
$A, B \in \Ty$, a type $A\times B \in \Ty$ such that for all 
$\Gamma \in \C_0$ there are term formers
\begin{eqnarray*}
\fst_{\Gamma, A, B}(-) &:& \Tm(\Gamma, A \times B) \to \Tm(\Gamma, A)\\
\snd_{\Gamma, A, B}(-) &:& \Tm(\Gamma, A \times B) \to \Tm(\Gamma, B)\\
\tuple{-, -} &:& \Tm(\Gamma, A) \times \Tm(\Gamma, B) \to \Tm(\Gamma, A\times
B)
\end{eqnarray*}
such that
\begin{eqnarray*}
\fst(\langle a , b \rangle) &=& a \\
\snd(\langle a , b \rangle) &=& b \\
\langle \fst(c) , \snd(c) \rangle &=& c\\
\langle a , b \rangle [\gamma] &=& \langle a [\gamma], b  [\gamma] \rangle\,.
\end{eqnarray*}
\end{definition}
We thus have a type formation rule for $A\times B$ and term
formation rules for projections and pairs. The first three equations are the usual  rules for product types with surjective
pairing, while the last one states stability under substitution. 
By an \emph{scwf with finite product types} we mean an \emph{scwf with an
$\None$-structure and a $\times$-structure}. 

\begin{remark}
Having a $\times$-structure on an scwf $\C$ amounts to requiring that there is a
binary type former $\times$ and a natural isomorphism of presheaves
\[
\Tm_\C(-, A) \times \Tm_\C(-, B) \cong \Tm_\C(-, A\times B)\,.
\]
%
Likewise, an $\None$-structure corresponds to a type $\None$ and a natural
isomorphism between $\Tm_\C(-, \None)$ and the constant singleton presheaf.
\end{remark}

The product type structure on scwfs 
should be \emph{preserved} by morphisms.

\begin{definition}
If $\C, \D$ are scwfs with finite product types, then a strict scwf-morphism $F : \C \to \D$
\emph{(strictly) preserves $\None$-structure} if $F(\None^\C) = \None^\D$, and
$F(0_1^\C) = 0_1^\D$. Similarly, it \emph{(strictly) preserves
$\times$-structure} if $F(A\times^\C B) = F(A) \times^\D F(B)$, $F(\fst_{A,B}^\C(c))
= \fst_{FA, FB}^\D(F(c))$ and $F(\snd_{A, B}^\C(c)) = \snd_{FA, FB}^\D(F(c))$.

We write $\Scwf^{\None,\times}$ for the category where the objects are small scwfs with
product types, and morphisms are strict structure-preserving cwf-morphisms.
\end{definition}

Given an scwf with finite product types we construct a cartesian category as follows. 
First, we define the \emph{category of types and terms in context}.

\begin{definition}
Let $\C$ be an scwf with finite product types. We 
define the category $\Tycat(\Gamma)$ of \emph{types and terms in context $\Gamma \in \Cobj$} as
having: \emph{(1)} as objects, $\Ty$; \emph{(2)} as morphisms from $A \in \Ty$
to $B \in \Ty$, the terms $b \in \Tm(\Gamma \cext A, B)$. If 
$b \in \Tm(\Gamma \cext A, B)$ and $c
\in \Tm(\Gamma \cext B, C)$, then their composition is
\[
c \circ b = c[\tuple{\p_{\Gamma,A}, b}]
\]
with identity $\id_{\Gamma, A} = \q_{\Gamma, A} \in
\Tm(\Gamma \cext A, A)$.
\end{definition}
It follows that $\Tycat(\Gamma)$ is a
cartesian category with structure for all $\Gamma \in \C_0$.

\begin{lemma}
For any $\Gamma$, we let $\None$ be the chosen terminal object in $\Tycat(\Gamma)$.
For every $A, B \in \Ty$, we let their product be $A \times B  \in \Ty$ and the 
projections
\[
\fst(\q_{\Gamma, A\times B}) \in \Tycat(\Gamma)(A\times B, A) 
\qquad
\qquad
\snd(\q_{\Gamma, A\times B}) \in \Tycat(\Gamma)(A\times B, B)
\]
\end{lemma}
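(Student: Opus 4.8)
The plan is to verify the two universal properties making $\Ty(\Gamma)$ cartesian with the indicated data: that $\None$ is terminal, and that $A \times B$ equipped with $\fst(\q_{\Gamma, A\times B})$ and $\snd(\q_{\Gamma, A\times B})$ is a binary product. Before either, I would record one derived fact that is used repeatedly: the term formers $\fst$ and $\snd$ are stable under substitution, even though Definition \ref{def:times_struct} only postulates this for pairing. Indeed, for any $c \in \Tm(\Delta, A\times B)$ and any $\gamma$, the $\eta$-law gives $c = \tuple{\fst(c), \snd(c)}$, so $c[\gamma] = \tuple{\fst(c)[\gamma], \snd(c)[\gamma]}$ by stability of pairing, whence $\fst(c[\gamma]) = \fst(c)[\gamma]$ by the $\fst$-$\beta$ rule; symmetrically for $\snd$.

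For the terminal object, recall that a morphism $A \to \None$ in $\Ty(\Gamma)$ is by definition a term in $\Tm(\Gamma \cext A, \None)$. The $\None$-structure supplies the term $0_1$ in this hom-set and asserts it is the unique such term; hence there is exactly one morphism from each $A$ to $\None$, which is precisely terminality.

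For the binary product, take objects $X, A, B \in \Ty$ together with morphisms $a : X \to A$ and $b : X \to B$, \emph{i.e.}\ terms $a \in \Tm(\Gamma \cext X, A)$ and $b \in \Tm(\Gamma \cext X, B)$. The candidate mediating map is the term-level pair $\tuple{a, b} \in \Tm(\Gamma \cext X, A \times B)$. I would then unfold the projection equations using the definition of composition in $\Ty(\Gamma)$: for instance $\fst(\q_{\Gamma, A\times B}) \circ \tuple{a,b} = \fst(\q_{\Gamma, A\times B})[\tuple{\p_{\Gamma, X}, \tuple{a,b}}]$, which by substitution-stability of $\fst$ equals $\fst\big(\q_{\Gamma, A\times B}[\tuple{\p_{\Gamma, X}, \tuple{a,b}}]\big)$; the comprehension law $\q[\tuple{\gamma, t}] = t$ collapses the inner substitution to $\tuple{a,b}$, and the $\fst$-$\beta$ rule finishes with $\fst(\tuple{a,b}) = a$. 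The symmetric computation gives $\snd(\q_{\Gamma, A\times B}) \circ \tuple{a,b} = b$. For uniqueness, if $g \in \Tm(\Gamma \cext X, A\times B)$ satisfies both projection equations, the same unfolding shows $\fst(g) = a$ and $\snd(g) = b$, so the $\eta$-law yields $g = \tuple{\fst(g), \snd(g)} = \tuple{a,b}$.

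The main obstacle is notational and bookkeeping rather than conceptual: one must carefully keep apart the two superficially identical pairings $\langle -,-\rangle$ --- the comprehension substitution $\tuple{\gamma, t}$ of the underlying cwf and the term-level product pairing $\tuple{a,b}$ of the $\times$-structure --- and track how composition in $\Ty(\Gamma)$ threads the former through terms while the universal property concerns the latter. The one genuinely non-formal point is the derived substitution-stability of $\fst$ and $\snd$ above, which is exactly what licenses moving the comprehension substitution inside the projection term; everything else is a direct appeal to the $\beta$, $\eta$, and comprehension equations.
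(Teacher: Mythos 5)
Your proof is correct, and it is precisely the ``straightforward'' verification that the paper explicitly omits: unfold composition in $\Ty(\Gamma)$, apply the comprehension law $\q[\tuple{\gamma,t}]=t$, and finish with the $\beta$/$\eta$ equations of the $\None$- and $\times$-structures. The one point genuinely worth recording --- that substitution-stability of $\fst$ and $\snd$ is derivable from stability of pairing together with surjective pairing, since Definition~\ref{def:times_struct} does not postulate it --- is exactly the observation you make, and it is handled correctly.
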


We omit the (straightforward) proof. 
This entails that given an scwf with finite product types $\C$, there is a canonical
cartesian category with structure, the cartesian category of closed
terms $\Tycat_\C(1_\C)$. 

\begin{proposition}
There is a functor
\[
\CF : \Scwf^{\None, \times} \to \CCs
\]
which to any scwf with products $\C$ associates $\Tycat_\C(1_\C)$, and to
any structure-preserving cwf-morphism $F : \C \to \D$ associates $\CF(F) :
\Tycat_\C(1_\C) \to \Tycat_\D(1_\D)$ given by the action of $F$ on types and terms.
\end{proposition}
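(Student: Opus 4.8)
The plan is to verify, in turn, that $\CF$ is well-defined on objects, that it sends a structure-preserving morphism $F$ to a bona fide strict cartesian functor $\CF(F)$, and that $\CF$ itself respects identities and composition. Each verification reduces to unfolding the strict preservation properties bundled into the definition of a structure-preserving cwf-morphism, namely that $F$ preserves $1$, context comprehension (hence $\p$, $\q$, and tupling $\tuple{-,-}$), substitution, and the $\None$- and $\times$-structures on the nose. The object assignment needs no further work: by the preceding lemma, $\Ty_\C(1_\C)$ is a cartesian category with structure, so $\C \mapsto \Ty_\C(1_\C)$ lands in $\CCs$.

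For the morphism assignment, I would first check that $\CF(F)$ is well-typed. On objects (types) set $\CF(F)(A) = F(A)$. A morphism $A \to B$ in $\Ty_\C(1_\C)$ is a term $b \in \Tm(1_\C \cext A, B)$; since $F$ preserves the terminal object and context comprehension on the nose, $F(1_\C \cext A) = 1_\D \cext F(A)$, so $F(b) \in \Tm_\D(1_\D \cext F(A), F(B))$ is precisely a morphism $F(A) \to F(B)$ in $\Ty_\D(1_\D)$. Next I would check that $\CF(F)$ is a functor: preservation of identities uses that $F$ preserves $\q$ on the nose, giving $\CF(F)(\q_{1_\C, A}) = \q_{1_\D, F(A)} = \id_{1_\D, F(A)}$; preservation of composition unfolds $c \circ b = c[\tuple{\p, b}]$ and combines preservation of substitution with preservation of tupling and projections, so that $F(c[\tuple{\p, b}]) = F(c)[\tuple{\p, F(b)}] = \CF(F)(c) \circ \CF(F)(b)$.

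Then I would check that $\CF(F)$ is \emph{strict} cartesian. Preservation of the terminal object and of products is immediate from preservation of the $\None$- and $\times$-structures: $\CF(F)(\None) = F(\None) = \None$ and $\CF(F)(A \times B) = F(A \times B) = F(A) \times F(B)$. Preservation of the projections is the step that combines several facts: the first projection of $A \times B$ in $\Ty_\C(1_\C)$ is $\fst(\q_{1_\C, A \times B})$, and applying $F$ while invoking preservation of $\q$, of $\times$, and of the $\fst$ term former yields $\fst(\q_{1_\D, F(A) \times F(B)})$, which is exactly the first projection in $\Ty_\D(1_\D)$; the case of $\snd$ is symmetric. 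Finally, functoriality of $\CF$ follows because $\CF(F)$ is defined pointwise from the action of $F$ on types and terms, and composition of cwf-morphisms is componentwise, so $\CF(\id_\C) = \id_{\Ty_\C(1_\C)}$ and $\CF(G \circ F) = \CF(G) \circ \CF(F)$.

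The main obstacle is not conceptual but a matter of book-keeping: one must check that the chosen cartesian structure on $\Ty_\C(1_\C)$ — whose projections are built \emph{indirectly} as $\fst(\q)$ and $\snd(\q)$ rather than as primitive data — is transported \emph{strictly}, which forces one to invoke strict preservation of context comprehension and strict preservation of the product-type structure simultaneously in the same computation. Once this is organized, every remaining equation is a direct consequence of the preservation laws, and no coherence or up-to-isomorphism issues arise precisely because all morphisms here are strict.
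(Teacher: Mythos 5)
Your verification is correct and is exactly the routine unfolding of the strict preservation properties that the paper leaves implicit (the paper states this proposition without proof, treating it as straightforward). The one point you rightly flag as needing care --- that the projections $\fst(\q)$, $\snd(\q)$ of the derived cartesian structure are preserved because $F$ strictly preserves both $\q$ and the $\times$-structure term formers --- is indeed the only step that combines two separate preservation hypotheses, and your treatment of it is accurate.
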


\subsubsection{From cartesian categories to scwfs}

Since $\CF$ forgets the structure of
contexts and only remembers closed types, we expect a functor $\LF$ in the opposite
direction to somehow reconstruct contexts. There are two natural candidates for
this. The first is to reverse the effect of $\CF$ by reconstructing
the context formally, in an operation analogous to the construction of the
cartesian category of \emph{polynomials} in Lambek and Scott. We will
detail this below. In this way we do not directly get an equivalence, because
if $\C$ is an arbitrary scwf, the contexts of $\LF\CF(\C)$ are
\emph{generated inductively} from types. However, we get an equivalence for
\emph{contextual} scwfs.

The other option is to let the category of contexts of $\LF(\C)$ be $\C$, reflecting the dual role of objects in cartesian categories as both
contexts and types. Context comprehension is defined
via finite products. As simple as it looks, this construction does \emph{not}
yield an equivalence even when restricting scwfs. (It would if one restricted to
democratic scwfs such that for each $\Gamma \in \C_0$, we have
$\Gamma = 1\cext \overline{\Gamma}$, but this is not a natural hypothesis since it is
not satisfied by the syntax). It is, however, behind the
\emph{biequivalence} between scwfs and cartesian categories as property that we shall discuss in the following subsection.
\begin{definition}
If $\C$ is a cartesian category with structure, we define an scwf $\LF(\C)$ analogously to the
free scwf in Proposition \ref{prop:free_scwf}. The set of \emph{types} is $\Ty
= \C_0$. We define
\begin{eqnarray*}
\LF(\C)_0 &=& \List(\C_0)\\
\LF(\C)(\Gamma, [A_1, \dots, A_n]) &=& \C(\Pi \Gamma, A_1) \times \dots \times
\C(\Pi \Gamma, A_n)
\end{eqnarray*}
where $\Pi [B_1, \dots, B_m] = (\dots (B_1 \times B_2) \dots \times
B_m)$ and $\Pi [] = 1$. The \emph{terms} are 
\[
\Tm(\Gamma, A) = \C(\Pi \Gamma, A).
\]
Substitution is defined as 
\[
a[(\gamma_1, \dots, \gamma_n)] = a \circ \tuple{\tuple{\dots\tuple{\gamma_1,
\gamma_2}, \dots, },\gamma_n} \in \Tm(\Delta, A)
\]
for $(\gamma_1, \dots, \gamma_n) : \Delta \to [A_1, \dots, A_n]$ and $a \in
\Tm(\Gamma, A)$ and \emph{composition} in $\LF(\C)$ by
$(a_1, \dots, a_n) \circ \gamma = (a_1[\gamma],
\dots, a_n[\gamma])$. 

For $\Gamma = [A_1, \dots, A_{n}]$ and $1\leq i \leq n$, we
write $\var_n(i) \in \Tm(\Gamma, A_i)$ for the corresponding variable, obtained
as the $n$-ary projection.
For $\Gamma = [A_1, \dots, A_{n}] \in \LF(\C)_0$ and $A \in \Ty$ we let $\Gamma \cext A = [A_1, \dots, A_{n},A]$.
The \emph{projections} are defined by
\begin{eqnarray*}
\p_{\Gamma, A} &=& (\var_{n+1}(1), \dots, \var_{n+1}(n))\\
\q_{\Gamma, A} &=& \var_{n+1}(n+1)\,.
\end{eqnarray*}

We thus get an scwf with finite product types: the
$\None$-structure is the terminal object of $\C$ and
$A\times B$ is given by the cartesian product of $\C$. If $c \in
\Tm(\Gamma, A\times B)$, the projections
\[
\fst_{A, B}(c) = \fst_{A, B} \circ c
\qquad
\qquad
\snd_{A,B}(c) = \snd_{A, B} \circ c
\]
are immediate. (Note the overloading of $\fst$ and $\snd$.)
\end{definition}

We observe that our construction yields that $\CF(\LF(\C)) = \C$
for each cartesian category with structure $\C$.
This construction can be lifted to a functor
\[
\LF : \CCs \to \Scwf^{\None, \times}
\]
where, given $F : \C \to \D$, $\LF(F) : \LF(\C) \to \LF(\D)$ is obtained by letting
$F$ act component-wise. It is thus clear that all structure is
preserved. Finally, $\CF$ and $\LF$ do \emph{not} yet form an equivalence.
Indeed, $\LF(\CF(\C))$ is always \emph{contextual} since its contexts are
inductively generated, whereas $\C$ might not be. For instance, any context of
$\C$ which is not obtained as an iterated context extension of types is not
accounted for in $\LF(\CF(\C))$. However, we have:

\begin{theorem}\label{th:eq_cc}
The functors $\CF$ and $\LF$ form an equivalence of categories:
\[
\xymatrix@C=60pt{
\CCs	\ar@/^1pc/[r]^{\LF}&
\Scwf^{\None,\times}_{\ctx}
	\ar@/^1pc/[l]^{\CF}
}
\]
\end{theorem}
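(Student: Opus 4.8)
The plan is to exhibit natural isomorphisms $\CF \circ \LF \cong \id_{\CCs}$ and $\LF \circ \CF \cong \id_{\Scwf^{\None,\times}_{\ctx}}$; by the definition of equivalence of categories this is exactly what is required. The first isomorphism is essentially free: we have already observed that $\CF(\LF(\C)) = \C$ on the nose for every cartesian category with structure $\C$, and since the action of $\LF$ on a strict cartesian functor $F$ is just $F$ acting component-wise, restricting back to the closed fragment gives $\CF(\LF(F)) = F$, so the strict equality $\CF \circ \LF = \id_{\CCs}$ is natural. Hence all the real content lies in the second isomorphism, and this is precisely where contextuality enters. Note also that $\LF \circ \CF$ does land in $\Scwf^{\None,\times}_{\ctx}$, since $\LF(\D)$ is always contextual (its contexts are lists, with length function given by list length).

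So fix a contextual scwf with finite product types $\C$; I would build an scwf-isomorphism $\Phi_\C : \LF(\CF(\C)) \to \C$ directly. On types $\Phi_\C$ is the identity, since both scwfs have type set $\Ty_\C$ (the types of $\C$ are the objects of $\CF(\C) = \Ty_\C(1)$, which are in turn the types of $\LF(\CF(\C))$). On contexts, a context of $\LF(\CF(\C))$ is a list $[A_1,\dots,A_n]$ of types, and I send it to the iterated comprehension $1 \cext A_1 \cext \dots \cext A_n$ in $\C$. Contextuality is exactly what makes this assignment a bijection onto $\Cobj$: the length function together with the uniqueness of the decomposition $\Gamma = \Delta \cext A$ supplies the inverse. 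On substitutions $\gamma = (\gamma_1,\dots,\gamma_m)$ I use iterated cwf-tupling $\langle -,- \rangle$ in $\C$, and on terms the transport described next.

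The key lemma underlying the term bijection is that in any scwf with finite product types there is a canonical context isomorphism
\[
1 \cext A_1 \cext \dots \cext A_n \;\cong\; 1 \cext \Pi[A_1,\dots,A_n],
\]
with $\Pi$ the iterated product type. This follows by Yoneda: context comprehension gives $\C(-, 1\cext A_1\cext\dots\cext A_n) \cong \Tm(-,A_1)\times\dots\times\Tm(-,A_n)$ (using that $1$ is terminal, and the $\None$-structure for the empty list), while the $\times$-structure gives $\Tm(-, \Pi[A_1,\dots,A_n]) \cong \Tm(-,A_1)\times\dots\times\Tm(-,A_n)$ and comprehension again gives $\C(-, 1\cext \Pi[A_1,\dots,A_n]) \cong \Tm(-,\Pi[A_1,\dots,A_n])$; the two represented presheaves are thus isomorphic. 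Since a term in $\Tm_{\LF\CF\C}([A_1,\dots,A_n], A) = \CF(\C)(\Pi[A_1,\dots,A_n], A) = \Tm_\C(1\cext\Pi[A_1,\dots,A_n], A)$, precomposition with this context isomorphism yields the desired bijection onto $\Tm_\C(1\cext A_1\cext\dots\cext A_n, A)$, defining $\Phi_\C$ on terms. One then checks that $\Phi_\C$ preserves the terminal object, the comprehension data $\p, \q, \langle-,-\rangle$, and the $\None$- and $\times$-structures on the nose, so that it is an isomorphism in $\Scwf^{\None,\times}_{\ctx}$.

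Finally I would verify naturality of $\Phi$ in $\C$: for every structure-preserving morphism $F : \C \to \D$ the evident square must commute. Here strictness pays off, since $F$ preserves comprehension on the nose, so $F(1\cext A_1\cext\dots\cext A_n) = 1\cext FA_1\cext\dots\cext FA_n$ and $F$ is compatible with the context bijections; and $F$ preserves the $\times$- and $\None$-structures, hence commutes with the canonical context isomorphisms used to transport terms. The main obstacle is not a single hard idea but the volume of bookkeeping: one must check that the Yoneda-induced term bijection is compatible with substitution (naturality in the acting context) and with all the product-type combinators, and that the two directions of $\Phi_\C$ genuinely compose to identities. Contextuality is the one indispensable hypothesis, used precisely to make the context bijection total and single-valued; without it $\LF(\CF(\C))$ would omit those contexts of $\C$ not presented as iterated extensions of types, and no such isomorphism could exist.
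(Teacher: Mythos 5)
Your proposal is correct and follows essentially the same route as the paper: the paper's proof simply notes that $\CF\LF = \id$ on the nose and that for contextual $\C$ there is an isomorphism $\C \cong \LF(\CF(\C))$ sending $1\cext A_1\cext\dots\cext A_n$ to $[A_1,\dots,A_n]$, which is exactly your $\Phi_\C$ (in the inverse direction). The only difference is that you spell out the detail the paper leaves to the reader, in particular the representability argument identifying $1\cext A_1\cext\dots\cext A_n$ with $1\cext\Pi[A_1,\dots,A_n]$ so that the term sets match.
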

\begin{proof}
It only remains to observe that for any
contextual scwf $\C$ with finite products types we have the isomorphism $\C \cong
\LF(\CF(\C))$, where this isomorphism sends a context $1\cext A_1
\cext \dots \cext A_n$ to $[A_1, \dots, A_n]$.
\end{proof}


\subsection{Finite products as property}

\subsubsection{Cartesian categories as property} 

We now define a notion of cartesian category where finite products are defined as a property of a category.

\begin{definition}
Let $\C$ be a category. It is \emph{cartesian (as a property)} if
\emph{there exists} a terminal object in $\C$, and if for any two objects $A, B
\in \C_0$, \emph{there exists} a cartesian product of $A$ and $B$, \emph{i.e.}
a triple $(P, \pi, \pi')$ with $P\in \C_0$, $\pi : P \to A$, $\pi' : P
\to B$ satisfying the usual universal property. 
\end{definition}

By ``there exists'', we mean \emph{mere existence}. The choice
of terminal objects and the assignment of $(P, \pi, \pi')$ from $A$ and $B$ are not part of
the structure. A cartesian
category is just a particular kind of category with no additional data.
Likewise, cartesian functors may be defined as:

\begin{definition}
A functor $F : \C \to \D$ is
\emph{cartesian} if the image of a terminal object is terminal, and the image
of a product $(P, \pi, \pi')$ is a product $(FP, F\pi, F\pi')$.
We let $\CCp^2$ be the $2$-category of small cartesian categories (with 
property) as objects, cartesian functors as $1$-cells, and natural
transformations as $2$-cells.
\end{definition}
Thus cartesian functors are just certain functors with no additional data. We regard $\CCp^2$ as a $2$-category because we will prove a \emph{biequivalence} rather than an equivalence. Indeed, scwfs
have chosen structure while cartesian categories (with property) do not. Going
from an scwf to a cartesian category and back the structure is forgotten and
then \emph{chosen} again, and we cannot recover the
original scwf up to isomorphism, only up to equivalence.

\begin{remark} 
We could also introduce a notion of {\em scwf with property}, where context comprehension is only a property and not part of the structure. We will discuss this option briefly in the chapter about full cwfs.
\end{remark}

\subsubsection{Pseudo scwf-morphisms} 
Previously, we defined a notion of strict scwf-morphism, but this does not match the notion of 
cartesian functor as property. To address this mismatch we need a notion of \emph{pseudo cwf-morphism} where structure is only preserved \emph{up to
isomorphism}. 
\begin{definition}
A \emph{pseudo scwf-morphism} from the scwf $\C$ to the scwf $\D$ consists of a functor
$F : \C \to \D$, a function $F^\Ty : \Ty^\C \to \Ty^\D$, and a family
\[
F^\Tm_A : \Tm^\C_A(-) \tto \Tm^\D_{F^\Ty A}(-)
\]
of natural transformations. We write $F^\Tm_{\Gamma, A}$ for
the component of $F^\Tm_A$ on $\Gamma$.

These data are subject to the conditions that \emph{(1)} $F1$ is terminal in
$\D$, and \emph{(2)} for all $\Gamma \in \C_0$, $A \in \Ty^\C(\Gamma)$, the
triple $(F(\Gamma\cext A), F(\p_{\Gamma, A}), F_{\Gamma, A}^\Tm(\q_{\Gamma,
A}))$ is a context comprehension of $F\Gamma$ and $F^\Ty(A)$ in $\D$. 
\end{definition}

For strict cwf-morphisms, the triple $(F(\Gamma\cext A), 
F(\p_{\Gamma, A}), F_{\Gamma, A}^\Tm(\q_{\Gamma, A}))$ must coincide with the context comprehension of $F\Gamma$ and $F^\Ty(A)$
chosen by the scwf structure of $\D$. Here, we drop that assumption.

For related reasons, the equivalence will work slightly differently than in
Section \ref{subsec:prod_struct}. There the
equivalence followed the slogan ``the cartesian category corresponding to a
scwf is its category of closed terms''. However, in Theorem~\ref{th:eq_cc}
products in a cartesian category $\C$ are used in a central way in the
definition of terms in $\LF(\C)$. These can be chosen once and for all, but
the preservation of products up to isomorphism leads to an unwieldy
definition of the functorial action of $\LF$. Instead, we adopt
a simpler approach under the slogan ``the cartesian category
corresponding to an scwf is its base category''. With the aim of getting a biequivalence rather than an equivalence, the requirement of
\emph{contextuality} will be replaced by the weaker \emph{democracy}. This additional structure must then also be preserved.

\begin{definition}
A \emph{democratic} pseudo scwf-morphism between democratic scwfs $\C$ and $\D$ additionally has, for each $\Gamma \in \C_0$, as
isomorphism $\d_\Gamma : F^\Ty(\overline{\Gamma}) \cong \overline{F \Gamma}$ in
the category $\Tycat^\C(1)$, subject to a coherence diagram expressing that
$F(\gamma_\Gamma) = \gamma_{F\Gamma}$ modulo some transports\footnote{Using
the existing structure one can define a canonical morphism from $\overline{F
\Gamma}$ to $F^\Ty(\overline{\Gamma})$ in $\Tycat^\C(1)$. The coherence diagram
amounts to $d_\Gamma$ being its inverse -- hence $F$ being democratic amounts
to requiring that the canonical map from $\overline{F \Gamma}$ to
$F^\Ty(\overline{\Gamma})$ is an iso.} (see \cite{ClairambaultD14}, p.19).  
\end{definition}

A democratic scwf always has finite product types defined by $\None = \overline{1}$ and $A\times B = \overline{1\cext A \cext B}$. So our
biequivalence holds without adding these as extra structure.

\begin{definition}
We write $\Scwf_{\dem}^2$ for the $2$-category with democratic small scwfs as
objects, democratic pseudo scwf-morphisms as $1$-cells, and natural
transformations between the underlying functors as $2$-cells.
\end{definition}

One may wonder why the definition of $2$-cells does not have a type component. Such a
component could be added, but then the
biequivalence requires a coherence law which makes it
redundant (see \cite{castellan:lmcs}, p.7 and the expanded
discussion in Appendix B).

\subsubsection{The biequivalence} 
\label{subsubsec:biequivalence_st}
We now provide the components of the
biequivalence. First we observe that there is a forgetful $2$-functor
\[
\CF^2 : \Scwf_\dem^2 \to \CCp^2
\]
which maps a democratic scwf to its base category, a democratic pseudo
cwf-morphism to its base functor, and leaves $2$-cells unchanged. That this
is well-defined relies on the facts that \emph{(1)} if $\C$ is a democratic scwf,
then, the base category $\C$ is cartesian (with property), because products in $\C$
may be defined as $\Gamma \times \Delta = \Gamma \cext \overline{\Delta}$; \emph{(2)}
if $F$ is a democratic pseudo cwf-morphism, then the fact that the base functor
preserves products follows from preservation of context comprehension and democracy.

We now construct a $2$-functor in the other direction.

\begin{proposition}
There is a $2$-functor $\LF^2 : \CCp^2 \to \Scwf_\dem^2$.
\end{proposition}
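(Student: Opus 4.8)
The plan is to build a $2$-functor $\LF^2 : \CCp^2 \to \Scwf_\dem^2$ that sends a cartesian category (with property) to a democratic scwf whose base category is that very category, mirroring the slogan that ``the cartesian category corresponding to an scwf is its base category''. Concretely, given a cartesian category $\C$, I would equip $\C$ itself as the base category of the scwf $\LF^2(\C)$, take $\Ty = \C_0$, and define $\Tm(\Gamma, A) = \C(\Gamma, A)$, so that types are objects and terms in context are morphisms. Context comprehension is then defined by chosen products: having fixed once and for all a terminal object and a product for each pair of objects (using choice, legitimate since these are \emph{mere existence} data we are free to pick representatives for), I would set $\Gamma \cext A = \Gamma \times A$, with $\p_{\Gamma,A}$ the first projection and $\q_{\Gamma,A} = \snd_{\Gamma,A}$ the second projection viewed as a term in $\Tm(\Gamma \cext A, A) = \C(\Gamma \times A, A)$; the universal property of context comprehension is exactly the universal property of the product. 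Democracy is immediate, since each context $\Gamma$ is already a type via $\overline{\Gamma} = \Gamma$ and $\gamma_\Gamma : \Gamma \cong 1 \cext \Gamma = 1 \times \Gamma$ is the canonical iso.

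Next I would define the action on $1$-cells. Given a cartesian functor $F : \C \to \D$, I would let $\LF^2(F)$ have base functor $F$, type component $F^\Ty = F$ on objects, and term component $F^\Tm_{\Gamma, A} : \C(\Gamma, A) \to \D(F\Gamma, F A)$ given by the morphism action of $F$. The naturality of $F^\Tm_A$ is just functoriality of $F$. The two conditions defining a pseudo scwf-morphism are then checked: $F1$ is terminal because $F$ is cartesian and preserves terminal objects, and the triple $(F(\Gamma \cext A), F(\p), F^\Tm(\q)) = (F(\Gamma \times A), F(\fst), F(\snd))$ is a context comprehension of $F\Gamma$ and $FA$ precisely because $F$ preserves products — this gives a genuine product of $F\Gamma$ and $FA$, which need not be the \emph{chosen} product in $\D$, whence the morphism is pseudo rather than strict. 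For the democratic structure I would supply the isomorphism $\d_\Gamma : F^\Ty(\overline{\Gamma}) = F\Gamma \cong \overline{F\Gamma} = F\Gamma$, which can be taken to be the identity (or the canonical comparison), and check the coherence diagram relating $F(\gamma_\Gamma)$ to $\gamma_{F\Gamma}$ up to the prescribed transports. On $2$-cells I would let $\LF^2$ act as the identity, sending a natural transformation between base functors to itself viewed as a natural transformation in $\Scwf^2_\dem$.

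Finally I would verify the $2$-functoriality conditions: preservation of identities and of horizontal and vertical composition of $1$- and $2$-cells. Since the base functor, type component, and term component of $\LF^2(F)$ are all built directly from $F$, and since $2$-cells are carried along unchanged, these checks reduce to functoriality of $F \mapsto F$ on the underlying data and are essentially bookkeeping.

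The main obstacle will be the democratic $1$-cell data and its coherence. The definitions of $\p_{\Gamma,A}$, $\q_{\Gamma,A}$, and the comprehension map $\tuple{\gamma, a}$ all depend on the \emph{chosen} products in $\C$ and $\D$, and a cartesian functor $F$ preserves products only up to a canonical isomorphism, not on the nose. Consequently the isomorphism $\d_\Gamma$ and the iso $\gamma_\Gamma$ must be threaded carefully through the coherence diagram of a democratic pseudo scwf-morphism, keeping track of the comparison isomorphisms $F(A \times B) \cong FA \times FB$ and the transports along them; this is exactly the point where the ``modulo some transports'' clause of the democratic-morphism definition does real work, and I would lean on \cite{ClairambaultD14}, p.19, for the precise coherence condition to be discharged.
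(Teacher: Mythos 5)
Your construction is exactly the paper's: choose a terminal object and binary products in each cartesian category, take the base category itself as the category of contexts with $\Ty = \C_0$ and $\Tm(\Gamma,A) = \C(\Gamma,A)$, define comprehension by products and democracy by $\Gamma \cong 1 \times \Gamma$, and extend cartesian functors to pseudo (not strict) scwf-morphisms because products are only preserved up to isomorphism. The proposal is correct and simply spells out in more detail what the paper's brief proof leaves implicit.
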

\begin{proof}
In every cartesian category (with property) $\C$ we choose a terminal object
$1$ and a product $(A\times B, \fst_{A,B}, \snd_{A,B})$ for every $A, B \in
\C_0$. 

We now turn $\C$ into an scwf in the following way. 
Its category of contexts is $\C$. Its set of types is $\Ty = \C_0$. If $\Gamma
\in \C_0$ and $A \in \Ty$, the terms are $\Tm_\C(\Gamma, A) = \C(\Gamma, A)$.
Context comprehension is given by finite products. Democracy is the
isomorphism $\Gamma \cong 1\times \Gamma$. Likewise, a cartesian functor $F :
\C \to \D$ is extended to types and terms in the obvious way, and yields a
democratic pseudo cwf-morphism. 
\end{proof}


\begin{theorem}\label{th:biequivalence_cc}
The $2$-functors $\CF^2$ and $\LF^2$ form a biequivalence of $2$-categories:
\[
\xymatrix@C=60pt{
\CCp^2    \ar@/^1pc/[r]^{\LF^2}&
\Scwf^2_\dem
        \ar@/^1pc/[l]^{\CF^2}
}
\]
\end{theorem}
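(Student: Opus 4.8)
The plan is to verify the standard criterion that a $2$-functor is a biequivalence: I will show that the two composites $\CF^2 \circ \LF^2$ and $\LF^2 \circ \CF^2$ are pseudonaturally equivalent to the respective identity $2$-functors, so that both ``unit'' and ``counit'' are equivalences. A simplifying observation throughout is that the $2$-cells on both sides are merely natural transformations between the underlying functors, and $\CF^2$ leaves them unchanged; hence all coherence data at the level of $2$-cells is trivial, and the work concentrates entirely on objects and $1$-cells.

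First I would dispatch the easy composite $\CF^2 \circ \LF^2$. Given a cartesian category (with property) $\C$, the scwf $\LF^2(\C)$ has $\C$ itself as its category of contexts, and $\CF^2$ extracts exactly this base category; likewise a cartesian functor $F$ is carried to $\LF^2(F)$ and then back to $F$. Since in $\CCp^2$ products are a mere property, with no chosen data to compare, this composite is literally the identity on $\CCp^2$, not merely up to isomorphism. In particular $\CF^2$ is (bi)essentially surjective on objects.

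The substance lies in the composite $\LF^2 \circ \CF^2$. For a democratic scwf $\C$, the scwf $\LF^2(\CF^2(\C))$ shares the base category $\C$, but its \emph{types} are the objects of $\C$, i.e.\ the contexts, and its terms are morphisms of $\C$; this is visibly different from $\C$, which carries a separate set $\Ty^\C$. I would construct a democratic pseudo scwf-morphism $\Phi_\C : \C \to \LF^2(\CF^2(\C))$ that is the identity on base categories, sends a type $A \in \Ty^\C$ to the context $1\cext A$ (regarded as a type of $\LF^2(\CF^2(\C))$), and sends a term $a \in \Tm^\C(\Gamma, A)$ to the morphism $\tuple{\tuple{}_\Gamma, a} : \Gamma \to 1\cext A$ of $\C$. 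Its pseudo-inverse $\Psi_\C$ uses democracy directly: a type $\Gamma$ of $\LF^2(\CF^2(\C))$, i.e.\ a context of $\C$, is sent to the representing type $\overline{\Gamma} \in \Ty^\C$, and a term $f \in \C(\Delta, \Gamma)$ to the element of $\Tm^\C(\Delta, \overline{\Gamma})$ obtained by post-composing with the democracy isomorphism $\gamma_\Gamma$. I would then check that $\Phi_\C$ preserves context comprehension up to canonical isomorphism, that $\Phi_\C$ and $\Psi_\C$ are mutually pseudo-inverse $1$-cells, and that both families are pseudonatural in $\C$.

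The main obstacle is coherence. Verifying that $\Phi_\C$ and $\Psi_\C$ are \emph{democratic} pseudo scwf-morphisms requires exhibiting the isomorphisms $\d_\Gamma$ and checking the coherence diagram relating $F(\gamma_\Gamma)$ to $\gamma_{F\Gamma}$ through the prescribed transports; and pseudonaturality of $\Phi$ demands that these choices commute, up to coherent isomorphism, with the action of arbitrary democratic pseudo scwf-morphisms on types and terms. These are precisely the bookkeeping computations carried out by Clairambault and Dybjer for the full dependent case, and I would appeal to their treatment (\cite{ClairambaultD14}, around p.19, and \cite{castellan:lmcs}) for the detailed transports, the simply typed case being a strict specialisation. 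Equivalently, the argument can be phrased as: $\CF^2$ is biessentially surjective (from the first composite) and locally an equivalence of hom-categories, where local essential surjectivity on $1$-cells amounts to extending any cartesian base functor $G : \C \to \D$ to a democratic pseudo scwf-morphism via $A \mapsto \overline{G(1\cext A)}$, while local full-faithfulness on $2$-cells is immediate since $2$-cells are already just natural transformations of base functors.
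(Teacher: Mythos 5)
Your proposal matches the paper's proof essentially exactly: the composite $\CF^2\LF^2$ is the literal identity, and your $\Phi_\C$ and $\Psi_\C$ are precisely the paper's unit $\eta_\C$ (with $\eta_\C^\Ty(A) = 1\cext A$, ${\eta_\C^\Tm}(a) = \tuple{\tuple{},a}$) and counit $\epsilon_\C$ (using democracy and $\q_{1,\overline{\Gamma}}[\gamma_\Gamma \circ {-}]$), with the coherence bookkeeping likewise deferred to Clairambault--Dybjer. The argument is correct and takes the same route as the paper.
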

\begin{proof}
We have $\CF^2\LF^2 = \Id_{\CCp}$. To obtain a biequivalence we must construct
pseudonatural transformations of $2$-functors (or pseudofunctors)
\[
\xymatrix{
\mathrm{\Id_{\Scwf^2_\dem}}
\ar@<0.5ex>[r]^{\eta}&
\LF^2\CF^2
\ar@<0.5ex>[l]^{\epsilon}
}
\]
which are inverses up to invertible modifications. Concretely, we construct,
for each democratic scwf $\C$, pseudo cwf-morphisms 
$\eta_\C : \C \to \LF^2\CF^2 \C$ and $\epsilon_\C : \LF^2\CF^2 \C \to \C$,
both with the identity functor as base component.
Besides we have $\eta_\C^\Ty(A) = 1.A$ and $\epsilon_\C^\Ty(\Gamma) =
\overline{\Gamma}$ on types, and 
\[
{\eta_\C^\Tm}_{\Gamma, A}(a) = \tuple{\tuple{}, a}
\qquad
{\epsilon_\C^\Tm}_{\Delta, \Gamma}(\gamma) = \q_{1, \overline{\Gamma}}[\gamma_\Gamma \circ \Gamma]
\]
on terms. These two pseudo cwf-morphisms are pseudonatural in $\C$, and
form an equivalence in $\Scwf^2_\dem$: $\eta_\C \circ \epsilon_\C$ and
$\mathrm{id}_{\CF^2\LF^2\C}$ (resp. $\epsilon_\C \circ \eta_\C$ and
$\mathrm{id}_{\C}$) are related by invertible $2$-cells with the identity as
component. Finally, by unfolding definitions it follows that the
invertible $2$-cells satisfy the coherence condition for modifications between
pseudonatural transformations.
\end{proof}

Together, Theorems \ref{th:eq_cc} and \ref{th:biequivalence_cc} give two
different points of view on the correspondence between scwfs and cartesian
categories. It is interesting that we need to use a biequivalence 
not just for Martin-L\"of's type theory and locally cartesian closed categories
as in \cite{ClairambaultD14}, but already for cartesian categories if
we omit chosen structure. For cartesian categories we can build both an
equivalence (with structure) and a biequivalence (with property), whereas it seems
that only the latter is possible for finitely complete categories and locally
cartesian closed categories. 
\subsection{Adding function types}

Before going on to the dependently typed case, we mention how to add
function types to the previous equivalences and biequivalences. First, we
recall:

\begin{definition}
A \emph{cartesian closed category (with structure)} is
a cartesian category (with structure) and an operation which to
$A, B \in \C_0$ assigns an object $A\tto B \in \C_0$ with an \emph{evaluation}
\[
\ev_{A, B} : (A\tto B) \times A \to B
\]
such that for all $f : C \times A \to B$, $\exists! h :
C\to A\tto B$ s.t. $\ev_{A, B} \circ (h \times A) = f$.

We write $\CCCs$ for the category having small cccs (with structure) as objects, and
as morphisms, functors $F : \C \to \D$ preserving the structure on the nose,
\emph{i.e.} $F(A\tto^\C B) = FA \tto^\D FB$, and $F(\ev^\C_{A, B}) =
\ev^\D_{FA,FB}$.
\end{definition}

Likewise, we may add function types to scwfs in the following way.

\begin{definition}\label{def:tto_structure}
A \emph{$\tto$-structure} on an scwf $\C$ consists of, for each $\Gamma \in
\C_0$ and $A, B\in \Ty$, a type $A\tto B$ together with term formers
\begin{eqnarray*}
\lambda_{\Gamma,A,B} &:& \Tm(\Gamma\cext A, B) \to \Tm(\Gamma, A \arrow B)\\
\ap_{\Gamma,A,B} &:& \Tm(\Gamma, A \arrow B) \times\Tm(\Gamma,A) \to
\Tm(\Gamma,B)
\end{eqnarray*}
s.t., for $a \in \Tm(\Gamma,A),b \in \Tm(\Gamma\cext A,B),c \in \Tm(\Gamma, A
\arrow B)$,$\gamma \in \C(\Delta,\Gamma)$:
\begin{eqnarray*}
\lambda_{\Gamma,A,B}(b)[\gamma] &=& \lambda_{\Delta,A,B}(b[\tuple{\gamma \circ
\p_{\Delta,A},\q_{\Delta,A}}]) \\
\ap_{\Gamma,A,B}(c,a)[\gamma]&=& \ap_{\Delta,A,B}(c[\gamma],a[\gamma])\\
\ap_{\Gamma,A,B}(\lambda_{\Gamma,A,B}(b),a) &=& b[\tuple{\id_\Gamma,a}]\\
\lambda_{\Gamma,A,B}(\ap_{\Gamma \cext A,A,B}(c[\p_{\Gamma,A}],\q_{\Gamma,A})) &=&
c\,.
\end{eqnarray*}

If $\C, \D$ have $\tto$-structure, a (strict) cwf-morphism $F : \C \to \D$
\emph{preserves} it if $F(A\tto^\C B) = FA \tto^\D FB$, and $F(\ap_{\Gamma, A,
B}^\C(c, a)) = \ap_{F\Gamma, FA, FB}^\D(F(c), F(a))$.
\end{definition}

\begin{remark}
An $\arrow$-type structure on $(\C,\Ty,\Tm)$ is equivalent to the data of
a binary type former $\arrow$ and a natural isomorphism of preheaves 
\[
\Tm(-\cext A,B) \cong \Tm(-, A\arrow B)\,,
\]
\emph{i.e.} a representation of $\Tm(-\cext A,B)$. 

For $\gamma : \Delta \to \Gamma$, the functorial action $\Tm(\gamma
\cext A, B)$ takes $b \in \Tm(\Gamma\cext A, B)$ to $b[\tuple{\gamma\circ \p,
\q}] \in \Tm(\Delta \cext A, B)$.
We recover
$
\ap_{\Gamma, A, B}(f,a) = \lambda^{-1}(f)[\tuple{\id,a}]
$
and derive the $\beta$ and $\eta$-rules, and vice versa.
\end{remark}

Let us write $\Scwf_\ctx^{\None, \times, \tto}$ for the category having as
objects small contextual scwfs with an $\None$-structure, a $\times$-structure and a
$\tto$-structure, and as morphisms the strict cwf-morphisms preserving this
structure on the nose. 

\begin{theorem}\label{th:eq_strict_cc}
The categories $\CCCs$ and $\Scwf_\ctx^{\None, \times, \tto}$ are equivalent.
\end{theorem}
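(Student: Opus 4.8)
The plan is to lift the equivalence of Theorem~\ref{th:eq_cc} from the cartesian level to the cartesian closed level. Since $\CF$ and $\LF$ already form an equivalence between $\CCs$ and $\Scwf^{\None,\times}_\ctx$, it suffices to check that they send $\tto$-structure to exponential structure and back, that this correspondence is compatible with the base equivalence, and that strict morphisms on both sides respect it. Concretely, adding a $\tto$-structure to a contextual scwf-with-products will correspond bijectively to equipping the associated cartesian category with exponentials, and dually.

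First I would extend $\CF$. Given a contextual scwf $\C$ with $\None$-, $\times$- and $\tto$-structure, Theorem~\ref{th:eq_cc} already makes $\CF(\C)=\Ty_\C(1)$ a cartesian category with structure, so it remains to install exponentials. For $A,B\in\Ty$ I take the exponential object to be the function type $A\tto B$, and I define $\ev_{A,B}\in\Ty_\C(1)((A\tto B)\times A,\,B)=\Tm(1\cext((A\tto B)\times A),B)$ by applying $\ap$ to the two components of $\q_{1,(A\tto B)\times A}$ recovered via $\fst$ and $\snd$. The universal property — for $f:C\times A\to B$ a unique $h:C\to A\tto B$ with $\ev\circ(h\times\id)=f$ — is obtained by currying with $\lambda$. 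The one delicate point is the bookkeeping of contexts: a morphism $C\times A\to B$ in $\Ty_\C(1)$ is a term in $\Tm(1\cext(C\times A),B)$, whereas $\lambda_{1\cext C,A,B}$ expects a term in $\Tm(1\cext C\cext A,B)$. I would therefore first record the canonical isomorphism $\Tm(\Gamma\cext(C\times A),B)\cong\Tm(\Gamma\cext C\cext A,B)$ induced by the $\times$-structure and context comprehension, transport $f$ across it, and then read off existence and uniqueness of $h$ from the $\beta$- and $\eta$-equations and the substitution law for $\lambda$.

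Next I would extend $\LF$. For a ccc $\C$ with chosen structure, $\LF(\C)$ has types $\C_0$ and $\Tm(\Gamma,A)=\C(\Pi\Gamma,A)$; I take the function type of $\LF(\C)$ to be the chosen exponential of $A$ and $B$ in $\C$, and define $\lambda$ by currying and $\ap$ by composition with $\ev$. Verifying the four equations of Definition~\ref{def:tto_structure} — the two substitution laws and the $\beta$- and $\eta$-rules — reduces to the standard identities of currying in a ccc once substitution in $\LF(\C)$ is unfolded as precomposition with a tuple. Because strict morphisms on both sides preserve exponentials (resp.\ $\tto$-structure) on the nose, the functorial actions of $\CF$ and $\LF$ inherited from Theorem~\ref{th:eq_cc} are automatically compatible with these assignments, so no new coherence condition on morphisms arises.

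Finally I would conclude. The identity $\CF(\LF(\C))=\C$ of Theorem~\ref{th:eq_cc} now also matches exponentials on the nose, since $\CF$ reads the exponential off exactly the function type that $\LF$ installed; and the natural isomorphism $\LF(\CF(\C))\cong\C$ of that theorem, sending $1\cext A_1\cext\cdots\cext A_n$ to $[A_1,\dots,A_n]$, preserves the $\tto$-structure because the function type it transports is in both cases the scwf's $A\tto B$. Hence the lifted functors give an equivalence $\CCCs\simeq\Scwf_\ctx^{\None,\times,\tto}$. I expect the main obstacle to be precisely the context-reindexing in the verification of the exponential's universal property inside $\CF(\C)$: the categorical content is routine once the isomorphism $\Tm(\Gamma\cext(C\times A),B)\cong\Tm(\Gamma\cext C\cext A,B)$ is set up and shown to commute appropriately with $\lambda$ and $\ap$.
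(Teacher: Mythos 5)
Your proposal is correct and matches the paper's approach: the paper's own proof is the one-line remark that Theorem~\ref{th:eq_strict_cc} is a ``straightforward extension of Theorem~\ref{th:eq_cc}, which boils down to the definition of evaluation from application and vice versa,'' which is precisely the translation you carry out (with the context-reindexing isomorphism $\Tm(\Gamma\cext(C\times A),B)\cong\Tm(\Gamma\cext C\cext A,B)$ as the only bookkeeping step). You have simply made explicit the details the paper leaves to the reader.
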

\begin{proof}
Straightforward extension of Theorem \ref{th:eq_cc}, which boils down to the
definition of evaluation from application and vice versa.
\end{proof}

This is our version of one of the main
results of the Lambek and Scott book, namely Theorem 3.11, the equivalence
between simply-typed $\lambda$-calculi and cartesian closed categories, where
$\Scwf_\ctx^{\None, \times, \tto}$ plays the role of the category of
typed $\lambda$-calculi. And indeed, the simply-typed $\lambda$-calculus arises as
the free scwf with $\tto$-structure over a set of types. If $\B$ is a set of
basic types, we consider $\B$-$\Scwf_\ctx^{\tto}$ to be the category with
objects scwfs with $\tto$-structure together with an interpretation function
$\intr{-}_\C : \B \to \Ty$; and as morphisms the strict scwf-morphisms that
preserve $\tto$-structure and commute with the interpretation. Then we have:

\begin{proposition}
For all sets $\B$, the category $\B$-$\Scwf^{\tto}$ has an initial object.
\end{proposition}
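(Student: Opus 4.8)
The plan is to mirror exactly the two constructions used for the free scwf (Proposition~\ref{prop:free_scwf}) and for the initial $\lambda\beta\eta$-ucwf, now carried out in the typed setting with function types. For the first construction I would observe that scwfs equipped with a $\tto$-structure and an interpretation $\intr{-} : \B \to \Ty$ form a generalized algebraic theory, and build its initial model by generators and relations (equivalently, as a higher inductive type). Concretely, I would define simultaneously, by mutual induction, the set $\Ty$ of types, the set $\Cobj$ of contexts, the hom-sets $\Sub{\Gamma}{\Delta}$ of substitutions, and the families $\Tm(\Gamma, A)$ of terms, taking as point constructors all the cwf-combinators ($\id$, $\comp$, substitution, the terminal maps, context comprehension $\cext$ with $\p$, $\q$ and tupling), the generators coming from $\B$, and the $\tto$-structure formers $\lambda$ and $\ap$; and as path constructors each scwf law together with the four equations of Definition~\ref{def:tto_structure}. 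Initiality is then precisely the recursion principle: any object of $\B$-$\Scwf^{\tto}$ furnishes an algebra for these constructors, and hence receives a unique strict structure-preserving morphism commuting with the interpretation.

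For the second construction I would exhibit the initial object as the simply-typed $\lambda$-calculus itself, extending the typed version of $\Ncat$ from Proposition~\ref{prop:free_scwf}. Here $\Ty$ is the set of types built from $\B$ under $\arrow$ (adding $\None$ and $\times$ if one also wants the product structure), contexts are lists of such types, and terms in $\Tm(\Gamma, A)$ are the well-scoped, well-typed $\lambda$-terms of type $A$ in context $\Gamma$ taken modulo $\beta\eta$-conversion, with variables realised as iterated projections exactly as for the free ucwf. Substitution is defined by induction on terms, with the clauses for $\lambda$ and $\ap$ as in the $\lambda\beta\eta$-ucwf case but now carrying the type annotations, and $\lambda$, $\ap$ are the evident term formers; this simply closes the term grammar of the typed $\Ncat$ under abstraction and application.

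The verifications of the scwf laws and of the four $\tto$-equations are routine, being the simply-typed refinement of the checks already performed for $\lambda\beta\eta$-ucwfs. I expect the main obstacle to lie in the second construction, namely in showing that substitution, $\lambda$ and $\ap$ descend to $\beta\eta$-equivalence classes and that the $\beta$- and $\eta$-laws together yield, for each $\Gamma$ and each $A, B$, the natural isomorphism $\Tm(\Gamma\cext A, B) \cong \Tm(\Gamma, A \arrow B)$ (the typed analogue of the earlier characterisation of $\lambda\beta\eta$-structures as natural isomorphisms). Once both objects are shown to be initial, the two constructions are canonically isomorphic by applying initiality in each direction, in the same manner as for the ucwf constructions formalised by Brilakis~\cite{Brilakis18}.
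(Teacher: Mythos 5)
Your proposal matches the paper's own proof: the paper likewise gives two constructions, one as a higher inductive type (generators and relations) turning the generalized algebraic theory of scwfs with $\tto$-structure into point and path constructors, yielding a typed $\lambda\sigma$-calculus, and one as the well-scoped simply-typed $\lambda\beta\eta$-calculus extending the typed $\Ncat$ of Proposition~\ref{prop:free_scwf}, with variables as projections, substitution defined by induction on terms, and the $\beta\eta$-equational theory. The additional remarks you make about descending to $\beta\eta$-classes and the equivalence of the two constructions correspond to details the paper delegates to the Agda formalisation of Brilakis.
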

\emph{Construction 1.}
Just as for ucwfs and scwfs, we can immediately turn the definition of scwf
with $\tto$-structure into an inductive definition of a free such. The resulting theory is a well-scoped variable-free version of the typed
$\lambda\sigma$-calculus with base types in $\B$.

\emph{Construction 2.} Alternatively, we can construct an scwf with $\tto$-structure
free over $\B$ from a well-scoped version of the simply-typed
$\lambda\beta\eta$-calculus. The construction follows that of Proposition
\ref{prop:free_scwf}, except that the terms are now inductively defined with the
following three constructors: 
\begin{eqnarray*}
\var_n(i)&:& \Tm([A_1, \ldots, A_i, \ldots, A_n],A_i)\ \ \ (i \in \Fin(n))\\
\lambda_{\Gamma,A,B} &:& \Tm(\Gamma \cext A, B) \to \Tm(\Gamma, A \arrow B)\\
\ap_{\Gamma,A,B} &:& \Tm(\Gamma, A \arrow B) \times\Tm(\Gamma,A) \to
\Tm(\Gamma,B)\,.
\end{eqnarray*}

The definition of substitution is then extended with 
\begin{eqnarray*}
\lambda_{\Gamma,A,B}(b)[\gamma] &=& \lambda_{\Delta,A,B}(b[\tuple{\gamma \circ
\p_{\Delta,A},\q_{\Delta,A}}]) \\
\ap_{\Gamma,A,B}(c,a)[\gamma]&=& \ap_{\Delta,A,B}(c[\gamma],a[\gamma])
\end{eqnarray*}
for $\gamma \in \C(\Delta,\Gamma)$.
Finally we quotient with the equivalence relation generated by $\beta\eta$-conversion:
\begin{eqnarray*}
\ap_{\Gamma,A,B}(\lambda_{\Gamma,A,B}(b),a) &\sim& b[\tuple{\id_\Gamma,a}]\\
\lambda_{\Gamma,A,B}(\ap_{\Gamma \cext A,A,B}(c[\p_{\Gamma,A}],\q_{\Gamma,A})) &\sim&
c
\end{eqnarray*}

The two constructions and the proof of their equivalence have been formalised in Agda by Brilakis \cite{Brilakis18}.

We can of course construct free objects in $\Scwf^{\None,\times,\arrow}$ and
other categories of scwfs with extra type structure in a similar way. Likewise,
we can show a \emph{biequivalence} between the $2$-category of cccs \emph{as
property} rather than structure, and the extension of the $2$-category
$\Scwf^2_\dem$ where scwfs additionally have a $\tto$-structure, which is
preserved up to isomorphism. We omit the details since they are similar to those in the proof
of Theorem \ref{th:biequivalence_cc}.

\section{Dependently typed categories with families}

We now return to full dependently typed cwfs. 
After discussing alternative definitions, we show an explicit
construction of a free cwf. We then add the type formers $\Id,
\Sigma,$ and $\Pi$ and give an overview of the biequivalence theorems
in Clairambault and Dybjer \cite{ClairambaultD11,ClairambaultD14}.
Finally we outline the construction of a bifree lccc and the proof of
undecidability of equality \cite{castellan:tlca2015,castellan:lmcs} in this lccc.

\subsection{Plain cwfs}
\label{subsec:plaincwfs}


%

\subsubsection{Alternative definitions}

%

\noindent\textbf{Context comprehension via representable presheaves.}
First, observe that the family valued presheaf $T : \Cop \to \Fam$ may
equivalently be given by two $\Set$-valued presheaves
\[
\Ty : \Cop \to \Set
\qquad
\qquad
\Tm : (\int^\C \Ty)^\op \to \Set
\]
where $\int^\C \Ty$ is the category of elements of
the first presheaf.  Context comprehension for
full cwfs is equivalent to requiring that for all $\Gamma \in \Cobj$ and $A \in \Ty(\Gamma)$, there is a representation $\Gamma \cext A$ of the presheaf
\begin{eqnarray*}
\Delta &\mapsto& \sum_{\gamma \in \C(\Delta,\Gamma)}
\Tm(\Delta,A[\gamma])\\
\delta &\mapsto& (\gamma,a) \mapsto (\gamma \circ \delta, a [ \delta ])
\end{eqnarray*}

\begin{remark} 
Alternatively, we could consider a weaker notion of cwf with {\em context comprehension as a property} rather than structure. We then only require that the above presheaves are {\em representable}, that is, we only require the mere existence of the representing presheaves.
\end{remark}
%

\bigskip

\noindent\textbf{Natural models.}
More radically, Awodey \cite{Awodey18} and Fiore \cite{Fiore12}
propose to replace the $\Fam$-valued presheaf $T : \C^\op \to \Fam$
by two set valued presheaves 
\[
\Ty : \Cop \to \Set
\qquad
\qquad
\Tm : \Cop \to \Set
\]
and a natural transformation $\mathsf{typeof} : \Tm \natto \Ty$.
One can then define context comprehension in terms of
representable natural transformations. Let $\Y : \C \to
\Set^{\Cop}$ be the Yoneda embedding. A natural transformation $\sigma : G
\natto F$ between presheaves on $\C$ is representable in the sense of
Grothendieck, if for all $C \in \Cobj$ and $c \in F(C)$, there are $D
\in \Cobj, p \in \C(D,C)$, and $d \in G(D)$, such that the following
diagram in the category of presheaves is a pullback: 
$$
\xymatrix{\Y(D) \ar@{=>}[r]^{d}
\pb{315}
\ar@{=>}[d]_{\Y(p)}
& G \ar@{=>}[d]^{\mathrm{\sigma}}\\
\Y(C) \ar@{=>}[r]_{c} & F
}
$$
where $c$ and $d$ in the diagram are shorthand for the corresponding respective natural transformations $F(-)(c)$ and $G(-)(d)$ from the Yoneda lemma.

Hence, $\mathsf{typeof} : \Tm \natto \Ty$ is representable provided
for all $\Gamma \in \Cobj$ and $A \in \Ty(\Gamma)$,  
there is $\Gamma\cext A \in \Cobj$, $\p_{\Gamma,A} \in
\C(\Gamma\cext A,\Gamma)$, and $\q_{\Gamma,A} \in\Tm(\Gamma\cext A)$, such that
the following diagram is a pullback:
$$
\xymatrix@C=30pt{
\C(-,\Gamma\cext A)  \ar@{=>}[r]^(.6){\q_{\Gamma,A}[-]} \ar@{=>}[d]_{\p_{\Gamma,A} \circ -}
		\pb{315} & \Tm \ar@{=>}[d]^{\mathsf{typeof}}\\
\C(-,\Gamma) \ar@{=>}[r]_{A[-]} & \Ty
}
$$
We emphasize that the function that maps $\Gamma$ and $A$ to the triple $\Gamma \cext A, \p_{\Gamma,A}, \q_{\Gamma,A}$ is part of the structure -- the natural transformation $\mathsf{typeof} : \Tm \natto \Ty$ is \emph{represented}.

Awodey \cite{Awodey18} and Newstead \cite{Newstead:phd} study this notion under the name
\emph{natural models}. We refer to their papers for further development of
the theory.
Note that this approach suggests an essentially algebraic view of
cwfs rather than a generalized algebraic one.
Here we have only one sort $\Tm(\Gamma)$ containing \emph{all} terms
$a$ of some type $A = \mathsf{typeof}(a) \in \Ty(\Gamma)$. In other
words terms are ``fibred'' over types. 

\begin{remark}
Alternatively, we get a notion of natural model with context comprehension as property if we only require the mere representability of the natural transformation $\mathsf{typeof}$. This notion is considered by Ahrens, Lumsdaine, and Voevodsky \cite{AhrensLV18}.
\end{remark}
\bigskip

\noindent\textbf{Categories with attributes.} There is a certain
redundancy in the definition of categories with families, since we can show (using context comprehension) that \emph{terms} are in one-to-one
correspondence with certain morphisms of the base category:
$$
\Tm(\Gamma,A) \cong \{ \gamma \in \C(\Gamma,\Gamma \cext A)\ |\
\p_{\Gamma,A} \circ \gamma = \id_\Gamma\}
$$
In other words, terms in $\Tm(\Gamma, A)$ correspond to \emph{sections}
of $\p_{\Gamma, A}$, the \emph{display map} for the type
$A$. 

We can thus remove the term part of cwfs and get the closely related notion of a
\emph{category with attributes} (cwa) \cite{HofmannM:intttl}. This consists of a category $\C$ with a
terminal object, a presheaf $\Ty : \Cop \to \Set$ for types and
substitutions, and an operation which 
given $\Gamma \in \C_0$ and $A\in \Ty(\Gamma)$ associates a
context $\Gamma \cext A$ and a ``display map'' $\p_{\Gamma, A} :
\Gamma \cext A \to \Gamma$; and for each $\gamma : \Delta \to
\Gamma$, a chosen \emph{pullback square}:
\[
\xymatrix{
\Delta \cext A[\gamma]  \ar[r]
	\pb{315}
\ar[d]_{\p_{\Delta\cext A[\gamma]}} & \Gamma \cext A  \ar[d]^{\p_{\Gamma,A}}\\
\Delta \ar[r]_{\gamma} & \Gamma
}
\]

This follows the idea of ``substitutions as pullbacks'' familiar in
categorical logic. This choice of pullbacks is finally
required to be \emph{split}, in the sense that the association of
substitutions to pullback squares is functorial. 

It is fairly easy to prove that categories with attributes are
equivalent to categories with families \cite{hofmann:cambridge}.
This proof and several other proofs relating models of dependent type
theory are formalized in the UniMath system by Ahrens, Lumsdaine,
and Voevodsky \cite{AhrensLV18}.

Categories with attributes predate categories with families. In fact
categories with families were originally introduced \cite{dybjer:torino} as a modification of cwas. The main point
of the change was to obtain a definition that can be expressed as a generalised
algebraic theory with a transparent connection with Martin-L\"of's explicit
substitution calculus formulation of dependent type theory.
This was achieved by making the family of
terms into an explicit part of the definition and formalize the sets
of types and terms and their substitution operations as a family
valued presheaf. In this way the pullback property of type
substitution could be removed from the definition since it can be
derived from the other part of the structure.

\subsubsection{A free cwf}

As in the previous sections, we next show how to
build a free cwf. Recall that in the simply-typed case, we built a free scwf over a
set $\B$ of basic types, and we could do the same here. However, for simplicity, and
because it suffices to prove the undecidability theorem, we will assume that there is
only one basic type $o$. The rules used for the construction of this plain free cwf (except
the rule for the base type) correspond to the general rules for dependent type
theory. This construction can then be extended to the case where we add rules for
specific type formers. 

The construction of initial ucwfs and free scwfs are rather immediate from the
definitions: simply turn the definitions into simultaneous inductive
definitions of the families of terms and substitutions, and then define
equality of terms and substitutions by another simultaneous inductive
definition.  Unfortunately, the construction of free full cwfs is no longer
as direct. What complicates the matter is the type-equality rule,
which means that typability of terms may depend on proofs of equality
of types. Thus we have to define equality of contexts, context
morphisms, types, and terms simultaneously with their elements. Apart
from this the recipe is similar: take the definition of the
generalised algebraic theory of cwfs and turn it into a mutual
inductive definition where all equality reasoning is made explicit.

To build the free cwf we first define raw contexts, raw substitutions, raw
types, and raw terms.  
\begin{eqnarray*}
\Gamma \in \RawCtx &::=& 1  \ |\ \Gamma\cext A\\
\gamma \in \RawSub \ &::=& \gamma \circ \gamma \ |\ \id_\Gamma \ |\ \langle\rangle_\Gamma \ |\ \p_{A} \ |\ \langle \gamma, a \rangle_A\\
A \in \RawTy &::=& o \ |\  A[\gamma]\\
a \in \RawTm &::=& a[\gamma] \ |\ \qI_A
\end{eqnarray*}
We then need to define the well-formed contexts and types, and the well-typed
substitutions and terms. In Martin-L\"of's substitution calculus these are defined by a system of inference rules for all the eight forms of judgments. Here we choose a more economical way, by only defining well-formed \emph{equal} contexts and types, and the well-typed \emph{equal} substitutions and terms. Thus we define four families of partial equivalence relations (pers), corresponding to the four forms of equality judgments, by a mutual inductive definition:
$$
\Gamma = \Gamma' \vdash
\qquad
\Gamma \vdash A = A' 
\qquad
\Delta \vdash \gamma = \gamma' : \Gamma
\qquad
\Gamma \vdash a = a' : A
$$
where $\Gamma, \Gamma' \in \RawCtx, \gamma, \gamma' \in \RawSub, A, A' \in \RawTy,$ and $a,a' \in \RawTm$. The basic judgment forms can then be defined as the reflexive instances of the pers:
\begin{itemize}
\item 
$\Gamma \vdash$ abbreviates $\Gamma = \Gamma \vdash$, 
\item
$\Gamma \vdash A$ abbreviates $\Gamma \vdash A = A$,
\item
$\Delta \vdash \gamma : \Gamma$ abbreviates  $\Delta \vdash \gamma = \gamma : \Gamma$, 
\item
$\Gamma \vdash a : A$ abbreviates $\Gamma \vdash a = a : A$.
\end{itemize}

The four families of partial equivalence relations (pers) are given by a simultaneous inductive definition with the following introduction rules:

\boxit[Per-rules]{
\begin{mathpar}
\scalebox{.92}{\inferrule
	{\Gamma = \Gamma' \vdash \\
         \Gamma' = \Gamma'' \vdash} 
	{\Gamma = \Gamma'' \vdash}		
}\and
\scalebox{.92}{\inferrule
	{\Gamma = \Gamma' \vdash} 
	{\Gamma' = \Gamma \vdash}
}\and
\scalebox{.92}{\inferrule
	{\Delta \vdash \gamma = \gamma' : \Gamma \\ 
	 \Delta \vdash \gamma' = \gamma'' : \Gamma} 
	{\Delta \vdash \gamma = \gamma'' : \Gamma}
}\and
\scalebox{.92}{\inferrule
	{\Delta \vdash \gamma = \gamma' : \Gamma} 
	{\Delta \vdash \gamma' = \gamma : \Gamma}
}\and
\scalebox{.92}{\inferrule
	{\Gamma \vdash A = A' \\
	 \Gamma \vdash A' = A''} 
	{\Gamma \vdash A = A''}
}\and
\scalebox{.92}{\inferrule
	{\Gamma \vdash A = A'} 
	{\Gamma \vdash A' = A}
}\and
\scalebox{.92}{\inferrule
	{\Gamma \vdash a = a' : A \\
	 \Gamma \vdash a' = a'' : A} 
	{\Gamma \vdash a = a'' : A}
}\and
\scalebox{.92}{\inferrule
	{\Gamma \vdash a = a' : A}
	{\Gamma \vdash a' = a : A}
}
\end{mathpar}
}

\boxit[Preservation rules for judgments]{
\begin{mathpar}
\scalebox{.92}{\inferrule
	{{\Gamma} = {\Gamma}' \vdashS \\
	 {\Delta} = {\Delta}' \vdashS \\
	 {\Gamma} \vdashS \gamma = \gamma': {\Delta}} 
	{{\Gamma}' \vdashS \gamma = \gamma' : {\Delta}'}
}\and
\scalebox{.92}{\inferrule
	{{\Gamma} = {\Gamma}' \vdashS \\
	 {\Gamma} \vdashS A = A'}
	{{\Gamma}' \vdashS A = A'}
}\and
\scalebox{.92}{\inferrule
	{{\Gamma} = {\Gamma}' \vdashS \\
	 \Gamma \vdash A = A' \\
         {\Gamma} \vdashS a = a' : A}
	{{\Gamma}' \vdashS a = a' : A'}
}
\end{mathpar}
}
\boxit[Congruence rules for operators]{
\begin{mathpar}
\scalebox{.92}{\inferrule
	{ } 
	{1 = 1 \vdash }
}\and
\scalebox{.92}{\inferrule
	{\Gamma = \Gamma' \vdash \\
	 \Gamma \vdash A = A'} 
	{\Gamma \cext A = \Gamma' \cext A'\vdash} 
}\and
\scalebox{.92}{\inferrule
	{\Gamma \vdash A=A' \\
	 \Delta \vdash \gamma = \gamma' : \Gamma} 
	{\Delta \vdash A[\gamma] = A'[\gamma']} 
}\and
\scalebox{.92}{\inferrule
	{\Gamma = \Gamma' \vdashS } 
	{\Gamma \vdashS \id_\Gamma = \id_{\Gamma'} : \Gamma} 
}\and
\scalebox{.92}{\inferrule
	{\Gamma = \Gamma' \vdashS}
	{\Gamma \vdash \emptySub_\Gamma = \emptySub_ {\Gamma'} : \emptyContext}
}\and
\scalebox{.92}{\inferrule
	{{\Gamma} \vdashS \delta = \delta' : {\Delta} \\
	 {\Delta} \vdashS \gamma = \gamma': {\Theta} } 
	{{\Gamma} \vdashS \gamma \circ \delta = \gamma' \circ \delta' : {\Theta}} 
}\and
\scalebox{.92}{\inferrule
	{\Gamma \vdash A = A'}
	{\Gamma \cext A \vdash \p_{A} = \p_{A'} : \Gamma}
}\and
\scalebox{.92}{\inferrule
	{\Gamma \vdash A = A' \\
	 \Delta \vdash \gamma = \gamma' : \Gamma \\
	 \Delta \vdash a = a' : A[\gamma]} 
	{\Delta\vdash \langle \gamma,a \rangle_A = \langle \gamma',a' \rangle_{A'} : \Gamma \cext A} 
}\and
\scalebox{.92}{\inferrule
	{\Gamma \vdash a = a' : A \\
	 \Delta \vdash \gamma = \gamma' : \Gamma} 
	{\Delta \vdash a[\gamma] = a'[\gamma'] : A[\gamma]} 
}\and
\scalebox{.92}{\inferrule
	{\Gamma \vdash A = A'}
	{\Gamma \cext A \vdash \qI_{A} = \qI_{A'} : A[\p_A]} 
}
\end{mathpar}
}
\boxit[Conversion rules]{
\begin{mathpar}
\scalebox{.92}{\inferrule
	{\Delta \vdash \theta : \Theta \\
	 \Gamma \vdash \delta : \Delta \\
	 \Xi \vdash \gamma : \Gamma} 
	{\Xi \vdash (\theta \circ \delta) \circ \gamma = \theta \circ (\delta \circ \gamma) : \Theta} 
}\and
\scalebox{.92}{\inferrule
	{\Gamma \vdashS \gamma : \Delta} 
	{\Gamma \vdash \gamma = \id_\Delta \circ \gamma : \Delta}
}\and
\scalebox{.92}{\inferrule
	{\Gamma \vdashS \gamma : \Delta} 
	{\Gamma \vdash \gamma = \gamma \circ \id_\Gamma : \Delta}
}\and
\scalebox{.92}{\inferrule
	{\Gamma \vdash A \\
	 \Delta \vdash \gamma : \Gamma \\
	 \Theta \vdash \delta : \Delta} 
	{\Theta \vdash A[\gamma \circ \delta] = (A[\gamma])[\delta]}
}\and
\scalebox{.92}{\inferrule
	{\Gamma \vdash A}
	{\Gamma \vdash A[\id_\Gamma] = A}
}\and
\scalebox{.92}{\inferrule
	{\Gamma \vdash a : A \\
	 \Delta \vdash \gamma : \Gamma \\
	 \Theta \vdash \delta : \Delta} 
	{\Theta \vdash a[\gamma \circ \delta] = (a[\gamma])[\delta] : (A[\gamma])[\delta]}
}\and
\scalebox{.92}{\inferrule
	{\Gamma \vdash a : A}
	{\Gamma \vdash a[\id_\Gamma] = a : A} 
}\and
\scalebox{.92}{\inferrule
	{\Gamma \vdash \gamma : 1} 
	{\Gamma \vdash \gamma = \emptySub_{\Gamma} : 1} 
}\and
\scalebox{.92}{\inferrule
	{\Gamma \vdash A \\
	 \Delta \vdash \gamma : \Gamma \\
	 \Delta \vdash a : A[\gamma]}
	{\Delta \vdash \p_A \circ \langle \gamma,a \rangle_A = \gamma : \Gamma} 
}\and
\scalebox{.92}{\inferrule
	{\Gamma \vdash A \\
	 \Delta \vdash \gamma : \Gamma \\
	 \Delta \vdash a : A[\gamma]} 
	{\Delta \vdash \qI_A[\langle \gamma,a \rangle_A] = a : A[\gamma]} 
}\and
\scalebox{.92}{\inferrule
	{\Delta \vdash \gamma : \Gamma \cext A}
	{\Delta \vdash \gamma = \langle \p_A \circ \gamma , \qI_A[\gamma] \rangle_A : \Gamma \cext A}
}
\end{mathpar}
}
These rules correspond to the \emph{general rules} for intuitionistic type theory, that is, the rules which are given before the rules for the type formers, see the discussion in Section \ref{subsubsec:inferencerules}.

Finally, we have a rule for the base type:
\boxit[Base type]{
\begin{mathpar}
\scalebox{.92}{\inferrule
	{ } 
	{1 \vdash o = o} 
}
\end{mathpar}
}

\begin{theorem}\label{th:free_cwf}
The cwf $\T$, defined with the following data:
\begin{itemize}
\item $\T_0 = \{ {\Gamma}\ |\ \Gamma \vdashS \} /\!\! =^c$, where
  ${\Gamma} =^c {\Gamma}'$ if ${\Gamma} = {\Gamma}' \vdashS$ is
  derivable. 
\item
  $\T([{\Gamma}],[{\Delta}]) = \{ \gamma\ |\ \Gamma \vdashS \gamma
  : {\Delta} \} /\!\! =^{\Gamma}_{\Delta}$
  where $\gamma =^{\Gamma}_{\Delta} \gamma'$ iff
  ${\Gamma} \vdashS \gamma = \gamma' : {\Delta}$ is derivable. Note that
  this makes sense since it only depends on the equivalence classes $[ \Gamma ]$ and $[ \Delta ]$ of
  $\Gamma$ and $\Delta$ (morphisms and morphism equality are preserved by object
  equality).
\item $\Ty_\T([{\Gamma}]) = \{ A\ |\ \Gamma \vdashS A
  \}/=^{\Gamma}$ where $A =^{\Gamma} B$ if $\Gamma \vdashS A =
  B$.
\item $\Tm_\T([{\Gamma}],[A]) = \{ a\ |\ \Gamma \vdashS a: A\} / =^{\Gamma}_A$ where $a =^{\Gamma}_A
a'$ if $\Gamma \vdashS a = a' : A$. 
\end{itemize}
is the free cwf on one base type.
\end{theorem}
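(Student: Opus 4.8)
The plan is to read ``free cwf on one base type'' as initiality in the category $\Cwf_o$ whose objects are cwfs $(\C, T_\C)$ equipped with a distinguished type $o_\C \in \Ty_\C(1)$, and whose morphisms are strict cwf-morphisms $F$ satisfying $F(o_\C) = o_{\C'}$. I would then split the argument into two independent parts: first, that the data in the statement genuinely defines an object of $\Cwf_o$; and second, that this object is initial.

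For the first part, I would verify that the four families of relations are partial equivalence relations (which is forced by the symmetry and transitivity rules together with the preservation rules) and that every raw operation respects them, so that composition, identity, substitution on types and terms, the empty substitution, and context comprehension all descend to the quotients. The category laws, the presheaf laws, the terminal-object law, and the comprehension laws then hold \emph{by construction}: each such law is literally one of the conversion rules, hence an instance of a derivable equality, hence an identity between equivalence classes. The only genuine checks are that the relevant judgments are preserved under the per-operations, e.g.\ that $\q_A$ lands in $\Tm_\T([\Gamma\cext A], [A[\p_A]])$, which follows from the congruence rules.

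For initiality, fix an object $(\C, o_\C)$ of $\Cwf_o$. I would define a partial interpretation $\intr{-}$ on raw syntax by recursion on its structure, sending each combinator to the corresponding cwf-operation of $\C$ and the base type $o$ to $o_\C$; for instance $\intr{\Gamma \cext A} = \intr{\Gamma}\cext\intr{A}$, $\intr{\gamma\comp\delta} = \intr{\gamma}\comp\intr{\delta}$, $\intr{\q_A} = \q_{\intr{\Gamma}, \intr{A}}$, and so on. The key lemma, proved by a single simultaneous induction on the derivations of the four equality judgments, states that whenever $\Gamma = \Gamma' \vdash$ then $\intr{\Gamma}$ and $\intr{\Gamma'}$ are defined and equal, and similarly for the other three judgment forms (so that, in particular, well-formed inputs receive defined interpretations). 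Each case matches a rule in the boxes: congruence rules follow from functoriality of the operations, conversion rules hold because $\C$ satisfies the very same cwf equations, and the preservation rules are handled by the clause that equal types receive equal interpretations. This lemma lets $\intr{-}$ descend to a function $F$ on equivalence classes, which is a strict cwf-morphism preserving $o$ by the definitional clauses above.

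Uniqueness is then an easy induction: any strict cwf-morphism $G : \T \to \C$ preserving $o$ must agree with $F$ on the base type and must commute with every cwf-operation, and since each equivalence class of $\T$ has a representative built from these operations, $G$ and $F$ agree on all of $\T$. I expect the main obstacle to be the simultaneous-induction lemma, specifically the coherence issue caused by the type-equality rule: because typability of a term depends on derivable type equalities, one cannot interpret terms after types in isolation, and one must carry enough of the statement (definedness together with respect for equality, for all four judgment forms at once) to make the induction go through. Organizing the interpretation as a partial operation on \emph{raw} syntax, rather than on derivations, is what makes this coherence manageable.
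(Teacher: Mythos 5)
Your proposal is correct and follows essentially the same route as the proof the paper defers to (a partial interpretation $\intr{-}$ on raw syntax, a soundness lemma proved by one simultaneous induction on the derivations of the four equality judgments showing that derivably equal expressions receive equal defined interpretations, descent to the quotient, and uniqueness by induction over raw representatives). You also correctly isolate the one real difficulty --- the coherence forced by the type-equality rule and the consequent need to interpret raw syntax partially rather than interpret derivations --- so there is nothing substantive to add.
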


By \emph{free cwf on one base type} we mean that it is initial in the
category $\Cwf^o$ having as objects small cwfs with a chosen type $o \in
\Ty(1)$, and as morphisms the strict cwf-morphisms which preserve the
chosen base type. We refer to Castellan, Clairambault, and Dybjer
\cite{castellan:tlca2015,castellan:lmcs} for  the proof of freeness
of $\T$. 

In the papers mentioned above, we show that the free
cwf is also \emph{bifree} in the fully dependent version $\Cwf_{\dem}^2$of the
$2$-category $\Scwf_{\dem}^2$ of Section
\ref{subsubsec:biequivalence_st} (with a base type). Before we
consider additional structure, let us define the morphisms of this $2$-category,
which will play an important role later on.

\begin{definition}\label{def:pseudomor}
  A \emph{pseudo-cwf morphism} from a cwf $\C$ to a cwf $\D$ is a
  pair $(F, \sigma )$ where $F : \C \rightarrow \D$ is a functor and
  for each $ \Gamma \in \C_0$, $ \sigma _ \Gamma $ is a
  \textbf{Fam}-morphism from $T \Gamma $ to $T'F \Gamma $ preserving
  the structure up to isomorphism. In particular there are
isomorphisms (again writing $F$ for all components):
\[
\begin{array}{rrcl}
  \theta _{A,  \gamma } :& FA[F \gamma] &\cong_{F\Gamma}&
F(A[\gamma])\hspace{30pt}\text{(for $\gamma :
\Gamma \to \Delta$)}\\
   !_{F} :& 1 &\cong& F1      \\
  \rho _{ \Gamma , A} :& F( \Gamma \cext A)  &\cong&  F \Gamma \cext FA                                                
\end{array}
\]
where $\cong_{F\Gamma}$ means that it is an isomorphism in the
category $\Ty^\D(F\Gamma)$ of types over $F\Gamma$ in $\D$. These
data must satisfy some coherence diagrams (see
\cite{ClairambaultD14}, Definition 3.1 for details). 
\end{definition}

We can now prove the following.

\begin{theorem}
The cwf $\T$ is \emph{bifree over one base type}, \emph{i.e.} it is
bi-initial in the $2$-category $\Cwf_o^2$ having as objects small cwfs with one base
type, as $1$-cells pseudo cwf-morphisms preserving the base type up
to iso, and as $2$-cells natural transformations between the base
functors.
\end{theorem}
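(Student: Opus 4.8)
The plan is to establish bi-initiality by showing that for every object $\D$ of $\Cwf_o^2$ the hom-category $\Cwf_o^2(\T,\D)$ is equivalent to the terminal category, \emph{i.e.}\ it is nonempty and there is exactly one $2$-cell between any two of its objects (such a $2$-cell being then automatically invertible, by the usual argument that a unique endo-$2$-cell must be the identity). This splits into an existence part --- there is at least one pseudo cwf-morphism $\T \to \D$ preserving the base type --- and a contractibility part --- any two such $1$-cells are connected by a unique $2$-cell.

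For existence, I would observe that an object of $\Cwf_o^2$ is nothing but an ordinary cwf equipped with a chosen base type $o \in \Ty(1)$, hence an object of the $1$-category $\Cwf^o$ of Theorem~\ref{th:free_cwf}. That theorem provides a \emph{strict} cwf-morphism $H : \T \to \D$ preserving $o$. A strict cwf-morphism is in particular a pseudo cwf-morphism in the sense of Definition~\ref{def:pseudomor}, with all coherence isomorphisms $\theta_{A,\gamma}$, $!_F$, $\rho_{\Gamma,A}$ equal to identities; in particular it preserves the base type up to (trivial) iso. Thus $H$ supplies the required $1$-cell, and it suffices to compare every other $1$-cell with this fixed $H$.

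For contractibility, given an arbitrary pseudo cwf-morphism $F : \T \to \D$, I would construct an invertible $2$-cell $\phi : H \Rightarrow F$ by induction on the inductive presentation of $\T$. Concretely one defines simultaneously a natural family of isomorphisms $\phi_\Gamma : H\Gamma \to F\Gamma$ in $\D$, together with isomorphisms $HA \cong FA$ on types (in $\Ty^\D(F\Gamma)$) and a matching compatibility on terms. In the base case, since $H$ is strict we have $H1 = 1$, and terminality of $F1$ forces $\phi_1$ to be the canonical iso $!_F : 1 \cong F1$. In the comprehension case one combines $\phi_\Gamma$, the type-level isomorphism, and the coherence iso $\rho_{\Gamma,A} : F(\Gamma \cext A) \cong F\Gamma \cext FA$ with the universal property of the comprehension $(F\Gamma \cext FA, \p_{F\Gamma,FA}, \q_{F\Gamma,FA})$ to obtain $\phi_{\Gamma\cext A}$, which is then forced to be an isomorphism. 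Because each component is pinned down by a universal property (terminality of $1$, and the uniqueness in the universal property of $\p_{\Gamma,A}, \q_{\Gamma,A}$), the construction is simultaneously determined and unique; this yields uniqueness of $\phi$, and the resulting unique invertible $2$-cells $H \Rightarrow F$ assemble, by composition, into a unique $2$-cell between any two objects, establishing the desired equivalence with the terminal category.

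The main obstacle is the coherence bookkeeping required to check that $\phi$ genuinely is a $2$-cell of $\Cwf_o^2$ --- that it is natural and compatible with the type and term actions of $H$ and $F$ --- and that the whole construction descends to the quotient of raw syntax by the four families of partial equivalence relations defining $\T$. Since these pers include the conversion and type-equality rules, the induction must run over the \emph{derivations} of the judgmental equalities, verifying at each inference rule that the coherence diagrams satisfied by $\theta_{A,\gamma}$, $!_F$, and $\rho_{\Gamma,A}$ (Definition~\ref{def:pseudomor}) make the relevant squares commute; respecting the type-equality rule, where the typing of a term can depend on an equality of types, is precisely the delicate point, exactly as in the analogous scwf argument behind Theorem~\ref{th:biequivalence_cc}. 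Once naturality and well-definedness on equivalence classes are established, invertibility is immediate from the construction, and the full coherence verification is carried out in \cite{ClairambaultD14,castellan:lmcs}.
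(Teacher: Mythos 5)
The paper gives no in-text proof of this theorem --- it only points to \cite{castellan:lmcs} --- and your outline is a faithful reconstruction of the strategy used there: the required $1$-cell is the strict initial morphism of Theorem \ref{th:free_cwf} regarded as a pseudo cwf-morphism with identity coherence isomorphisms, and the hom-categories are contracted by building a unique invertible $2$-cell out of $H$ by induction over the syntactic generation of $\T$ (with the passage from ``unique $2$-cells out of $H$'' to ``unique $2$-cells between arbitrary $f,g$'' correctly handled by pre-composition with the invertible $\phi_F$).

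The one step you assert too quickly is the \emph{uniqueness} of the $2$-cell. A $2$-cell here is a bare natural transformation between base functors, and the universal property of the comprehension $(F(\Gamma\cext A), F(\p_{\Gamma,A}), F^{\Tm}_{\Gamma,A}(\q_{\Gamma,A}))$ determines a morphism into $F(\Gamma\cext A)$ from a \emph{pair}: its composite with $F(\p_{\Gamma,A})$ \emph{and} the term obtained by substituting into $F^{\Tm}_{\Gamma,A}(\q_{\Gamma,A})$. Naturality of $\phi$ with respect to $\p_{\Gamma,A}$ pins down only the first component; nothing in the data of a natural transformation between base functors directly forces the term component of $\phi_{\Gamma\cext A}$. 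Showing that it is nevertheless determined --- using naturality against the morphisms $\tuple{\gamma,a}$ of $\T$ together with the coherence isomorphisms of Definition \ref{def:pseudomor} --- is precisely the delicate content of the bi-initiality proof in \cite{castellan:lmcs}, and is also why the $2$-cells can omit a type component. Relatedly, your existence argument needs the type-level isomorphisms $HA \cong FA$, which for the free cwf must themselves be constructed by induction on types (base case: preservation of $o$ up to iso; substitution case: the isomorphisms $\theta_{A,\gamma}$); you assert them but do not build them. Neither point derails the approach, but ``each component is pinned down by a universal property'' is not yet a proof of uniqueness.
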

\begin{proof}
We refer the reader to \cite{castellan:lmcs} for details and proofs. We remark that
$\T$ is democratic and also bi-initial in $\Cwf_{o,\dem}^2$.
\end{proof}

Previously we defined initial ucwfs and scwfs (with extra structure) in two ways: with and without explicit substitutions. The above construction of the free cwf gives rise to a calculus with explicit substitutions. There is an analogous construction of a free cwf where substitution is instead defined as a meta-operation, which is close to the standard formulation of the general rules for dependent type theory. We do not have space to spell out this construction with implicit substitutions here, but refer to Streicher \cite{streicher:semtt}.

\subsection{Extensional identity types, $\Sigma$-types, and finite limits}
\label{subsec:sigma}

\subsubsection{Extensional identity types and $\Sigma$-types}

We now add extensional identity types and $\Sigma$-types to cwfs.

\begin{definition}
An $\Iext$-structure on cwf $\C$ consists of
for each $\Gamma \in \C_0$, $A\in \Ty(\Gamma)$ and $a, a' \in
\Tm(\Gamma, A)$, a type $\Id_A(a, a')$; and a term
\[
\refl_{A, a} \in \Tm(\Gamma,\Id_A(a, a))
\]
such that if $c \in \Tm(\Gamma,\Id_{A}(a,a'))$ then $a=a'$ and $c=
\refl_{A, a}$, and such that
\[
\Id_A(a,a')[\gamma] = \Id_{A[\gamma]}(a[\gamma], a'[\gamma])
\]
for any $\gamma \in \C(\Delta, \Gamma)$.
\end{definition}

This captures \emph{extensional}, rather than \emph{intensional},
identity types. The difference is that the two equations $a=a'$ and $c=\refl_{A, a}$
whenever $\Tm(\Gamma,\Id_{A}(a,a'))$ is inhabited are only valid for extensional identity types. It
follows that the reflexivity term is preserved by substitution as
well: $\refl_{A,a}[\gamma]$ is forced to coincide with
$\refl_{A[\gamma], a[\gamma]}$ since the two both inhabit
$\Id_{A[\gamma]}(a[\gamma], a[\gamma])$.

As in the previous sections, strict cwf-functors between cwfs equipped with an
extensional identity type structure are said to \emph{preserve it strictly} if 
the action on morphisms maps the components of the source to the components of
the target, on the nose. In the remainder of this paper, a more important role
will be played by morphisms preserving this structure up to isomorphism.
If $F : \C \to \D$ is a pseudo cwf-morphism where $\C$ and $\D$ are
equipped with an $\Iext$-structure, we say that it
\emph{preserves} it provided there is, for any $A\in \Ty^\C(\Gamma)$,
$a, a' \in \Tm^\C(\Gamma, A)$, an isomorphism
\[
F(\Id_{A}(a, a')) \cong \Id_{FA}(Fa, Fa')
\]
in $\Ty^\D(\Gamma)$. 


%


\begin{definition}
The definition of an $\None$-structure for a cwf is the same as
for an scwf, except that $\None$ now is also required to be stable
under substitution, \emph{i.e.}, a natural transformation of
type presheaves
$$
1 \natto \Ty(-)
$$
and, as before, a natural isomorphism between preheaves
$$
1 \cong \Tm_\C(-,\None)
$$
where again $1$ is the constant singleton presheaf.
\end{definition}

If a cwf is democratic it has an $\None$-structure, since
$\None$ may be defined as $\overline{1}\in \Ty(1)$.

\begin{definition}
A \emph{$\Sigma$-structure} on a cwf $\C$ consists of, for each
$\Gamma \in \C_0$, $A \in \Ty(\Gamma)$, $B \in \Ty(\Gamma \cext A)$,
a type $\Sigma(A, B) \in \Ty(\Gamma)$, and term formers
\begin{eqnarray*}
\fst_{\Gamma, A, B}(-) &:& \Tm(\Gamma, \Sigma(A, B)) \to \Tm(\Gamma,
A)\\
\snd_{\Gamma, A, B}(-) &:& \Pi_{c \in \Tm(\Gamma, \Sigma(A, B))} \Tm(\Gamma,
B[\tuple{\id, \fst(c)}])\\
\tuple{-, -} &:& (\Sigma_{a \in \Tm(\Gamma, A)} 
\Tm(\Gamma, B[\tuple{\id, a}])) \to \Tm(\Gamma, \Sigma(A, B))
\end{eqnarray*}
%
%
%
subject to the same equations as in Definition
\ref{def:times_struct}, plus the additional 
\[
\Sigma(A,B)[\gamma] = \Sigma(A[\gamma],B[\tuple{\gamma\circ
\p,\q}])\,.
\]
\end{definition}

A $\Sigma$-type structure thus gives rise to a natural transformation $\Sigma$ of
type presheaves
$$
\sum_{A \in \Ty_\C(\Gamma)} \Ty_\C(\Gamma\cext A) \natto \Ty_\C(\Gamma)
$$
and isomorphisms
$$
\sum_{a \in \Tm_\C(\Gamma,A)} \Tm_\C(\Gamma,B[ \langle \id, a \rangle ]) \cong \Tm_\C(\Gamma,\Sigma_\Gamma(A,B))
$$
which are stable under substitution (see Definition \ref{def:times_struct}).
Note the difference between this and the characterization of a $\times$-structure as natural isomorphisms between presheaves. Since $A \in \Ty(\Gamma)$ and $B\in \Ty(\Gamma\cext A)$ are dependent types the two sides of the isomorphism are no longer presheaves.

It follows by induction on the length of a context that a \emph{contextual} cwf with an $\None$-structure and a
$\Sigma$-structure is democratic.

As usual, a strict cwf-morphism between cwfs with $\Sigma$-structure
\emph{preserves} it if it maps the structure in the source cwf to the structure
in the target cwf on the nose. For \emph{pseudo}-morphisms, it turns out that
there is nothing to add. In any cwf $\C$ with a $\Sigma$-structure, for any
$\Gamma \in \C_0$, $A\in \Ty(\Gamma)$ and $B \in \Ty(\Gamma \cext A)$, we have
the isomorphism $\Gamma \cext A \cext B \cong \Gamma \cext \Sigma(A, B)$. Since
pseudo cwf-morphisms are already known to preserve context extension, it
follows that they automatically preserve $\Sigma$-structures, in the following
sense. 

\begin{proposition}\label{prop:preservsigma}
A pseudo cwf-morphism $F$ from $\C$ to $\D$, where both cwfs have a
$\Sigma$-structure, also preserves it in the sense that there is an
isomorphism:
\[
s_{A, B} : F(\Sigma(A, B)) \cong \Sigma(FA, FB[\rho_{\Gamma, A}^{-1}])
\]
such that projections and pairs are preserved, modulo some transports (notably
following $s_{A, B}$, see Proposition 3.5 in \cite{ClairambaultD14} for
details). 
\end{proposition}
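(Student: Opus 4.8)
The plan is to exploit the isomorphism of comprehensions recalled just before the statement: in any cwf with a $\Sigma$-structure the context $\Gamma \cext \Sigma(A,B)$ is isomorphic \emph{over $\Gamma$} to the iterated comprehension $\Gamma \cext A \cext B$, while pseudo cwf-morphisms already preserve comprehension up to the coherent isomorphism $\rho$. First I would make the $\C$-side isomorphism explicit. Writing $\bar\p = \p_{\Gamma, A} \circ \p_{\Gamma\cext A, B} : \Gamma\cext A\cext B \to \Gamma$, the map
\[
\iota_{A,B} = \tuple{\bar\p, \tuple{\q_{\Gamma,A}[\p_{\Gamma\cext A, B}], \q_{\Gamma\cext A, B}}} : \Gamma\cext A \cext B \to \Gamma\cext\Sigma(A,B)
\]
(the inner pairing being the $\Sigma$-pairing, well-typed by substitution stability $\Sigma(A,B)[\bar\p] = \Sigma(A[\bar\p], B[\tuple{\bar\p\circ\p,\q}])$) has an inverse built from $\fst$ and $\snd$ applied to $\q_{\Gamma,\Sigma(A,B)}$; that these are mutually inverse and commute with the projections to $\Gamma$ is exactly the content of the three $\beta\eta$-equations of Definition \ref{def:times_struct}. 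So $\iota_{A,B}$ is an isomorphism over $\Gamma$.

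Next I would push this across $F$ and paste in the coherence isomorphisms of Definition \ref{def:pseudomor}. Applying $F$ to $\iota_{A,B}$ and composing with $\rho_{\Gamma,\Sigma(A,B)}$ on the codomain and with $\rho_{\Gamma\cext A, B}$ together with the lift of $\rho_{\Gamma,A}$ on the domain yields an isomorphism
\[
F\Gamma \cext F(\Sigma(A,B)) \;\cong\; (F\Gamma \cext FA) \cext FB[\rho_{\Gamma,A}^{-1}]
\]
in $\D$; the transport $FB[\rho_{\Gamma,A}^{-1}]$ is forced because $\rho_{\Gamma,A} : F(\Gamma\cext A) \cong F\Gamma\cext FA$ is what moves $FB \in \Ty(F(\Gamma\cext A))$ to a type over $F\Gamma\cext FA$. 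On the other side, the $\Sigma$-structure of $\D$ supplies its own canonical isomorphism
\[
F\Gamma \cext \Sigma(FA, FB[\rho_{\Gamma,A}^{-1}]) \;\cong\; (F\Gamma\cext FA)\cext FB[\rho_{\Gamma,A}^{-1}].
\]
Composing the two gives an isomorphism of the two comprehensions of $F\Gamma$ that is the identity on the projection to $F\Gamma$; by the standard correspondence between isomorphisms in the category $\Ty^\D(F\Gamma)$ of types over $F\Gamma$ and isomorphisms of their comprehensions over $F\Gamma$ (a cwf fact I would record as a lemma), this descends to the desired $s_{A,B} : F(\Sigma(A,B)) \cong \Sigma(FA, FB[\rho_{\Gamma,A}^{-1}])$.

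Finally I would verify that $s_{A,B}$ transports projections and pairs correctly. Since $\iota_{A,B}$ and its inverse were assembled precisely from $\fst$, $\snd$ and the $\Sigma$-pairing, unfolding the definition of $s_{A,B}$ and using naturality of $F$ on terms, the substitution isomorphisms $\theta$, and substitution stability of $\Sigma$, one obtains that $F(\fst(c))$ agrees with $\fst(F(c))$ after transport along $s_{A,B}$, and likewise for $\snd$ and for pairs. The hard part will not be any single equation but the coherence bookkeeping: at each pasting step one must check that the composite genuinely lives over $F\Gamma$, which relies on the coherence diagrams of Definition \ref{def:pseudomor} relating $\rho$ to the projections $\p$, to the generic terms $\q$, and to $\theta$. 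Pinning down the exact transports behind ``projections and pairs are preserved'' is equally delicate — this is the reason for the hedge ``modulo some transports'' — and rather than chase the full diagram I would defer this bookkeeping to Proposition 3.5 of \cite{ClairambaultD14}, where precisely these transports are recorded.
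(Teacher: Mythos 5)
Your proposal follows the same route as the paper: exhibit the isomorphism $\Gamma\cext A\cext B\cong\Gamma\cext\Sigma(A,B)$ over $\Gamma$, push it through $F$ using the preservation of comprehension up to the coherence isomorphisms $\rho$, and descend from an isomorphism of comprehensions over $F\Gamma$ to the type isomorphism $s_{A,B}$, deferring the transport bookkeeping for projections and pairs to Proposition 3.5 of \cite{ClairambaultD14} exactly as the paper does. This is correct and matches the paper's argument, with somewhat more of the intermediate detail made explicit.
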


Let us write $\Cwf_\dem^{2,\Iext,\Sigma}$ for the $2$-category having as
objects small democratic cwfs with an $\Iext$-structure and a
$\Sigma$-structure; as morphisms the pseudo cwf-morphisms preserving these structures
up to isomorphism, and as $2$-cells natural transformations between
the base functors. 

\subsubsection{The biequivalence with finitely complete categories}
\label{subsubsec:bieq_fl}

In Theorem \ref{th:biequivalence_cc}, we proved a biequivalence
between democratic scwfs and cartesian categories. We now sketch the
dependently typed version: a biequivalence between democratic cwfs with
$\Iext$- and $\Sigma$-structures; and \emph{finitely complete}
(also called \emph{left exact}) categories. 

\begin{definition}
A category $\C$ is \emph{finitely complete} if it has all finite limits. A
functor $F : \C \to \D$ between finitely complete categories is \emph{left
exact} if it preserves finite limits: the image of a limiting
cone is a limiting cone.

We write $\FL$ for the $2$-category with small finitely complete categories
as objects, left exact functors as $1$-cells, and natural transformations as
$2$-cells. 
\end{definition}

In the light of Section \ref{sec:scwfs}, it is natural to insist that we
consider finitely complete categories to be categories with \emph{property},
rather than with additional structure. How could we make a corresponding
notion of finitely complete categories with structure? One could ask for a
cartesian category with structure additionally equipped with a choice of
equalizers; one could ask for a choice of pullbacks (and a terminal object); or
one could directly ask for a choice of a limit of any finite diagram. One could
then consider a category of these, where structure is preserved on the nose.
However, the strict equivalence of Theorem \ref{th:eq_strict_cc} does
\emph{not} to extend to this situation: whereas in the simply-typed case we may
prove an equivalence with structure or a biequivalence with property, it seems
that the only possibility to relate the two in the present case is the
biequivalence. We will comment again on that later.

Let us now give some information about the main ingredients of the biequivalence.
The first observation is that if $\C$ is a cwf with $\Iext$ and $\Sigma$-structures, then for each $\Gamma \in \C_0$ there is an equivalence 
between the category of types over a context $\Ty(\Gamma)$, and the
\emph{slice category} $\C/\Gamma$. Indeed, each type $A \in \Ty(\Gamma)$ yields
a \emph{display map} $\p_{\Gamma, A} : \Gamma \cext A \to \Gamma$ regarded as
an object in $\C/\Gamma$. In the other direction, any $\gamma : \Delta \to
\Gamma$ is isomorphic (in $\C/\Gamma$) to
\[
\p_{\Gamma,\Inv(\gamma)} : \Gamma \cext \Inv(\gamma) \to \Gamma\,,
\]
a projection corresponding to a type $\Inv(\gamma) \in \Ty(\Gamma)$, the
``inverse image'', defined as (a cwf formalization of) $x : \Gamma \vdash
\Sigma_{y:\Delta} (\gamma(x) = y)\mathrm{~type}$. Via this equivalence of
categories it follows that for each $\Gamma \in \C_0$ the slice category
$\C/\Gamma$ has products, that is, $\C$ has pullbacks. Since it has a
terminal object, it has all finite limits. Likewise, if $F : \C \to \D$ is a
$1$-cell in $\Cwf_\dem^{2,\Iext,\Sigma}$ then it preserves
pullbacks in $\C$ -- in fact, we have an equivalence:

\begin{lemma}
Let $\C$ and $\D$ be democratic cwfs with $\Iext$- and $\Sigma$-structures and
$F : \C \to \D$ be a pseudo cwf-morphism preserving democracy. Then, $F$
preserves the $\Iext$-structure if and only if $F$ preserves pullbacks.
\end{lemma}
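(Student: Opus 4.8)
The plan is to reduce both implications to a single categorical principle: for a functor that preserves finite products, preserving pullbacks is equivalent to preserving equalizers; and to observe that, under the hypotheses, $\Iext$-preservation is exactly equalizer-preservation, while $\Sigma$-preservation holds automatically by Proposition \ref{prop:preservsigma} and takes care of everything else. First I would record that $F$ preserves finite products: it preserves the terminal object since $!_F : 1 \cong F1$, and it preserves binary products since, exactly as in the democratic scwf case underlying Theorem \ref{th:biequivalence_cc}, products in a democratic cwf are given by $\Gamma \cext \overline{\Delta}$ and $F$ preserves context comprehension and democracy up to isomorphism.

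The key observation is that extensional identity types are equalizers. Given $a, a' \in \Tm(\Gamma, A)$ with associated sections $\tuple{\id_\Gamma, a}, \tuple{\id_\Gamma, a'} : \Gamma \to \Gamma \cext A$, the display map $\p : \Gamma \cext \Id_A(a, a') \to \Gamma$ is their equalizer: a substitution $\delta : \Delta \to \Gamma$ satisfies $\tuple{\id_\Gamma, a} \circ \delta = \tuple{\id_\Gamma, a'} \circ \delta$ iff $a[\delta] = a'[\delta]$, which by extensionality holds iff $\Id_A(a,a')[\delta] = \Id_{A[\delta]}(a[\delta], a'[\delta])$ is inhabited, iff $\delta$ factors through $\p$ — uniquely, since an inhabited extensional identity type is a subsingleton. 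Threading the coherence isomorphisms $\rho$, $\theta$ and $!_F$ of the pseudo cwf-morphism through the equivalence $\Ty(\Gamma) \simeq \C/\Gamma$, preservation of the $\Iext$-structure, that is the existence of isomorphisms $F(\Id_A(a,a')) \cong \Id_{FA}(Fa, Fa')$ in $\Ty^\D(F\Gamma)$, becomes equivalent to $F$ sending each such equalizer to an equalizer of the corresponding sections $\tuple{\id, Fa}, \tuple{\id, Fa'}$ in $\D$.

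With these in hand the two directions follow. For ``pullbacks imply $\Iext$'': since $F$ preserves finite products and pullbacks, it preserves equalizers (the equalizer of $f, g : X \to Y$ being the pullback of $\tuple{\id_X, f}$ and $\tuple{\id_X, g}$ into $X \times Y$), in particular the equalizers of sections above, so by the translation $F$ preserves $\Iext$. For ``$\Iext$ implies pullbacks'': $F$ preserves $\Sigma$ by Proposition \ref{prop:preservsigma}, hence preserves pullbacks of display maps, which are precisely the $\Sigma$-types $\Gamma \cext \Sigma(A, B[\p])$; and a general cospan is isomorphic, in the relevant slices, to a cospan of display maps via the inverse-image construction $\Inv$, which $F$ preserves because $\Inv$ is built from $\Sigma$ and $\Iext$. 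Hence $F$ preserves all pullbacks.

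The step I expect to be the main obstacle is the translation: making precise that preservation of the identity type up to isomorphism in $\Ty^\D(F\Gamma)$ corresponds exactly to preservation of the equalizer as a limit. This requires carefully tracking the pseudo cwf-morphism's coherence isomorphisms together with the equivalence $\Ty(\Gamma) \simeq \C/\Gamma$ on both the source and target sides, so that the comparison map witnessing $\Iext$-preservation is identified with the canonical comparison into the image equalizer. Everything else is either standard category theory or already available from Proposition \ref{prop:preservsigma} and the democratic product structure.
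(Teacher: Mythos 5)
Your proposal is correct and follows essentially the same route as the paper, whose own ``proof'' is only a pointer to \cite{ClairambaultD14} (Lemma 4.3 and Proposition 4.4): the easy direction is the standard identification of the extensional identity type's display map with the equalizer of the two sections $\tuple{\id,a},\tuple{\id,a'}$, and the hard (``only if'') direction reduces, exactly as the paper indicates, to preservation of the inverse image $\Inv(\gamma)$, which is built from $\Sigma$ (automatic by Proposition \ref{prop:preservsigma}), democracy (assumed) and $\Iext$ (the hypothesis). One small remark: preservation of pullbacks of display maps along arbitrary substitutions already follows for any pseudo cwf-morphism from preservation of context comprehension and the coherence isomorphisms $\theta,\rho$, so the appeal to $\Sigma$-preservation at that particular step is not really needed; the genuine use of Proposition \ref{prop:preservsigma} is in transporting $F(\Inv(\gamma))$ to $\Inv(F\gamma)$.
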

\begin{proof}
The harder direction is \emph{only if}, which boils down to 
the preservation of the inverse image. This can be proved from intricate
calculations on cwf combinators. Details appear in \cite{ClairambaultD14}, Lemma 4.3 and Proposition 4.4. 
\end{proof}

We can then show that there is a \emph{forgetful $2$-functor}:
\[
\CF : \Cwf_\dem^{2,\Iext,\Sigma} \to \FL\,.
\]

The other direction is much more complicated. The equivalence of categories
$\Ty(\Gamma) \simeq \C/\Gamma$ together with Seely's approach to interpreting
type theory in locally cartesian closed categories \cite{SeelyRAG:locccc}
suggest, from a finitely complete category $\C$, to redefine the \emph{types}
over $\Gamma$ as the objects of the slice category $\C/\Gamma$. Now the question is how to define \emph{substitution}, \emph{i.e.}
\[
-[\gamma] : (\C/\Gamma)_0 \to (\C/\Delta)_0
\]
for $\gamma : \Delta \to \Gamma$? In categorical logic,
substitution is usually defined \emph{by pullback}. However, the problem is that
for an arbitrary choice of pullbacks, there is no reason why this assignment should
be functorial. Consider the following two pullback squares:
\[
\xymatrix@R=10pt{
\cdot	\ar[r]
	\ar[d]
	\pb{315}&
\cdot	\ar[r]
	\ar[d]
	\pb{315}&
\Gamma\cext A
	\ar[d]^{\p_{\Gamma, A}}\\
\Omega	\ar[r]^{\delta}&
\Delta	\ar[r]^{\gamma}&
\Gamma
}
\qquad
\qquad
\xymatrix@R=10pt{
\cdot	\ar[r]
	\ar[d]
	\pb{315}&
\Gamma\cext A
	\ar[d]^{\p_{\Gamma, A}}\\
\Omega	\ar[r]^{\gamma \circ \delta}&
\Gamma
}
\]
There is no reason why the left hand side diagram, which is a composition of chosen pullbacks, and the right
hand side diagram, which is a chosen pullback, should coincide -- although they
are always isomorphic. In other words the codomain fibration is not
\emph{split}, whereas the fibration implicit in a cwf is always
split. This is a fundamental issue: Seely's proposed interpretation
\cite{SeelyRAG:locccc} sends types that are provably equal in the
syntax to morphisms in $\C$ that are only known to be isomorphic. This
\emph{coherence problem} may be solved in two ways: Curien
\cite{curien:fi} proposes to change the syntax by weakening equality
to isomorphism \emph{in the syntax}, enriching it with explicit
coercions between isomorphic types, and showing the extended syntax
equivalent to the original via a difficult coherence theorem. We refer to Curien, Garner, and Hofmann \cite{CurienGH14} for details.

In \cite{hofmann:csl}, Hofmann proposes instead to solve the problem by
exploiting a construction of B\'enabou \cite{BenabouJ:fibcfn}, which associates to each fibration an
equivalent split fibration. This construction can be extended to dependent types: given a category
$\C$ with finite limits, we build a cwf whose types are no longer just
objects of $\C/\Gamma$, but objects of $\C/\Gamma$ with a pre-chosen
substitution pullback, for every possible substitution -- such that this choice
is split. For details, the reader is referred to \cite{ClairambaultD14}, Section 5. As we
show there, Hofmann's construction yields a \emph{pseudofunctor}:
\[
\LF : \FL \to \Cwf_\dem^{2,\Iext,\Sigma}\,.
\]

This is \emph{not} a functor: when sending $F : \C \to \D$ to
$\Cwf_\dem^{2,\Iext,\Sigma}$, one must extend $F$ to types, \emph{i.e.}
display maps $\p$ together with a fixed choice of substitution pullbacks.
But the choice of substitution pullbacks for $\p$ in $\C$ does not suffice to completely
determine a choice of substitution pullbacks for $F\p$ in $\D$, hence those must
be \emph{chosen}; causing $\LF$ to fail functoriality on the nose.

For this reason, it seems unlikely that switching to finite limit categories
with structure would yield an equivalence of categories with strict maps,
unless one considers categories with finite limits with a split choice of
pullbacks\footnote{One reviewer pointed out that one should not attribute the
weakness of the equivalence solely to the pseudo-functoriality of $\LF$.
Indeed, the reviewer suggests that a version of $\LF$ based on Lumsdaine-Warren's
``left adjoint splitting'' \cite{DBLP:journals/tocl/LumsdaineW15} rather than
B\'enabou-Hofmann's construction might be functorial, yet still only yield a
\emph{bi}equivalence.}.

\begin{theorem}\label{th:biequivalence_fl}
There is a biequivalence of $2$-categories $\FL^2 \simeq
\Cwf_\dem^{2,\Iext,\Sigma}$.
\end{theorem}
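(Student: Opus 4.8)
The plan is to show that the forgetful $2$-functor $\CF$ and Hofmann's pseudofunctor $\LF$ constructed above form a pseudoinverse pair, exactly as in the proof of Theorem \ref{th:biequivalence_cc} but with the strict equalities there relaxed to coherent isomorphisms. Concretely, I would exhibit pseudonatural equivalences relating the two round-trips to the respective identities, together with invertible modifications witnessing the triangle laws. One composite is essentially trivial: by construction the base category of $\LF(\A)$ is $\A$ itself, and $\LF$ sends a left exact functor $F$ to a pseudo cwf-morphism whose base functor is again $F$; hence $\CF\LF \cong \mathrm{Id}_{\FL}$, essentially on the nose as in Theorem \ref{th:biequivalence_cc}. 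Since $\CF$ is a strict $2$-functor while $\LF$ is only a pseudofunctor, even this composite is a pseudofunctor and the comparison is at best up to canonical iso.

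The substantial direction is the other composite $\LF\CF$. Given a democratic cwf $\C$ with $\Iext$- and $\Sigma$-structures, $\CF(\C)$ is its (finitely complete) base category, and $\LF\CF(\C)$ rebuilds a cwf whose types over $\Gamma$ are objects of the slice $\C/\Gamma$ equipped with a split choice of substitution pullbacks. This reconstructed cwf is not equal to $\C$, but it is equivalent to it in $\Cwf_\dem^{2,\Iext,\Sigma}$. I would build a pseudo cwf-morphism $\eta_\C : \C \to \LF\CF(\C)$ whose base functor is the identity, sending a type $A \in \Ty(\Gamma)$ to its display map $\p_{\Gamma, A} : \Gamma \cext A \to \Gamma$ regarded as an object of $\C/\Gamma$, and a reverse morphism $\epsilon_\C$ whose type action uses the inverse-image operation, sending $\gamma : \Delta \to \Gamma$ to $\Inv(\gamma) \in \Ty(\Gamma)$. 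That both are well-defined pseudo cwf-morphisms preserving all the structure up to isomorphism rests on the equivalence $\Ty(\Gamma) \simeq \C/\Gamma$ recorded before the Lemma on pullback preservation: preservation of the $\Sigma$-structure is automatic by Proposition \ref{prop:preservsigma}, preservation of the $\Iext$-structure follows from that Lemma since the identity base functor trivially preserves pullbacks, and preservation of democracy is a direct calculation.

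It then remains to check that $\eta_\C$ and $\epsilon_\C$ are mutually inverse up to invertible $2$-cells, that they are pseudonatural in $\C$, and that the resulting data satisfy the modification axioms for a biequivalence. Equivalently, one may package the argument as: $\CF$ is biessentially surjective on objects (immediate from $\CF\LF \cong \mathrm{Id}_{\FL}$) and a local equivalence, the latter meaning that for all $\C, \D$ the induced functor $\Cwf_\dem^{2,\Iext,\Sigma}(\C, \D) \to \FL(\CF\C, \CF\D)$ is an equivalence of hom-categories. Essential surjectivity on $1$-cells of this functor is where one reconstructs the type-and-term action of a morphism from its base functor alone, again using $\Ty(\Gamma) \simeq \C/\Gamma$ together with the Lemma identifying $\Iext$-preservation with pullback preservation.

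The main obstacle is coherence, not mathematical content. Because $\LF$ genuinely fails to be a $2$-functor --- the choice of substitution pullbacks produced by the B\'enabou/Hofmann construction is not canonical, so the identities that would hold on the nose in the strict setting of Theorem \ref{th:eq_cc} now hold only up to a specified isomorphism --- every structural equation must be replaced by a coherent family of comparison isomorphisms, and one must verify the coherence diagrams of Definition \ref{def:pseudomor} for $\eta_\C$ and $\epsilon_\C$, the pseudonaturality conditions, and the modification axioms. The genuinely conceptual step, namely that $\eta_\C$ is an equivalence, is entirely controlled by the slice equivalence $\Ty(\Gamma) \simeq \C/\Gamma$ and the inverse-image construction $\Inv$, both already established; the detailed coherence verification is carried out in \cite{ClairambaultD14}.
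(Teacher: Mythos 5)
Your proposal is correct and follows essentially the same route as the paper: the forgetful $2$-functor $\CF$ and Hofmann's pseudofunctor $\LF$ as pseudoinverses, with $\CF\LF$ the identity on the nose, the unit/counit built from the slice equivalence $\Ty(\Gamma)\simeq\C/\Gamma$ and the inverse-image construction, and the coherence verifications deferred to \cite{ClairambaultD14}. The paper's own proof is only a two-line pointer to that reference, so your sketch is in fact a faithful (and somewhat more detailed) expansion of the intended argument.
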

\begin{proof}
Once the mediating pseudofunctors are constructed, the proof is fairly close
to that of Theorem \ref{th:biequivalence_cc}. The reader is referred
to \cite{ClairambaultD14}, Section 6 for details.
\end{proof}

\subsection{$\Pi$-types and locally cartesian closed categories}

We now add $\Pi$-types and extend the results of Section \ref{subsec:sigma}. We shall
sketch that Theorem \ref{th:biequivalence_fl} yields a biequivalence with locally
cartesian closed categories. 

\subsubsection{$\Pi$-types}


\begin{definition}
A $\Pi$-type structure on a cwf $\C$ consists of, for each $\Gamma \in \C_0$,
$A \in \Ty(\Gamma)$, $B \in \Ty(\Gamma \cext A)$, a type $\Pi(A, B) \in
\Ty(\Gamma)$, and term formers
\begin{eqnarray*}
\lambda_{\Gamma, A, B} &:& \Tm(\Gamma\cext A, B) \to \Tm(\Gamma, \Pi(A, B))\\
\ap_{\Gamma, A, B} &:& \Tm(\Gamma, \Pi(A, B)) \to \Pi_{a \in \Tm(\Gamma, A)}
\Tm(\Gamma, B[\tuple{\id, a}]
\end{eqnarray*}
subject to the equations of Definition \ref{def:tto_structure}, plus the
additional 
\[
\Pi(A, B)[\gamma] = \Pi(A[\gamma], B[\tuple{\gamma \circ \p, \q}])\,.
\]
\end{definition}

We note that a $\Pi$-type structure gives rise to a natural transformation $\Pi$ of type presheaves:
$$
\Gamma \mapsto \sum_{A \in \Ty_\C(\Gamma)} \Ty_\C(\Gamma\cext A) \natto \Ty_\C(\Gamma)
$$
and isomorphisms
$$
\Tm_\C(\Gamma\cext A,B) \cong \Tm_\C(\Gamma,\Pi_\Gamma(A,B))
$$
which are stable under substitution (see Definition \ref{def:tto_structure}).
Note the difference between this and the characterization of an $\arrow$-structure as
natural isomorphisms between presheaves. Since $A \in \Ty(\Gamma)$ and $B\in
\Ty(\Gamma\cext A)$ are dependent types $\Tm_\C(\Gamma\cext A,B)$ and $\Tm_\C(\Gamma,\Pi_\Gamma(A,B))$ are no longer families of presheaves.

Strict cwf-morphisms between cwfs with $\Pi$-structure \emph{preserve} it if
they map all components of the $\Pi$-structure in the source cwf to the same
component in the target cwf, on the nose. For \emph{pseudo-morphisms}, we
define:

\begin{definition}
A pseudo cwf-morphism $F$ from $\C$ to $\D$, where both cwfs have a
$\Pi$-structure, also \emph{preserves} it provided for all $\Gamma \in \C_0$,
$A\in \Ty(\Gamma)$ and $B \in \Ty(\Gamma \cext A)$ there is an isomorphism in
$\Ty^\D(F\Gamma)$
\[
i_{A, B} : F(\Pi^C(A,B)) \cong \Pi^{\D}(FA, FB[\rho_{\Gamma, A}^{-1}])
\]
such that application is preserved, modulo some transports (notably by $i_{A,
B}$, see Definition 9 in \cite{ClairambaultD14} for details).
\end{definition}

It is sufficient to require preservation of application, preservation of
abstraction then follows. This follows the situation in cartesian closed
categories with structure where evaluation is part of the
structure, whereas abstraction is defined uniquely by the universal property --
and although functors are only required to preserve evaluation, it follows that
they preserve abstraction too.

Let us write $\Cwf_\dem^{2,\Iext,\Sigma,\Pi}$ for the $2$-category where the
objects are small democratic cwfs with $\Iext$-structure,
$\Sigma$-structure and $\Pi$-structure; the morphisms are pseudo cwf-morphisms
preserving this structure (up to isomorphism), and the $2$-cells are natural
transformations between the base functors. 

\subsubsection{The biequivalence with locally cartesian closed categories}

Let us first recall locally cartesian closed categories.

\begin{definition}
A category $\C$ is \emph{locally cartesian closed (lccc)} if it has a
terminal object and if for all $\Gamma \in \C_0$, the slice category
$\C/\Gamma$ is cartesian closed.  
\end{definition}

Again, this definition is in terms of property rather than structure. This is
one of the two usual definitions of locally cartesian closed categories.
Equivalently, one could ask $\C$ to have finite limits, and require that for
all $\gamma : \Delta \to \Gamma$, the \emph{pullback functor} $\gamma^* :
\C/\Gamma \to \C/\Delta$, which associates to any $f : \cdot \to \Delta$ the
left hand side morphism of the pullback diagram
\[
\xymatrix{
\cdot	\ar[r]
	\ar[d]_{\gamma^*(f)}
	\pb{315}&
\cdot 	\ar[d]^f\\
\Delta	\ar[r]^\gamma&
\Gamma
}
\]
obtained via finite limits, has a right adjoint $\Pi_\delta :
\C/\Delta \to \C/\Gamma$. It is this right adjoint that was proposed by Seely
for the interpretation of $\Pi$-types.

As usual, this right adjoint to the pullback functor may be equivalently
described as the data of cofree objects. Instantiated here, a right
adjoint to $\gamma^* : \C/\Gamma \to \C/\Delta$ consists of, for all
$\delta : \cdot \to \Delta$ an object of $\C/\Delta$, an
object $\Pi_\gamma(\delta) : \cdot \to \Gamma$ in $\C/\Gamma$
together with a co-unit $\epsilon_\delta :
\gamma^*(\Pi_\gamma(\delta)) \to \delta$ (a morphism in $\C/\Delta$)
satisfying the universal property of co-free objects. These data,
along with the universal property, amount to a \emph{dependent
product diagram}\footnote{It was pointed out by a reviewer that those were called
``distributivity pullbacks'' by Weber \cite{weber}}: 
\[
\xymatrix{
&\cdot  \ar@/_/[dl]^{\epsilon_\delta}
        \ar[d]
        \ar[r]
        \pb{315}&
\cdot   \ar[d]^{\Pi_\gamma(\delta)}\\
\cdot   \ar[r]_\delta&
\Delta  \ar[r]_\gamma&
\Gamma
}
\]
which is universal among any such diagram over $\delta$ and $\gamma$, as
described below.
\[
\xymatrix{
&\cdot  \ar@/_/[ddl]
        \ar@/_/[dd]
        \ar@{.>}[d]
        \ar[r]
        \pb{315}&
\cdot   \ar@/_/[dd]
        \ar@{.>}[d]\\
&\cdot  \ar@/_/[dl]^{\epsilon_\delta}
        \ar[d]
        \ar[r]
        \pb{315}&
\cdot   \ar[d]^{\Pi_\gamma(\delta)}\\
\cdot   \ar[r]_\delta&
\Delta  \ar[r]_\gamma&
\Gamma
}
\]

In other words, a locally cartesian closed category may be
defined as a category $\C$ with finite limits such
that, additionally, for every $\cdot \stackrel{\delta}{\to} \Delta
\stackrel{\gamma}{\to} \Gamma$ in $\C$ there is a dependent product diagram
as above. A \emph{locally cartesian closed functor} may then be
defined as a left exact functor (the image of a terminal object is
terminal and the image of a pullback diagram is a pullback diagram)
such that the image of a dependent
product diagram is a dependent product diagram. There is a
$2$-category $\LCC^2$ having small lcccs as objects, locally cartesian
closed functors as $1$-cells, and natural transformations as
$2$-cells. 

Now we can build our biequivalence. First, we define a \emph{forgetful $2$-functor} 
\[
\CF : \Cwf_\dem^{2,\Iext,\Sigma,\Pi} \to \LCC^2\,,
\]
which omits all components and only
keeps the base category, as in Section \ref{subsubsec:bieq_fl}. We must prove that if $\C$ is the base
category of a democratic cwf with 
$\Iext$-structure, $\Sigma$-structure and $\Pi$-structure, then $\C$
is locally cartesian closed. This is straightforward: using
$\Pi$-types it is easy to show that for each $\Gamma$, $\Ty(\Gamma)$
is cartesian closed, but $\Ty(\Gamma)$ is equivalent to $\C/\Gamma$
as shown in Section \ref{subsubsec:bieq_fl}. Alternatively one may
construct dependent products: from $\cdot \stackrel{\delta}{\to}
\Delta \stackrel{\gamma}{\to} \Gamma$ one may construct an isomorphic
(in the obvious sense) sequence of projections via inverse
image:
\[
\Gamma \cext \Inv(\gamma) \cext \Inv(\delta)
\stackrel{\p}{\to} \Gamma
\cext \Inv(\gamma) \stackrel{\p}{\to} \Gamma\,.
\]

For any such sequence of projections $\Gamma \cext A \cext B \to
\Gamma \cext A \to \Gamma$ there is a dependent product diagram,
called the \emph{chosen dependent product diagram}:
\[
\xymatrix{
&\Gamma\cext A\cext \Pi(A, B)[\p_A]
        \ar@/_/[dl]_{\ev_{A, B}}
        \ar[d]^{\p_{\Pi(A, B)[\p_A]}}
        \ar[rr]
        \pb{315}&&
\Gamma\cext \Pi(A, B)
        \ar[d]^{\p_{\Pi(A, B)}}\\
\Gamma\cext A \cext B
        \ar[r]^{\p_B}&
\Gamma\cext A
        \ar[rr]^{\p_A}&&
\Gamma
}
\]
where $\ev_{A, B} = \tuple{\p, \ap(\q, \q[\p])}$. Combined with
inverse image, this shows that
any $\cdot \stackrel{\delta}{\to} \Delta \stackrel{\gamma}{\to}
\Gamma$ has a dependent product diagram. Dependent product diagrams
also permit a nice characterisation of the preservation of
$\Pi$-structures:

\begin{lemma}\label{lem:eq_pres_pi}
Let $F : \C \to \D$ be a pseudo cwf-morphism between cwfs with
$\Pi$-structure. Then $F$ preserves the $\Pi$-structure if and only if
the image of any chosen dependent product diagrams is a dependent product
diagram.
\end{lemma}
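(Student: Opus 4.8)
The plan is to prove the two implications separately, exploiting the fact that a chosen dependent product diagram is \emph{by construction} a dependent product diagram; so ``being a dependent product diagram'' is precisely the universal property that must be transported across $F$. Two preliminaries set the stage. First, since $F$ is a $1$-cell of $\Cwf_\dem^{2,\Iext,\Sigma,\Pi}$ it preserves democracy and the $\Iext$-structure, so by the lemma relating $\Iext$-preservation to pullback-preservation it preserves pullbacks; in particular the image of the pullback square inside any chosen dependent product diagram is again a pullback. Second, $F$ carries its comprehension isomorphisms $\rho_{\Gamma, A} : F(\Gamma\cext A) \cong F\Gamma\cext FA$, and the chosen dependent product diagram in $\D$ attached to the data $FA$, $FB[\rho_{\Gamma, A}^{-1}]$ is itself a dependent product diagram, with co-unit $\ev_{FA, FB[\rho^{-1}]} = \pair{\p}{\ap(\q, \q[\p])}$.

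For the \emph{only if} direction, assume $F$ preserves the $\Pi$-structure, so there is an isomorphism $i_{A, B} : F(\Pi(A, B)) \cong \Pi(FA, FB[\rho_{\Gamma, A}^{-1}])$ under which application, hence the evaluation morphism $\ev$, is preserved up to the prescribed transports. I would apply $F$ to the chosen dependent product diagram of $A, B$ over $\Gamma$, then compose each vertex with the relevant $\rho$ and $i_{A, B}$ isomorphisms, and check that the result is isomorphic, as a cone over the composable pair $\p_B$, $\p_A$ in $\D$, to the chosen dependent product diagram of $FA, FB[\rho^{-1}]$ over $F\Gamma$. The square remains a pullback since $F$ preserves pullbacks; the co-unit edge matches because application is preserved; and dependent product diagrams, being defined by a universal property, are stable under such isomorphisms of cones. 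Hence the image is a dependent product diagram.

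For the \emph{if} direction, fix $\Gamma, A, B$ and assume the image of the chosen diagram is a dependent product diagram. After transporting along the comprehension isomorphisms, I then have two dependent product diagrams over the same composable pair $F\Gamma\cext FA\cext FB[\rho^{-1}] \to F\Gamma\cext FA \to F\Gamma$: the image of the chosen diagram of $A, B$, and the chosen diagram of $FA, FB[\rho^{-1}]$ in $\D$. Since a dependent product diagram exhibits a cofree object for the pullback functor, it is unique up to a unique isomorphism commuting with the co-unit. This isomorphism of apexes, combined with $\rho$ and the fact that $F$ preserves display maps, descends to the required $i_{A, B} : F(\Pi(A, B)) \cong \Pi(FA, FB[\rho^{-1}])$ in $\Ty^\D(F\Gamma)$, and its compatibility with the co-units is exactly preservation of $\ev$, hence of $\ap$.

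The main obstacle I expect is the bookkeeping of transports rather than any categorical subtlety: the isomorphism $i_{A, B}$ lives in the fibre $\Ty^\D(F\Gamma)$, whereas the universal property of dependent products is phrased in the slice $\D/F\Gamma$, so passing between them requires systematically inserting the comprehension isomorphisms $\rho_{\Gamma, A}$ and $\rho_{\Gamma\cext A, B}$ and the reindexing $FB[\rho_{\Gamma, A}^{-1}]$, all the while verifying that the evaluation edge $\ev_{A, B} = \pair{\p}{\ap(\q, \q[\p])}$ is carried to $\ev_{FA, FB[\rho^{-1}]}$ under these transports. Once the diagrams are correctly aligned, the categorical content (uniqueness of cofree objects up to unique isomorphism) is routine.
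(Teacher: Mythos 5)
The paper does not actually prove this lemma in-line; it defers entirely to Proposition~4.8 of \cite{ClairambaultD14}, describing the argument as ``intricate calculations''. Your skeleton is, at the level of strategy, the same as what is done there: compare the image of the chosen dependent product diagram with the chosen dependent product diagram of $FA$, $FB[\rho_{\Gamma,A}^{-1}]$ in $\D$ after aligning along the comprehension isomorphisms; read off $i_{A,B}$ from the uniqueness of dependent products up to unique isomorphism (using that $\Ty^\D(F\Gamma) \to \D/F\Gamma$, $A \mapsto \p_A$, is fully faithful, so an isomorphism of apexes over $F\Gamma$ descends to an isomorphism of types); and identify compatibility with the co-units $\ev = \tuple{\p, \ap(\q,\q[\p])}$ with preservation of $\ap$ modulo transports. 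What you have not done is the part the paper flags as the actual content, namely verifying that all the coherence conditions in Definition~\ref{def:pseudomor} make the transported diagrams literally match; your last paragraph correctly identifies this as the remaining work, so the proposal is a correct plan rather than a complete proof.

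One step is concretely misjustified. You argue that $F$ preserves pullbacks because it is a $1$-cell of $\Cwf_\dem^{2,\Iext,\Sigma,\Pi}$ and hence preserves democracy and the $\Iext$-structure. But the lemma is stated for an arbitrary pseudo cwf-morphism between cwfs with $\Pi$-structure: no $\Iext$-structure, no $\Sigma$-structure, and no democracy are assumed, so you cannot invoke the lemma relating $\Iext$-preservation to pullback-preservation. Fortunately you do not need general pullback preservation: the only square occurring in a chosen dependent product diagram is the substitution pullback of $\p_{\Pi(A,B)}$ along $\p_A$ (with apex $\Gamma\cext A\cext \Pi(A,B)[\p_A]$), and \emph{every} pseudo cwf-morphism preserves such substitution pullbacks, since its image is isomorphic, via the coherent isomorphisms $\rho$ and $\theta_{A,\gamma}$ of Definition~\ref{def:pseudomor}, to the corresponding substitution square in $\D$, which is a pullback by the universal property of comprehension. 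Replace your first preliminary by this observation and the rest of the argument goes through under the stated hypotheses.
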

\begin{proof}
Proved through intricate calculations -- see \cite{ClairambaultD14}, Proposition 4.8, for details.
\end{proof}

This completes the definition of the forgetful $2$-functor $\CF :
\Cwf_\dem^{2,\Iext,\Sigma,\Pi} \to \LCC^2$. In the other direction,
the construction is the same as for Theorem
\ref{th:biequivalence_fl}; the fact that the pseudofunctor $\LF$
yields pseudo cwf-morphisms preserving $\Pi$-structures follows from
Lemma \ref{lem:eq_pres_pi}. We conclude:

\begin{theorem}\label{th:biequivalence_lcc}
There is a biequivalence of $2$-categories $\LCC^2 \simeq
\Cwf_\dem^{2,\Iext,\Sigma,\Pi}$.
\end{theorem}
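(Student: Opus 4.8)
The plan is to obtain the biequivalence by extending the biequivalence $\FL^2 \simeq \Cwf_\dem^{2,\Iext,\Sigma}$ of Theorem \ref{th:biequivalence_fl} with the $\Pi$-structure, reusing the same mediating (pseudo)functors and adjoining only the data needed to account for dependent products. The forgetful direction $\CF : \Cwf_\dem^{2,\Iext,\Sigma,\Pi} \to \LCC^2$ has already been assembled above: on objects it is justified by the fact that the base category of a democratic cwf with $\Iext$-, $\Sigma$- and $\Pi$-structure is an lccc, via the equivalence $\Ty(\Gamma)\simeq\C/\Gamma$ and the chosen dependent product diagrams; on $1$-cells its well-definedness rests on Lemma \ref{lem:eq_pres_pi}, which guarantees that a $\Pi$-preserving pseudo cwf-morphism induces a locally cartesian closed functor; on $2$-cells it is the identity. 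It therefore remains to supply a pseudofunctor in the opposite direction and to exhibit the coherence data witnessing the biequivalence.

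For the other direction I would take the pseudofunctor $\LF : \FL \to \Cwf_\dem^{2,\Iext,\Sigma}$ obtained from Hofmann's splitting construction (as in Theorem \ref{th:biequivalence_fl}) and observe that when $\C$ is in fact an lccc, the resulting cwf carries a canonical $\Pi$-structure. Concretely, the right adjoints $\Pi_\gamma$ to the pullback functors --- equivalently, the dependent product diagrams of the lccc --- transport along the equivalence $\Ty(\Gamma)\simeq\C/\Gamma$ to furnish the type former $\Pi(A,B)$ together with its $\lambda$ and $\ap$ operations. The substitution-stability equation $\Pi(A,B)[\gamma] = \Pi(A[\gamma], B[\tuple{\gamma\circ\p,\q}])$ has to be checked to hold strictly; this is precisely where Hofmann's pre-chosen split substitution pullbacks are needed, since they force the reindexing of dependent products to be computed on the nose rather than merely up to the canonical Beck--Chevalley isomorphism. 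On $1$-cells, a locally cartesian closed functor $F$ preserves dependent product diagrams, so by Lemma \ref{lem:eq_pres_pi} the pseudo cwf-morphism $\LF(F)$ preserves the $\Pi$-structure up to isomorphism; hence $\LF$ lifts to a pseudofunctor $\LF : \LCC^2 \to \Cwf_\dem^{2,\Iext,\Sigma,\Pi}$.

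Finally, the biequivalence data --- the pseudonatural equivalences witnessing $\LF\CF\simeq\Id$ and $\CF\LF\simeq\Id$, together with the invertible modifications exhibiting them as mutually inverse --- are inherited essentially verbatim from the proof of Theorem \ref{th:biequivalence_fl}, since their components are built from the underlying finite-limit structure and do not mention $\Pi$. What must be added is only the verification that each component pseudo cwf-morphism additionally preserves the $\Pi$-structure; by Lemma \ref{lem:eq_pres_pi} this again reduces to preservation of the chosen dependent product diagrams, which holds because the components act as (essentially) identity functors on the base categories. I expect the main obstacle to be precisely the strictness of $\Pi$ under Hofmann's split reindexing: one must verify that the chosen dependent product diagram is stable under the pre-chosen substitution pullbacks, so that $\Pi$ commutes strictly with substitution and so that the coherences assembled in Theorem \ref{th:biequivalence_fl} remain coherent once the $\Pi$-former is adjoined. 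The remaining checks are routine extensions of the $\Iext$/$\Sigma$ case; for full details we refer to \cite{ClairambaultD14}.
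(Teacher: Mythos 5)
Your proposal is correct and follows essentially the same route as the paper: extend the forgetful $2$-functor using the equivalence $\Ty(\Gamma)\simeq\C/\Gamma$ and the chosen dependent product diagrams, lift Hofmann's splitting pseudofunctor $\LF$ to carry a $\Pi$-structure from the lccc's dependent products, invoke Lemma \ref{lem:eq_pres_pi} to handle preservation on $1$-cells, and inherit the pseudonatural equivalences and modifications from Theorem \ref{th:biequivalence_fl}. The paper's own proof is exactly this sketch, deferring the detailed verifications (including the strictness of $\Pi$ under split reindexing that you correctly flag as the delicate point) to \cite{ClairambaultD14}.
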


\subsubsection{Undecidability in the bifree locally cartesian closed category}

The construction of a free cwf in Theorem \ref{th:free_cwf} can easily be extended
when we add $\Iext, \None, \Sigma$ and $\Pi$-types. This is done by adding the per-rules corresponding to formation, introduction, elimination, and equality rules for $\Iext, \None, \Sigma$ and $\Pi$. We do not have room for explicitly displaying those rules, but they can be found in Castellan, Clairambault, and Dybjer \cite{castellan:lmcs}. We can thus construct a free cwf $\Tlccc$ with $\Iext, \None, \Sigma$ and $\Pi$-type structures. It is both a \emph{free} cwf (with the extra structure) on one
base type and with respect to strict morphisms, and a \emph{bifree}
cwf (with the extra structure) on one base type and with respect to
pseudo morphisms.

\begin{theorem}\label{th:biinitial_cwf}
The cwf $\Tlccc$ is \emph{initial} in the category $\Cwf^{\Iext,\None, \Sigma,
 \Pi, o}$, as well as \emph{bi-initial} in the $2$-category $\Cwf^{2,\Iext,\None,
\Sigma, \Pi, o}$. Moreover, it is democratic and bi-initial in $\Cwf_\dem^{2,\Iext,
\Sigma, \Pi, o}$.
\end{theorem}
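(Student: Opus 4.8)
The plan is to treat $\Tlccc$ as a term model and establish the three claims by exhibiting canonical (bi)initial maps, reusing the machinery already set up for the free cwf $\T$ of Theorem~\ref{th:free_cwf}.

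For \emph{initiality} in $\Cwf^{\Iext,\None,\Sigma,\Pi,o}$, fix an object $(\C,o_\C)$: a cwf with $\Iext$-, $\None$-, $\Sigma$- and $\Pi$-structures and a chosen base type $o_\C \in \Ty_\C(1)$. First I would define an interpretation $\intr{-}$ by recursion on the raw syntax ($\RawCtx$, $\RawSub$, $\RawTy$, $\RawTm$, extended with the new type formers), sending each combinator to the corresponding operation of $\C$: e.g. $\intr{1}=1$, $\intr{\Gamma\cext A}=\intr{\Gamma}\cext\intr{A}$, $\intr{o}=o_\C$, $\intr{A[\gamma]}=\intr{A}[\intr{\gamma}]$, and similarly for $\Id,\None,\Sigma,\Pi$ and their introduction and elimination terms. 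The crucial step is a lemma, proved by simultaneous induction over the mutually inductive families of pers, stating that $\intr{-}$ respects all four judgmental equalities: whenever $\Gamma=\Gamma'\vdash$ then $\intr{\Gamma}=\intr{\Gamma'}$, and likewise for type, substitution and term equality (the latter two landing in the correct hom- and $\Tm$-sets). This lets $\intr{-}$ descend to the quotient, yielding a map $\Tlccc\to\C$; that it is a strict structure-preserving cwf-morphism is immediate from the defining clauses, and uniqueness follows because any such morphism is forced on every combinator by preservation of cwf-structure and of the base type, hence agrees with $\intr{-}$ by induction on syntax.

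I expect the main obstacle to be exactly this well-definedness lemma, for the reason already flagged in the construction of the free cwf: because of the type-conversion rule, well-typedness of a term depends on derivable \emph{equalities} of types, so soundness of the interpretation cannot be proved by a naive induction on terms but must be carried out by a single simultaneous induction over all four pers at once, including the preservation and conversion rules. Getting the induction hypotheses to line up across the four judgment forms—so that, e.g., a term interpretation is known to inhabit $\Tm_\C(\intr{\Gamma},\intr{A})$ only modulo an already-established type equality—is the delicate bookkeeping.

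For \emph{bi-initiality} in $\Cwf^{2,\Iext,\None,\Sigma,\Pi,o}$, I would show that for every object $\C$ the category of structure-preserving pseudo cwf-morphisms $\Tlccc\to\C$, with the base-functor natural transformations as $2$-cells, is equivalent to the terminal category. Existence is free, since the strict morphism $\intr{-}$ above is in particular such a pseudo morphism. For uniqueness up to a unique invertible $2$-cell, given any pseudo morphism $F$ I would construct, by induction on syntax, a coherent family of comparison isomorphisms relating $F$ to $\intr{-}$, built from the structural isomorphisms $\theta,\rho,!_F$ of Definition~\ref{def:pseudomor} together with the comparison isomorphisms for $\Sigma$- and $\Pi$-types (Proposition~\ref{prop:preservsigma}, Lemma~\ref{lem:eq_pres_pi}). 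Verifying that this family assembles into an invertible natural transformation of the base functors, compatible with the type and term components, is the technical heart; this is carried out in detail in Castellan, Clairambault and Dybjer \cite{castellan:lmcs}, to which I would defer. Finally, $\Tlccc$ is democratic: it is contextual, like all the free cwfs constructed here, and carries $\None$- and $\Sigma$-structures, so democracy follows from the earlier remark that a contextual cwf with $\None$ and $\Sigma$ is democratic, with $\overline{1\cext A_1\cext\dots\cext A_n}$ the iterated $\Sigma$-type. For bi-initiality in $\Cwf_\dem^{2,\Iext,\Sigma,\Pi,o}$ I would note that the canonical morphism preserves democracy—its preservation of $\Sigma$-types up to isomorphism induces the required isomorphism $\d_\Gamma$ on the representing objects $\overline{\Gamma}$—and then repeat the argument above restricted to democracy-preserving pseudo morphisms, the extra democracy datum being pinned down uniquely by the coherence condition.
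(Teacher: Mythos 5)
Your proposal is correct and follows essentially the same route as the paper, whose own proof simply defers to \cite{castellan:lmcs}: a partial interpretation on raw combinator syntax, soundness by simultaneous induction on the four per-derivations, comparison isomorphisms built from the pseudo-morphism coherence data for bi-initiality, and democracy via contextuality plus the $\None$- and $\Sigma$-structures. You defer the same technical core to the same reference, so there is nothing substantive to add.
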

\begin{proof}
Details appear in \cite{castellan:lmcs}.
\end{proof}

An object $I$ is
\emph{bi-initial} in a $2$-category iff for any
$A$ there is an arrow $I \to A$ and for any two arrows
$f, g : I \to A$ there exists a unique 2-cell
$\theta : f \Rightarrow g$. It follows that $\theta$ is invertible,
and that bi-initial objects are equivalent.
In the statement above, $o$ denotes a base type. The objects of both 
$\Cwf^{\Iext, \None, \Sigma, \Pi, o}$ and $\Cwf^{2, \Iext, \None, \Sigma, \Pi, o}_\dem$ have a
distinguished type $o \in \Ty(1)$, preserved on the nose for
$\Cwf^{\Iext, \None, \Sigma, \Pi, o}$ and up to isomorphism for
$\Cwf^{2, \Iext, \None, \Sigma, \Pi, o}_\dem$. 

The presence of the base type $o$ does not affect the biequivalence
in Theorem \ref{th:biequivalence_lcc}, which extends to a
biequivalence $\Cwf^{2, \Sigma, \Pi, o}_\dem \simeq \LCC^{2,o}$ where
the latter has a distinguished object $o \in \C_0$ preserved by
functors up to isomorphism. As bi-initial objects are transported to bi-initial
objects via a biequivalence,
it follows from Theorem \ref{th:biinitial_cwf} that
the base category of the cwf $\Tlccc$ is bi-initial in $\LCC^{2, o}$:

\begin{theorem}
The base category of $\Tlccc$ is the bifree lccc on one object.
\end{theorem}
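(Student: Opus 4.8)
The plan is to obtain this statement as a corollary that assembles Theorem~\ref{th:biinitial_cwf} with the biequivalence of Theorem~\ref{th:biequivalence_lcc}, by transporting bi-initiality across the pointed version of that biequivalence. By \emph{bifree lccc on one object} we mean a bi-initial object of $\LCC^{2,o}$, the $2$-category of small lcccs with a distinguished object, locally cartesian closed functors preserving it up to isomorphism as $1$-cells, and natural transformations as $2$-cells. So it suffices to show that the base category of $\Tlccc$, equipped with a suitable distinguished object, is bi-initial there.

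First I would make precise how the biequivalence $\CF : \Cwf_\dem^{2,\Iext,\Sigma,\Pi} \simeq \LCC^2$ lifts to the pointed setting. On the cwf side the distinguished datum is a base type $o \in \Ty(1)$; since every object here is democratic, the isomorphism $\Ty(1) \cong \C_0$ sending $A$ to $1 \cext A$ turns $o$ into the distinguished object $1 \cext o \in \C_0$ on the lccc side. Conversely, Hofmann's construction underlying $\LF$ represents types in $\Ty(1)$ essentially as objects of $\C/1 \simeq \C$, so a distinguished object of an lccc yields a distinguished type in $\Ty(1)$. One checks that $\CF$ and $\LF$ carry preservation-up-to-iso of the base type to preservation-up-to-iso of the distinguished object and back, and that the unit and counit pseudonatural transformations, together with the invertible modifications witnessing the biequivalence, have components respecting the distinguished data. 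This yields the biequivalence $\CF : \Cwf_\dem^{2,\Iext,\Sigma,\Pi,o} \simeq \LCC^{2,o}$ asserted in the text preceding the statement.

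Next I would invoke the general $2$-categorical fact that a biequivalence sends bi-initial objects to bi-initial objects. Concretely, if $I$ is bi-initial in $\mathcal{K}$ and $G : \mathcal{K} \to \mathcal{L}$ is one half of a biequivalence with pseudo-inverse $H$, then for any $B \in \mathcal{L}$ an arrow $GI \to B$ is produced from an arrow $I \to HB$ composed with the counit equivalence $GHB \simeq B$, and any two arrows $GI \to B$ are related by an essentially unique, necessarily invertible $2$-cell; this works because bi-initiality only constrains objects up to the relevant equivalence, which $G$ preserves. Applying this with $\mathcal{K} = \Cwf_\dem^{2,\Iext,\Sigma,\Pi,o}$, $\mathcal{L} = \LCC^{2,o}$, $G = \CF$, and Theorem~\ref{th:biinitial_cwf} (giving that $\Tlccc$ is bi-initial in $\mathcal{K}$), shows $\CF(\Tlccc)$ is bi-initial in $\LCC^{2,o}$.

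Finally, since $\CF$ sends a democratic cwf to its base category and forgets all other components, $\CF(\Tlccc)$ is exactly the base category of $\Tlccc$, with distinguished object $1 \cext o$; hence this base category is the bifree lccc on one object. The one genuinely nontrivial point, and the step I expect to demand the most care, is the lift of the biequivalence to the pointed $2$-categories: one must confirm not merely that $\CF$ and $\LF$ preserve the distinguished type/object, but that all coherence data of the biequivalence (the pseudonatural transformations $\eta$, $\epsilon$ and the invertible modifications witnessing $\CF\LF \simeq \mathrm{Id}$ and $\LF\CF \simeq \mathrm{Id}$) restrict to the pointed categories. Everything else is a direct application of Theorems~\ref{th:biinitial_cwf} and~\ref{th:biequivalence_lcc} together with the transport of bi-initiality along biequivalences.
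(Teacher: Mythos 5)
Your proposal follows exactly the paper's own argument: extend the biequivalence of Theorem~\ref{th:biequivalence_lcc} to the pointed $2$-categories with a distinguished base type/object, note that biequivalences transport bi-initial objects to bi-initial objects, and apply Theorem~\ref{th:biinitial_cwf}. The extra detail you supply (identifying the distinguished object as $1\cext o$ via democracy, and flagging that the unit, counit and modifications must restrict to the pointed setting) is a faithful elaboration of what the paper leaves implicit, not a different route.
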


Having constructed a bifree lccc, it is natural to consider its
\emph{word problem}: given two (syntactic) substitutions $\gamma,
\gamma' : \Delta \to \Gamma$ in $\Tlccc$, is it decidable whether they
are equal, or equivalently whether they have the same interpretation
in any locally cartesian closed category? As $\gamma, \gamma'$ are
syntactic constructs in extensional type theory, one expects
undecidability. However, prior undecidability proofs for extensional type
theory rely on structure not available in our case. The folklore argument uses a universe and
Hofmann's undecidability proof \cite{hofmann:thesis} uses a type of natural
numbers. 

In \cite{castellan:lmcs}, we generalize the folklore result to hold
with only one base type, \emph{i.e.} in $\Tlccc$, but without natural numbers or a universe. This relies on an
encoding of combinatory logic in Martin-L\"of type theory
with $\I$-types, $\Pi$-types, and a base type $o$; as a context
$\Gamma_{\mathsf{CL}}$ containing:
\begin{eqnarray*}
k &:& o,\\
s &:& o,\\
\cdot  &:& o \arrow o \arrow o,\\
ax_{k} &:& \Pi x y : o.\ \I(o,\,k\cdot x\cdot y,\, x),\\
ax_{s} &:& \Pi x y z : o.\ \I(o,\, s\cdot x\cdot y\cdot z,\, x\cdot
z\cdot (y\cdot z))
\end{eqnarray*}
where the left-associative binary infix symbol ``$\cdot$'' stands for
application. 
While the above uses the syntax of type theory, it is easy to set up
the same context just using the cwf combinators available in $\Tlccc$,
hence reducing the decision of equality between terms $M, N$ in
combinatory logic to deciding 
\[
\Gamma_{\mathsf{CL}} \vdash M \stackrel{?}{=} N : o
\]
\emph{e.g.} an equality of two terms $M,N \in \Tm(\Gamma_{\mathsf{CL}},
o)$. We conclude:
\begin{theorem}
Equality is undecidable in the bifree locally cartesian closed
category on one base type.
\end{theorem}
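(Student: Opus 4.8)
The plan is to reduce the (undecidable) word problem of combinatory logic to equality of morphisms in the bifree lccc. Recall that convertibility in combinatory logic, i.e.\ the least congruence $=_{\mathsf{CL}}$ on closed $\{s,k,\cdot\}$-terms generated by $k\cdot x\cdot y =_{\mathsf{CL}} x$ and $s\cdot x\cdot y\cdot z =_{\mathsf{CL}} x\cdot z\cdot(y\cdot z)$, is undecidable, since combinatory logic is Turing complete. Following the discussion preceding the statement, I interpret two given combinatory terms $M,N$ as terms $\intr M, \intr N \in \Tm(\Gamma_{\mathsf{CL}}, o)$ of $\Tlccc$, by sending the combinators $s,k$ and the application symbol to the corresponding variables of $\Gamma_{\mathsf{CL}}$; this translation is manifestly computable. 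Since the forgetful $2$-functor $\CF$ keeps the base category unchanged, the bifree lccc \emph{is} the base category of $\Tlccc$, and a term $a\in\Tm(\Gamma_{\mathsf{CL}},o)$ corresponds to the section $\tuple{\id,a}$ of the display map $\p_{\Gamma_{\mathsf{CL}},o}$; hence equality of the associated morphisms in the bifree lccc coincides with judgmental equality $\Gamma_{\mathsf{CL}}\vdash \intr M = \intr N : o$ in $\Tlccc$. It therefore suffices to prove the equivalence
\[
M =_{\mathsf{CL}} N \quad\Longleftrightarrow\quad \Gamma_{\mathsf{CL}}\vdash \intr M = \intr N : o\,.
\]

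For the forward (soundness) direction I would argue by induction on the derivation of $M =_{\mathsf{CL}} N$. The only interesting cases are the two axiom schemes: for an instance $k\cdot x\cdot y =_{\mathsf{CL}} x$, instantiating the context variable $ax_{k}$ at the relevant closed terms yields an inhabitant of $\I(o,\,k\cdot x\cdot y,\, x)$, and since the identity type is \emph{extensional}, its inhabitation forces the judgmental equality $\Gamma_{\mathsf{CL}}\vdash k\cdot x\cdot y = x : o$ by the reflection rule; the $s$-case uses $ax_{s}$ identically. Reflexivity, symmetry, transitivity, and the congruence of $\cdot$ are matched by the corresponding cwf equality and substitution rules, so every $\mathsf{CL}$-conversion lifts to a judgmental equality.

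The reverse (conservativity) direction is the crux, and is where I expect the real work to lie. I would prove it by exhibiting a model of extensional type theory --- concretely the standard set-theoretic cwf with $\Iext,\None,\Sigma,\Pi$-structures --- in which the base type $o$ is interpreted as a combinatory algebra $\mathcal A$ that \emph{separates} $\mathsf{CL}$-inequivalent terms, for instance the term model of combinatory logic whose carrier is the set of closed combinatory terms modulo $=_{\mathsf{CL}}$. In such a model $\cdot$ is interpreted by application in $\mathcal A$ and $s,k$ by the combinators, so the two equations hold on the nose; consequently each identity type occurring in $ax_{k},ax_{s}$ is inhabited (it is a singleton), and the axiom variables of $\Gamma_{\mathsf{CL}}$ receive interpretations, yielding an interpretation of the whole context. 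By initiality of $\Tlccc$ (Theorem~\ref{th:biinitial_cwf}) there is a structure-preserving cwf-morphism into this model, so $\Gamma_{\mathsf{CL}}\vdash \intr M = \intr N : o$ forces $M$ and $N$ to denote the same element of $\mathcal A$, i.e.\ $M =_{\mathsf{CL}} N$. The main obstacle is thus purely semantic: verifying that a combinatory algebra satisfying exactly the two equations can be organised into a \emph{sound} cwf-model of extensional type theory with the required type formers, so that the axiom context is inhabited yet the model remains faithful enough to distinguish distinct $\mathsf{CL}$-classes.

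Combining the two directions gives the displayed equivalence. Since the translation $M\mapsto\intr M$ is computable and judgmental equality in $\Tlccc$ is exactly equality of the corresponding sections in the bifree lccc, any decision procedure for equality of morphisms in the bifree lccc would decide $=_{\mathsf{CL}}$, contradicting its undecidability. Hence equality in the bifree locally cartesian closed category on one base type is undecidable.
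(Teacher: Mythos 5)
Your proposal is correct and follows essentially the same route as the paper: encoding combinatory logic via the context $\Gamma_{\mathsf{CL}}$, using reflection of the extensional identity type for soundness, and a set-theoretic model interpreting $o$ as the term model of combinatory logic for conservativity. The paper only sketches this and defers the details (in particular the conservativity/faithfulness argument you rightly identify as the crux) to \cite{castellan:lmcs}, where the proof proceeds exactly as you outline.
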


\begin{acknowledgement}
Pierre Clairambault was supported by the LABEX MILYON (ANR-10-LABX-0070) of Universit\'e de Lyon,
within the program ``Investissements d'Avenir'' (ANR-11-IDEX-0007) operated by the
French National Research Agency (ANR).

Peter Dybjer was supported by the Centre for Advanced Study at
the Norwegian Academy of Science and Letters in Oslo, where he was a fellow in the
Homotopy Type Theory and Univalent Foundations project while working on this paper.

Finally, we would like to thank the anonymous reviewers for several useful suggestions.
\end{acknowledgement}

\bibliography{}
\end{document}